\journal{} 
\def\ps@pprintTitle{%
 \let\@oddhead\@empty
 \let\@evenhead\@empty
 \def\@oddfoot{}%
 \let\@evenfoot\@oddfoot}
\newcommand*{\IP}{\mathbb{P}}
\newcommand*{\IR}{\mathbb{R}}
\newcommand*{\IQ}{\mathbb{Q}}
\newcommand*{\IE}{\mathbb{E}}
\newtheorem{theorem}{Theorem}
\newtheorem{remark}[theorem]{Remark}
\newtheorem{lemma}[theorem]{Lemma}
\newtheorem{proof}{Proof}
\def\csname ver@subfig.sty\endcsname{}
\begin{document}

\begin{frontmatter}

\title{Optimal life-cycle consumption and investment decisions under age-dependent risk preferences\footnote{\linebreak \textit{\normalsize Preprint}}}

%
%
\author[addressALRZ]{Andreas Lichtenstern}
\ead{andreas.lichtenstern@tum.de}
\address[addressALRZ]{Department of Mathematics, Technical University of Munich, Munich, Germany}

\author[addressPS]{Pavel V. Shevchenko\corref{mycorrespondingauthor}}
\ead{pavel.shevchenko@mq.edu.au}
\cortext[mycorrespondingauthor]{Corresponding author}
\address[addressPS]{Department of Actuarial Studies and Business Analytics, Macquarie University, Sydney, Australia}

\author[addressALRZ]{Rudi Zagst}
\ead{zagst@tum.de}

\begin{abstract}
In this article we solve the problem of maximizing the expected utility of future consumption and terminal wealth to determine the optimal pension or life-cycle fund strategy for a cohort of pension fund investors. The setup is strongly related to a DC pension plan where additionally (individual) consumption is taken into account. The consumption rate is subject to a time-varying minimum level and terminal wealth is subject to a terminal floor. Moreover, the preference between consumption and terminal wealth as well as the intertemporal coefficient of risk aversion are time-varying and therefore depend on the age of the considered pension cohort. The optimal consumption and investment policies are calculated in the case of a Black-Scholes financial market framework and hyperbolic absolute risk aversion (HARA) utility functions. We generalize Ye (2008) (2008 American Control Conference, 356-362) by adding an age-dependent coefficient of risk aversion and extend Steffensen (2011) (Journal of Economic Dynamics and Control, 35(5), 659-667), Hentschel (2016) (Doctoral dissertation, Ulm University) and Aase (2017) (Stochastics, 89(1), 115-141) by considering consumption in combination with terminal wealth and allowing for consumption and terminal wealth floors via an application of HARA utility functions. A case study on fitting several models to realistic, time-dependent life-cycle consumption and relative investment profiles shows that only our extended model with time-varying preference parameters provides sufficient flexibility for an adequate fit. This is of particular interest to life-cycle products for (private) pension investments or pension insurance in general.
\end{abstract}

\begin{keyword}
Pension investments \sep optimal life-cycle consumption and investment \sep age-dependent risk aversion \sep HARA utility function \sep martingale method

\JEL G11 \sep G22 \sep C61 \sep D14
\end{keyword}

\end{frontmatter}

\setcounter{footnote}{0}

\section{Introduction}
\label{sec:Introduction}
A suitable management of pensions needs to consider earnings/contributions and investment, but should also account for the required consumption during the accumulation and/or decumulation phase. For this sake, in this paper we consider the finite horizon portfolio problem of maximizing expected utility of future consumption and terminal wealth to determine the optimal pension or life-cycle fund strategy for a cohort of pension fund investors. The setup is strongly related to a DC pension plan where additionally (individual) consumption is taken into account. Within this framework, \cite{LaknerNygren2006} describe the trade-off the investor faces as a compromise between `living well' (consumption) and `becoming rich' (terminal wealth). Classical consumption-investment problems consider constant risk aversion in the intertemporal utility functions for consumption besides a personal discount rate or impatience factor, see \cite{Merton1969} or \cite{Merton1971}. Within classical models (where constant relative risk aversion (CRRA) utilities are applied), optimal portfolio policies turn out to be constant over the life-cycle, meaning time and wealth independent. According to \cite{Aase2017} this is `against empirical evidence, and against the typical recommendations of portfolio managers'. Furthermore, \cite{Aase2017} and \cite{YangFang2014} argue that the tendency of stocks to outperform bonds over long horizons in the past is one of the reasons why people at a younger age are advised to allocate a higher proportion of wealth to equities compared to older people. Evidence for changing risk aversion over the life-cycle is reported in the literature, although there is no broad agreement on its behavior: \cite{Morin1983}, \cite{Bakshi1994}, \cite{Palsson1996}, \cite{Bellante2004}, \cite{AlAjmi2008}, \cite{Ho2009}, \cite{Yao2011} and \cite{AlbertDuffy2012} observe increasing risk aversion by age, \cite{Bellante1986} and \cite{Wang1997} find risk aversion decreasing by age and \cite{Riley1992} detect different behavior between the pre- and post-retirement phase. Age-depending risk preferences can economically be motivated by the observed behavior of people to stepwise reduce their investment risk the closer to retirement. This behavior is reflected in many life-cycle fund allocation policies, see for instance \cite{Milliman2010} or \cite{TIAACREF_LifecycleFunds2015}. An important economic reasoning behind is that the older the person, the less time to retirement entrance is left and therefore the less likely it is for her to overcome a potential market crash, strongly connected to the fear of having an insufficient wealth left for retirement. Moreover, it is reasonable that the closer to retirement time, the more satisfaction is connected with savings, i.e. with a lower consumption surplus, which yields a higher initial wealth for the decumulation phase. Based on these economic reasons, it is meaningful to consider age-varying preference parameters (dependent on the age of the pension cohort or the individual investor) in form of a coefficient of risk aversion, later called $b(t)$, and a weighting factor, later referred to as $a(t)$, that governs the relative importance of consumption at different points in time. The latter has no impact on risk aversion but can control for the varying preference between consumption and terminal wealth over time. In an analysis of the optimal controls in Section \ref{sec:NCS} we show that our proposed model can explain and describe people's observed behavior of reducing relative risky investments by time while simultaneously targeting a certain function for the consumption rate on average. In opposite, we find that the previously described existing models are not able to capture this behavior. Therefore, particulary Section \ref{sec:NCS} shows that it is economically important to have separate functions or parameters for risk aversion and preference of consumption over terminal wealth, $a(t)$ and $b(t)$.

In addition, consumption and wealth floors are introduced which have an economic meaning as minimum required levels of consumption and wealth. This motivates the development of a dynamic life-cycle model with time-varying risk preferences such as coefficient of risk aversion and consumption and wealth floors which can capture age-depending consumption and investment behavior of investors.

Related literature to this topic consider stochastic income and unemployment risks, see \cite{BodieMertonSamuelson1992}, \cite{Koo1998}, \cite{Munk2000}, \cite{Viceira2001}, \cite{Huang2008}, \cite{Jang2013}, \cite{Bensoussan2016}, \cite{Wang2016} or \cite{Chen2018}. Setups where the investor faces uncertain lifetime, mortality and optimal life insurance are considered in \cite{Yaari1965}, \cite{PliskaYe2007}, \cite{MenoncinRegis2017}, \cite{ZouCadenillas2014}, \cite{KronborgSteffensen2015}, \cite{ShenWei2016}, \cite{Duarte2011}, \cite{Huang2012}, \cite{KronborgSteffensen2015}, \cite{ShenWei2016} and \cite{Ye2008}; optimal consumption and investment under insurer default risk is studied by \cite{JangKooPark2019}. \linebreak \cite{KraftMunk2011}, \cite{KraftMunkWagner2018}, \cite{AndreassonShevchenkoNovikov2017}, \cite{CuocoLiu2000} and \cite{DamgaardFuglsbjergMunk2003} analyze optimal housing as a durable good. Constraints in the optimization problem are considered in \cite{Cuoco1997}, \cite{Elie2008} and \cite{Grandits2015}. Moreover, \cite{Akian1996}, \cite{Altarovici2017} and \cite{Dai2009} analyze the portfolio problem under transaction costs. The application of HARA utility functions in a life-cycle context can be found in \cite{Huang2008}, \cite{Ye2008}, \cite{ChangRong2014}, \cite{ChangChang2017} and \cite{Wang2017}. Moreover, \cite{BackLiuTeguia2019} study a life-cycle consumption problem for HARA utility with time-independent, increasing risk aversion and examine the relation between age and portfolio risk by using Monte Carlo analysis. \cite{TangPurcalZhang2018} study an optimal consumption-investment problem under CRRA utility function with age-independent risk aversion, but examine the impact of hyperbolic discounting, where the rate of time preference is a function of time. We generalize this approach by considering general $a(t)$ or ${e^{- \beta t} a(t)}$, respectively, and by introducing age-varying risk aversion.

In this paper we apply HARA utility functions on both the consumption and terminal wealth and consider time-varying preferences: an age-depending preference between consumption and terminal wealth and an age-depending coefficient of risk aversion in the intertemporal consumption utility. For simplicity, income is treated as a deterministic process. Furthermore, we do not model mortality and consider a fixed time horizon $T$ that corresponds to a retirement age, thus we assume the agent to survive up to the age of retirement. A positive, fixed floor in the terminal utility ensures a minimum liquid asset wealth level at the age of retirement, which is meaningful as the retiree needs wealth to live from and could possibly afford housing from this wealth. In addition, a positive, time-varying floor in the consumption utility guarantees a minimum (time-dependent) consumption rate. This is essential during the accumulation phase as for instance living expenses, rental payments when home is rented or mortgage payments and maintenance costs when home is bought and financed by debt or only maintenance costs when the agent already fully owns a house (e.g. inherited) need to be covered. Therefore, the economic demand for both a positive minimum level of consumption and terminal wealth can be motivated.

Most related to our work are \cite{Ye2008}, \cite{Steffensen2011}, \cite{Hentschel2016} and \cite{Aase2017}. The difference of our approach to these papers is as follows. \cite{Ye2008} considers income, mortality and HARA utilities for both consumption and terminal wealth under a constant coefficient of risk aversion, i.e. constant $b(t)$, but where the age-dependent preference between consumption and wealth $a(t)$ is incorporated. We generalize the results by introducing a time-dependent coefficient of risk aversion $b(t)$. \cite{Steffensen2011} provides a first insight into the optimal policy when the utility parameters of the intertemporal utility, which is of a CRRA type, are time-varying; thus $a(t)$ and $b(t)$ are captured. But the model disregards terminal wealth, consumption floor and labor income. In a similar fashion, \cite{Hentschel2016} studies the consumption problem for CRRA utility with habit formation and considers $a(t)$ and $b(t)$. Similar to \cite{Steffensen2011}, neither terminal wealth nor consumption floor nor income are included in their model. Finally, \cite{Aase2017} uses the martingale method (that allows to reformulate the optimal stochastic control problem to a simpler maximization problem with constraint) to determine optimal consumption and investment under mortality and a CRRA utility with age-depending risk aversion $b(t)$. But the model does not consider terminal wealth, consumption floor, income or time-varying preference $a(t)$.

The main contributions and innovations of this paper can be summarized as follows: we consider all the `ingredients' of the models in the above mentioned papers ($a(t)$, $b(t)$, terminal wealth, floors for consumption and terminal wealth via HARA utilities, income process) that leads to a novel, very flexible and more realistic dynamic life-cycle model framework. We extend or generalize \cite{Ye2008} by adding an age-dependent coefficient of risk aversion $b(t)$ and \cite{Steffensen2011}, \cite{Hentschel2016} and \cite{Aase2017} by considering terminal wealth and allowing for consumption and terminal wealth floors via an application of HARA utility functions. The corresponding consumption-investment problem is solved analytically and interpretation is provided. In a case study, where we fit realistic predetermined target policies for consumption and relative allocation to several models, we realize that only our proposed and most general model is sufficiently flexible to describe human preferences on consumption and investment in a suitable fashion. This implies that modeling the agent's preferences in an age-depending fashion is inevitable.

To solve the respective portfolio problem, we follow a separation approach similar to the ones developed by \cite{KaratzasShreve1998} and \cite{LaknerNygren2006}. It divides the original consumption-terminal wealth optimization problem into two sub-problems, the corresponding consumption problem and the terminal wealth problem. These separate problems are to be solved individually. Due to time-dependent preference parameters we apply the martingale method in line with \cite{Aase2017} to solve the individual problems in closed form. Afterwards, we show how the individual solutions have to be glued together in order to obtain the general solution to the original consumption-terminal wealth problem.

The remainder of this paper is organized as follows. Section \ref{sec:FinancialMarketModel} introduces the financial market and the portfolio problem of interest, Section \ref{sec:SeparationTechnique} shows the separation approach and the solution to the problem. A fit of the analytic strategy to suitable consumption and investment curves is conducted in Section \ref{sec:NCS}, followed by an investigation of the optimal controls and corresponding wealth process. Section \ref{sec:Conclusion} concludes. \ref{app:Proofs} summarizes all proofs of the claimed statements: the proofs for Section \ref{sec:ConsumptionProblem:y} on the consumption problem can be found in \ref{app:ProofsConsumptionProblem}, the proofs related to Section \ref{sec:TerminalWealthProblem} on the terminal wealth problem in \ref{app:ProofsTerminalWealthProblem}, and for the proofs associated with Section \ref{sec:OptimalMerging} on merging both individual solutions, see \ref{app:ProofsOptimalMerging}.

\section{The financial market model and consumption-investment problem}
\label{sec:FinancialMarketModel}
We consider a frictionless financial market $M$ which consists of $N+1$ continuously traded assets, one risk-free asset and $N$ risky assets. Let $[0,T]$ represent the fixed and finite investment horizon. Uncertainty in the continuous-time financial market is modeled by a complete, filtered probability space $(\Omega, \CMcal{F}, \left(\CMcal{F}_{t}\right)_{t \in [0,T]}, \IP)$, where $\Omega$ is the sample space, $\IP$ the real-world probability measure, $\CMcal{F}_{t}$ is the natural filtration generated by $W(s)$, ${0 \le s \le t}$, augmented by all the null sets, and ${W = \left(W(t)\right)_{t \in [0,T]}}$, $W(t) = (W_{1}(t), \hdots, W_{N}(t))'$, $N \in \mathbb{N}$, is a standard $N$-dimensional Brownian motion. The price of the risk-free asset at time $t$ is denoted by $P_{0}(t)$ and is subject to the equation
\begin{align}
dP_{0}(t) = r P_{0}(t)dt,\ P_{0}(0) = 1,
\end{align}

with constant risk-free interest rate $r > 0$. The remaining $N$ assets in the market are risky assets with price $P_{i}(t)$, $i = 1, \hdots, N$, at time $t$ subject to the stochastic differential equations

\begin{align}
dP_{i}(t) = {} & P_{i}(t) \left(\mu_{i} dt + \sigma_{i} dW(t)\right) = P_{i}(t)  \left(\mu_{i} dt + \sum_{j = 1}^{N} \sigma_{ij} dW_{j}(t)\right),\ P_{i}(0) = p_{i} > 0,
\end{align}
with constant drift ${\mu = \left(\mu_{1},\hdots,\mu_{N}\right)' \in \mathbb{R}_{+}^{N}}$, ${\mu - r \mathbf{1} > \mathbf{0}}$, and constant volatility vector $\sigma_{i} = (\sigma_{i1}, \hdots, \sigma_{iN}) \in \mathbb{R}_{+}^{1 \times N}$. The volatility matrix is defined by ${\sigma = \left(\sigma_{ij}\right)_{i,j = 1,\hdots,N}}$, the covariance matrix of the log-returns is ${\Sigma = \sigma \sigma'}$ which is assumed to be strongly positive definite, i.e. there exists $K > 0$ such that $\IP$-a.s. it holds ${x' \Sigma x \ge K x' x}$, ${\forall x \in \IR^{N}}$. Furthermore, let ${\gamma = \sigma^{-1} (\mu - r \mathbf{1})}$ denote the market price of risk. In this case of Black-Scholes market dynamics, according to \cite{KaratzasShreve1998}, there exists a unique risk-neutral probability measure ${\IQ \sim \IP}$ defined by ${\frac{d\IQ}{d\IP} | _{\CMcal{F}_{t}} := e^{-\frac{1}{2} \|\gamma\|^{2} t - \gamma' W(t)}}$ and the market is complete (that allows to value payment streams under the measure $\IQ$ as expected discounted values, meaning that the cost of a portfolio replicating the contract is given by its expected discounted value under $\IQ$). The corresponding pricing kernel or state price deflator, denoted by $\tilde{Z}(t)$, is defined as
\begin{align} \label{eq:PricingKernel}
\tilde{Z}(t) := e^{- \left(r + \frac{1}{2}\|\gamma\|^{2}\right) t - \gamma' W(t)}
\end{align}
and can be used for the valuation of payment streams under the real-world probability measure. Its dynamics are subject to the stochastic differential equation
\begin{align*}
d \tilde{Z}(t) = - \tilde{Z}(t) \left(r dt + \gamma' dW(t)\right),\ \tilde{Z}(0) = 1.
\end{align*}
We consider $\CMcal{F}_{t}$-progressively measurable trading strategies $\varphi = (\varphi_{0}, \hat{\varphi})'$, ${\hat{\varphi} = (\varphi_{1}, \hdots, \varphi_{N})'}$, such that $\IP$-a.s. it holds $\int_{0}^{T} |\varphi_{0}(t)| dt < \infty$ and $\int_{0}^{T} \varphi_{i}(t)^{2} dt < \infty$. $\varphi_{i}(t)$ represents the number of individual shares of asset $i$ held by the investor at time $t$. The corresponding relative portfolio process is denoted by ${\pi = (\pi_{0}, \hat{\pi}')'}$ with risky relative investment ${\hat{\pi} = (\pi_{1}, \hdots, \pi_{N})'}$ and risk-free relative investment ${\pi_{0}(t) = 1-\hat{\pi}(t)'\mathbf{1}}$, where $\pi_{i}(t)$ denotes the fraction of wealth allocated to asset $i$ at time $t$. It is to satisfy $\int_{0}^{T} \pi_{i}(t)^{2} dt < \infty$, $\IP$-a.s.. Moreover, let $(c(t))_{t \in [0,T]}$ denote a non-negative, progressively measurable, real-valued stochastic consumption rate process with $\int_{0}^{T} c(t) dt < \infty$, $\IP$-a.s., and $(y(t))_{t \in [0,T]}$ a non-negative, deterministic income-rate process with $\int_{0}^{T} y(t) dt < \infty$. Those technical conditions are assumed to get a solution for the subsequently formulated stochastic problem. The dynamics of the investor's wealth process $V = \left(V(t)\right)_{t \in [0,T]}$ under the strategy $(\pi,c)$ to initial wealth $V(0) = v_{0} > 0$, including liquid assets, consumption and income, is then given by
\begin{align} \label{eq:SDE:V:y}
dV(t) = {} & V(t)  \left[\left(r + \hat{\pi}(t)' \left(\mu - r \mathbf{1}\right)\right) dt + \hat{\pi}(t)'\sigma dW(t)\right] - c(t) dt + y(t) dt.
\end{align}
The relative investment in the risk-free asset is ${\pi_{0}(t) = 1 - \hat{\pi}(t)' \mathbf{1}}$. We consider the objective of maximizing expected utility of future terminal wealth and consumption, starting at time $0$ and ending at $T$. Hence the objective function to be maximized is
\begin{align} \label{eq:ObjectiveFunction}
J(\pi,c;v_{0}) = \IE\left[\int_{0}^{T} U_{1}(t,c(t)) dt + U_{2}(V(T))\right],
\end{align}
where $v_{0} > 0$ denotes the initial endowment of the investor. All expectations in this paper are with respect to the real-world measure $\IP$. The general portfolio optimization problem with initial wealth ${V(0) = v_{0} > 0}$ to be solved is then given by
\begin{align} \label{eq:OptimizationProblem}
\mathcal{V}(v_{0}) = \sup_{(\pi,c) \in \Lambda} J(\pi,c;v_{0})
\end{align}
subject to \eqref{eq:SDE:V:y}. $\mathcal{V}(v_{0})$ is the value function of the problem. $\Lambda$ denotes the set of admissible strategies $(\pi,c)$ such that ${V(t) + \int_{t}^{T} e^{- r (s-t)} y(s) ds \ge 0}$, $\IP$-a.s., $\forall t \in [0,T]$, and which admit a unique solution to \eqref{eq:SDE:V:y} while satisfying the integrability condition ${\IE\left[\int_{0}^{T} |U_{1}(t,c(t))| dt\right] < \infty}$. The so-called budget constraint reads
\begin{align} \label{eq:BudgetConstraint:y}
\IE\left[\int_{0}^{T} \tilde{Z}(t) c(t) dt + \tilde{Z}(T) V(T)\right] \le v_{0} + \IE\left[\int_{0}^{T} \tilde{Z}(t) y(t) dt\right] = v_{0} + \int_{0}^{T} e^{-r t} y(t) dt.
\end{align}
It describes the requirement that today's value of future consumption and terminal wealth, less income, must not exceed the initial endowment. It can be shown that for the optimal $(\hat{\pi}^{\star}, c^{\star})$ to Problem \eqref{eq:OptimizationProblem}, Equation \eqref{eq:BudgetConstraint:y} holds with equality. We consider a preference utility model given by the utility functions
\begin{align}
\begin{split} \label{eq:utilitymodel:new}
U_{1}(t,c) = {} & \left(e^{- \beta  t} a(t)\right)  \frac{1-b(t)}{b(t)}  \left(\frac{1}{1-b(t)}  \left(c - \bar{c}(t)\right)\right)^{b(t)}, \\
U_{2}(v) = {} & e^{- \beta  T} \hat{a}  \frac{1-\hat{b}}{\hat{b}}    \left(\frac{1}{1-\hat{b}}  (v-F)\right)^{\hat{b}},
\end{split}
\end{align}
for $\beta \ge 0$, $b : [0,T] \to (- \infty, 1)\backslash \{0\}$ continuous, $\hat{b} < 1$, $\hat{b} \neq 0$, $a(t) > 0$, $\hat{a} > 0$, $c(t) > \bar{c}(t)$, $\bar{c}(t) \ge 0$ deterministic, and $v > F$ with $F \ge 0$. $U_{2}$ is a continuously differentiable and strictly concave terminal utility function, $U_{1}$ denotes a continuous (intertemporal consumption) utility function which is continuously differentiable and strictly concave in the second argument. This utility model accounts for several desired aspects: minimum liquid asset wealth level $F \ge 0$ at the age of retirement $T$, minimum consumption rate $\bar{c}(t) \ge 0$ and time-varying preference of consumption over terminal wealth in terms of $a(t)$. Moreover, the coefficient of risk aversion $b(t)$ in the consumption utility is now a continuous function in time.

\begin{remark}
Notice that the associated \textit{Arrow-Pratt measure} ${\mathcal{A}(v) := - \frac{U''(v)}{U'(v)} = - \frac{\partial}{\partial v} \ln U'(v)}$ of absolute risk aversion, developed by \cite{Pratt1964} and \cite{Arrow1970}, admits the following hyperbolic representation
\begin{align*}
\mathcal{A}_{1}(t,c) = \frac{1 - b(t)}{c - \bar{c}(t)},\ \mathcal{A}_{2}(v) = \frac{1 - \hat{b}}{v - F}.
\end{align*}
For this reason, we use the notation of an increasing $b(t)$ as a synonym for a decreasing coefficient of risk aversion and vice versa. Further note that $a(t)$ does not appear in $\mathcal{A}_{1}(t,c)$. Therefore we have two input functions $a(t)$ and $b(t)$ where $a(t)$ has no influence on risk aversion, but $b(t)$ determines it; hence a very flexible model.
\end{remark}
Since we have $c(t) > \bar{c}(t)$ and $V(T) > F$ by definition of the utility functions in \eqref{eq:utilitymodel:new}, we restrict
\begin{align} \label{eq:Condition:v0:minimalrequirement}
v_{0} > \int_{0}^{T} e^{- r s} \left(\bar{c}(s) - y(s)\right) ds + e^{- r T} F =: F(0)
\end{align}
on the initial endowment in \eqref{eq:BudgetConstraint:y}. It is useful to define
\begin{align}
F(t) = {} & \IE\left[\int_{t}^{T} \frac{\tilde{Z}(s)}{\tilde{Z}(t)} \bar{c}(s) ds + \frac{\tilde{Z}(T)}{\tilde{Z}(t)} F - \int_{t}^{T} \frac{\tilde{Z}(s)}{\tilde{Z}(t)} y(s) ds \Big| \mathcal{F}_{t}\right] \nonumber \\
= {} & \int_{t}^{T} \IE\left[\frac{\tilde{Z}(s)}{\tilde{Z}(t)} \Big| \mathcal{F}_{t}\right] \left(\bar{c}(s) - y(s)\right) ds + F \IE\left[\frac{\tilde{Z}(T)}{\tilde{Z}(t)} \Big| \mathcal{F}_{t}\right] = \int_{t}^{T} e^{- r (s-t)} \left(\bar{c}(s) - y(s)\right) ds + e^{- r (T-t)} F. \label{eq:def:F(t)}
\end{align}
$F(t)$ can be interpreted as the time $t$ value of all future minimal liabilities less income. $F(t)$ equals the sum of the time $t$ wealth necessarily required to meet all the future minimum living expenses and expenditures $\bar{c}(s)$, ${s \in [t,T]}$ during the remaining time and the time $t$ value of the minimum desired terminal wealth level $F$; future salary income is subtracted as it reduces the time $t$ value of the minimum required capital.

\section{Solution: Separation technique}
\label{sec:SeparationTechnique}
In the sequel we follow the separation technique approach by \cite{KaratzasShreve1998} and \cite{LaknerNygren2006} for solving the consumption-terminal wealth problem as defined by \eqref{eq:OptimizationProblem}. We split the problem into two sub-problems: the consumption-only and terminal wealth-only problem. Both individual problems are separately solved via the martingale method, similar to the approach by \cite{Aase2017}. The individual problem solutions are optimally merged at the end. For this sake, let us consider the two individual problems first.

\subsection{The consumption problem}
\label{sec:ConsumptionProblem:y}
The consumption-only problem is
\begin{align}
\begin{split} \label{eq:OptimizationProblem:ConsumptionOnly}
J_{1}(\pi,c;v_{1}) = {} & \IE\left[\int_{0}^{T} U_{1}(t,c(t)) dt\right], \\
\mathcal{V}_{1}(v_{1}) = {} & \sup_{(\pi,c) \in \Lambda_{1}} J_{1}(\pi,c;v_{1})
\end{split}
\end{align}
subject to the budget constraint
\begin{align} \label{eq:BudgetConstraint:ConsumptionOnly:y}
\IE\left[\int_{0}^{T} \tilde{Z}(t) c(t) dt\right] \le v_{1} + \IE\left[\int_{0}^{T} \tilde{Z}(t) y(t) dt\right] = v_{1} + \int_{0}^{T} e^{-r t} y(t) dt.
\end{align}
$\Lambda_{1}$ denotes the set of admissible strategies $(\pi,c)$ such that ${V(t) + \int_{t}^{T} e^{- r (s-t)} y(s) ds \ge 0}$, $\IP$-a.s., $\forall t \in [0,T]$, and which admit a unique solution to \eqref{eq:SDE:V:y} while satisfying ${\IE\left[\int_{0}^{T} |U_{1}(t,c(t))| dt\right] < \infty}$.

\cite{Steffensen2011} provides a proof for CRRA utility functions by solving the associated Hamilton-Jacobi-Bellman (HJB) equation. We follow the approach by \cite{Aase2017}, likewise for a HARA utility function. We extend the findings of \cite{Aase2017} by introducing a time-varying, deterministic consumption floor $\bar{c}(t)$, a time-varying preference function $a(t)$ of consumption over terminal wealth and an income-rate process $y(t)$.

In order to guarantee the consumption rate floor, note $c(t) > \bar{c}(t)$, let us assume the following lower boundary for $v_{1}$ which equals the integral over the discounted consumption floor rate minus income rate over the whole horizon of interest:
\begin{align}
\begin{split} \label{eq:Condition:v1_&_eq:def:F1(t)}
v_{1} > {} & \int_{0}^{T} e^{- r s} \left(\bar{c}(s) - y(s)\right) ds =: F_{1}(0), \\
F_{1}(t) := {} & \IE\left[\int_{t}^{T} \frac{\tilde{Z}(s)}{\tilde{Z}(t)} \left(\bar{c}(s) - y(s)\right) ds \Big| \mathcal{F}_{t}\right] = \int_{t}^{T} e^{- r (s-t)} \left(\bar{c}(s) - y(s)\right) ds. 
\end{split}
\end{align}
Notice that ${v_{1} < 0}$ is possible since a sufficiently large positive income stream can be high enough to finance consumption. Using the martingale method we solve the problem as summarized by the theorem below.

\begin{theorem} \label{thm:Solution:ConsumptionOnly}
The solution to the optimal stochastic control problem \eqref{eq:OptimizationProblem:ConsumptionOnly} with intertemporal utility function $U_{1}$ in \eqref{eq:utilitymodel:new} is
\begin{align*}
\hat{\pi}_{1}(t ; v_{1}) = {} & \frac{1}{1 - b(\tilde{t}_{1})} \Sigma^{-1} (\mu - r \mathbf{1}) \frac{V_{1}(t ; v_{1}) - F_{1}(t)}{V_{1}(t ; v_{1})}, \\
c_{1}(t;v_{1}) = {} & g(t,t; v_{1}) \tilde{Z}(t)^{\frac{1}{b(t)-1}} + \bar{c}(t) = (1-b(t)) \left(\lambda_{1} \frac{e^{\beta  t}}{a(t)} \tilde{Z}(t)\right)^{\frac{1}{b(t)-1}} + \bar{c}(t), \\
V_{1}(t ; v_{1}) = {} & \int_{t}^{T} g(s,t; v_{1}) \tilde{Z}(t)^{\frac{1}{b(s)-1}} ds + F_{1}(t), \\
V_{1}(T ; v_{1}) = {} & 0,
\end{align*}
for all $t \in [0,T]$, where
\begin{align*}
g(s,t; v_{1}) = (1-b(s)) \left(\frac{e^{\beta  s - b(s) \left(r - \frac{1}{2} \frac{1}{b(s)-1} \|\gamma\|^{2}\right) (s-t)}}{a(s)}\right)^{\frac{1}{b(s)-1}} \lambda_{1}^{\frac{1}{b(s)-1}}.
\end{align*}
$\lambda_{1} = \lambda_{1}(v_{1}) > 0$ satisfies the budget constraint uniquely and is subject to the equation
\begin{align} \label{eq:ConsumptionOnly:lambda}
\int_{0}^{T} g(t,0; v_{1}) dt = v_{1} - F_{1}(0).
\end{align}
$\tilde{t}_{1} = \tilde{t}_{1}(v_{1}) \in (t,T)$ is the solution to the equation
\begin{align} \label{eq:ConsumptionOnly:tautilde}
\int_{t}^{T} \frac{1}{b(s)-1} g(s,t; v_{1}) \tilde{Z}(t)^{\frac{1}{b(s)-1}} ds = \frac{1}{b(\tilde{t}_{1})-1} \int_{t}^{T} g(s,t; v_{1}) \tilde{Z}(t)^{\frac{1}{b(s)-1}} ds.
\end{align}
For the optimal $c_{1}(t;v_{1})$, Equation \eqref{eq:BudgetConstraint:ConsumptionOnly:y} is fulfilled with equality.
\end{theorem}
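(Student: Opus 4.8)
The plan is to run the martingale method as announced, using completeness of the Black--Scholes market to replace the dynamic control problem by a static convex program over consumption rates constrained only by the budget inequality \eqref{eq:BudgetConstraint:ConsumptionOnly:y}. I would attach a multiplier $\lambda_{1} > 0$ to the (binding) budget constraint; since $U_{1}(t,\cdot)$ is strictly concave and the constraint is linear, the problem decouples into the pointwise maximization of $U_{1}(t,c) - \lambda_{1}\tilde{Z}(t)\,c$ for each fixed $(t,\omega)$. Using the marginal utility $\frac{\partial U_{1}}{\partial c}(t,c) = e^{-\beta t}a(t)\left(\frac{1}{1-b(t)}(c-\bar{c}(t))\right)^{b(t)-1}$, the first-order condition $\frac{\partial U_{1}}{\partial c}(t,c) = \lambda_{1}\tilde{Z}(t)$ inverts directly to the stated $c_{1}(t;v_{1})$, and strict concavity certifies this critical point as the unique maximizer.

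To pin down $\lambda_{1}$ I would substitute $c_{1}$ into the budget constraint taken with equality, which is exactly \eqref{eq:ConsumptionOnly:lambda}. Because $g(t,0;v_{1}) \propto \lambda_{1}^{\frac{1}{b(t)-1}}$ with exponent $\frac{1}{b(t)-1} < 0$, the left-hand side is continuous and strictly decreasing in $\lambda_{1}$, sweeping $(0,\infty)$, while the right-hand side $v_{1} - F_{1}(0)$ is strictly positive by \eqref{eq:Condition:v1_&_eq:def:F1(t)}; hence a unique admissible $\lambda_{1} > 0$ exists.

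Next I would build the wealth process as the conditional time-$t$ value of future consumption net of income, $V_{1}(t) = \IE\left[\int_{t}^{T}\frac{\tilde{Z}(s)}{\tilde{Z}(t)}(c_{1}(s)-y(s))\,ds \mid \mathcal{F}_{t}\right]$. Splitting off the floor and income terms reproduces $F_{1}(t)$, and the excess part $\IE\left[\frac{\tilde{Z}(s)}{\tilde{Z}(t)}(c_{1}(s)-\bar{c}(s)) \mid \mathcal{F}_{t}\right]$ reduces --- via the identity $1 + \frac{1}{b(s)-1} = \frac{b(s)}{b(s)-1}$ and the lognormality of $\tilde{Z}(s)/\tilde{Z}(t)$ --- to the Gaussian moment $\IE\left[(\tilde{Z}(s)/\tilde{Z}(t))^{\frac{b(s)}{b(s)-1}} \mid \mathcal{F}_{t}\right]$, whose closed form is precisely the deterministic prefactor of $g(s,t;v_{1})$. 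This yields $V_{1}(t) = \int_{t}^{T} g(s,t;v_{1})\tilde{Z}(t)^{\frac{1}{b(s)-1}}\,ds + F_{1}(t)$ and, since both integrals vanish at $t=T$, $V_{1}(T)=0$.

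The final and most delicate step is the portfolio. The structural fact I would use is that the deflated process $\tilde{Z}(t)V_{1}(t) + \int_{0}^{t}\tilde{Z}(s)(c_{1}(s)-y(s))\,ds$ is a $\IP$-martingale --- its drift cancels exactly because $\sigma\gamma = \mu - r\mathbf{1}$, which both guarantees (via market completeness) that $V_{1}$ is the wealth of a self-financing admissible strategy financing $c_{1}$ and confirms admissibility. To read off that strategy I would apply It\^o's formula to the explicit formula for $V_{1}(t)$, where only the factor $\tilde{Z}(t)^{\frac{1}{b(s)-1}}$ contributes to the diffusion; matching the resulting martingale part $-\left(\int_{t}^{T}\frac{1}{b(s)-1}h(s,t)\,ds\right)\gamma'\,dW(t)$, with $h(s,t) := g(s,t;v_{1})\tilde{Z}(t)^{\frac{1}{b(s)-1}}$, against the wealth dynamics \eqref{eq:SDE:V:y} and using $\gamma = \sigma^{-1}(\mu-r\mathbf{1})$ gives $\hat{\pi}_{1}(t) = \Sigma^{-1}(\mu-r\mathbf{1})\,\left(\int_{t}^{T}\frac{1}{1-b(s)}h(s,t)\,ds\right)/V_{1}(t)$. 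The main obstacle is that, with $b$ genuinely time-varying, this weighted integral does not collapse to a single power of $\tilde{Z}$; I would resolve it by the mean value theorem for integrals: since $s\mapsto\frac{1}{1-b(s)}$ is continuous and $h(s,t)>0$, there is an interior $\tilde{t}_{1}\in(t,T)$ with $\int_{t}^{T}\frac{1}{1-b(s)}h(s,t)\,ds = \frac{1}{1-b(\tilde{t}_{1})}\int_{t}^{T}h(s,t)\,ds = \frac{1}{1-b(\tilde{t}_{1})}(V_{1}(t)-F_{1}(t))$, which is precisely \eqref{eq:ConsumptionOnly:tautilde} and delivers the stated $\hat{\pi}_{1}(t;v_{1})$.
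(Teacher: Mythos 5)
Your proposal is correct and follows essentially the same route as the paper's proof: a pointwise first-order condition from the Lagrangian/martingale formulation yields $c_{1}$, substitution into the binding budget constraint with the lognormal moments of $\tilde{Z}$ pins down $\lambda_{1}$ uniquely by monotonicity, the wealth process is obtained as the conditional deflated value of future net consumption, and the portfolio is read off by matching the diffusion term of It\^o's expansion of $V_{1}$ against \eqref{eq:SDE:V:y}, with the first mean value theorem for integrals producing $\tilde{t}_{1}$ exactly as in \eqref{eq:ConsumptionOnly:tautilde}. No substantive differences or gaps.
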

We remind the reader that all proofs can be found in \ref{app:Proofs}. It is clear that ${c_{1}(t;v_{1}) > \bar{c}(t)}$, a.s.. We now aim to interpret the optimal investment strategy as proportional portfolio insurance (PPI) strategy. The first strategy family corresponds to a constant multiple, the latter one is more general and also covers proportional strategies with time-varying or even state-dependent multiples. \cite{ZielingMahayniBalder2014} evaluate the performance of such strategies. Theorem \ref{thm:Solution:ConsumptionOnly} shows that the optimal investment strategy generally is a PPI strategy with time-varying floor $F_{1}(t)$ at time $t$, equal to the time $t$ value of the accumulated outstanding future consumption floor minus income. Notice that $\tilde{t}_{1}$ can firstly be determined at time $t$, since the value depends on the stochastic $\tilde{Z}(t)$ which is not known before time $t$. Hence, $\tilde{t}_{1}$ is time- and also state-dependent and thus the optimal PPI strategy itself is time- and state-dependent through its PPI multiple. The PPI multiple in summary is time-varying, state-dependent and depends on all future coefficients of risk aversion via $b(\tilde{t}_{1})$.

Furthermore, ${V_{1}(0 ; v_{1}) > F_{1}(0)}$ holds by the assumption in \eqref{eq:Condition:v1_&_eq:def:F1(t)}. In addition, $\hat{\pi}_{1}(t ; v_{1})$ converges to $0$ when ${V_{1}(t ; v_{1})}$ approaches $F_{1}(t)$. Thus, ${V_{1}(t ; v_{1}) > F_{1}(t)}$ a.s., which additionally follows directly from the formula for $V_{1}(t ; v_{1})$ in Theorem \ref{thm:Solution:ConsumptionOnly}. This further implies that $(\hat{\pi}_{1},c_{1})$ is an admissible pair, i.e. ${(\hat{\pi}_{1},c_{1}) \in \Lambda_{1}}$. The next remark provides the solution under time-independent risk aversion.

\begin{remark}
When $b(t) \equiv b$, then
\begin{align*}
\hat{\pi}_{1}(t ; v_{1}) = \frac{1}{1 - b} \Sigma^{-1} (\mu - r \mathbf{1}) \frac{V_{1}(t ; v_{1}) - F_{1}(t)}{V_{1}(t ; v_{1})}
\end{align*}
which is a conventional CPPI strategy with constant multiple. Moreover, if $\bar{c}(t) - y(t) \equiv 0$, i.e. the minimum consumption is eating up the whole income, then
\begin{align*}
\hat{\pi}_{1}(t ; v_{1}) = \frac{1}{1 - b} \Sigma^{-1} (\mu - r \mathbf{1}),
\end{align*}
which is a constant mix strategy and represents the standard, well-known result for CRRA utility with constant risk aversion parameter.
\end{remark}
Some comments on the initial capital $v_{1}$ and the sign of the risky investments come next. As already pointed out, a start with a negative initial capital ${V_{1}(0 ; v_{1}) = v_{1} < 0}$ to Problem \eqref{eq:OptimizationProblem:ConsumptionOnly} is possible and might be reasonable in a sense that accumulated income over the life-cycle is expected to exceed total consumption. Hence, there is no need to require positive capital to this problem. For this reason, ${V_{1}(t ; v_{1}) < 0}$ can happen and might be reasonable, too.

Theorem \ref{thm:Solution:ConsumptionOnly} tells that the optimal relative investment strategy is given by
\begin{align*}
\hat{\pi}_{1}(t ; v_{1}) = {} & \frac{1}{1 - b(\tilde{t}_{1})} \Sigma^{-1} (\mu - r \mathbf{1}) \frac{V_{1}(t ; v_{1}) - F_{1}(t)}{V_{1}(t ; v_{1})},
\end{align*}
where ${V_{1}(t ; v_{1}) > F_{1}(t)}$ a.s.. Let ${\left(\Sigma^{-1} (\mu - r \mathbf{1})\right)_{i} > 0}$ for ${i \in \left\{1, \hdots, N\right\}}$, which for instance is the case when there is only one risky asset ($N = 1$) because then ${\Sigma^{-1} \left(\mu - r\right) = \frac{\mu - r}{\sigma^{2}} > 0}$ since $\mu - r \mathbf{1} > 0$ was assumed. Then
\begin{align*}
& \left(\hat{\pi}_{1}(t ; v_{1})\right)_{i} > 0\ \Leftrightarrow\ V_{1}(t ; v_{1}) > 0, \\
& \left(\hat{\pi}_{1}(t ; v_{1})\right)_{i} < 0\ \Leftrightarrow\ V_{1}(t ; v_{1}) < 0.
\end{align*}
Even if the first part of the remark argues that ${V_{1}(t ; v_{1}) < 0}$ is a meaningful case, the conclusion ${\left(\hat{\pi}_{1}(t ; v_{1})\right)_{i} < 0}$ under ${\left(\Sigma^{-1} (\mu - r \mathbf{1})\right)_{i} > 0}$ sounds odd at a first glance. But when looking at the optimal exposure to risky asset $i$, one finds that
\begin{align*}
\left(\hat{\pi}_{1}(t ; v_{1}) V_{1}(t ; v_{1})\right)_{i} = {} & \frac{1}{1 - b(\tilde{t}_{1})} \left(\Sigma^{-1} (\mu - r \mathbf{1})\right)_{i} \left(V_{1}(t ; v_{1}) - F_{1}(t)\right),
\end{align*}
which, under ${\left(\Sigma^{-1} (\mu - r \mathbf{1})\right)_{i} > 0}$, is positive no matter if ${V_{1}(t ; v_{1}) < 0}$ or ${V_{1}(t ; v_{1}) > 0}$. Therefore, the amount of money invested in the risky assets is always positive. The opposite inequalities and conclusions for $\left(\hat{\pi}_{1}(t ; v_{1})\right)_{i}$ and $\left(\hat{\pi}_{1}(t ; v_{1}) V_{1}(t ; v_{1})\right)_{i}$ apply if ${\left(\Sigma^{-1} (\mu - r \mathbf{1})\right)_{i} < 0}$. In summary, the sign of the optimal exposure to the single risky assets is determined by
\begin{align*}
\left(\hat{\pi}_{1}(t ; v_{1}) V_{1}(t ; v_{1})\right)_{i} > 0\ \Leftrightarrow\ \left(\Sigma^{-1} (\mu - r \mathbf{1})\right)_{i} > 0.
\end{align*}
Thus, ${\left(\hat{\pi}_{1}(t ; v_{1}) V_{1}(t ; v_{1})\right)_{i} > 0}$ is possible although it might be ${\left(\hat{\pi}_{1}(t ; v_{1})\right)_{i} < 0}$.

Finally, let ${\left(\Sigma^{-1} (\mu - r \mathbf{1})\right)_{i} > 0}$ for all ${i \in \left\{1, \hdots, N\right\}}$. When ${V_{1}(t ; v_{1}) < 0}$, the optimal exposure to the risk-free asset is negative because
\begin{align*}
\underbrace{V_{1}(t ; v_{1})}_{< 0} \left(1 - \underbrace{\hat{\pi}_{1}(t ; v_{1})' \mathbf{1}}_{< 0}\right) < V_{1}(t ; v_{1}) < 0.
\end{align*}
This in turn implies that in case of ${V_{1}(t ; v_{1}) < 0}$, the investor takes leverage by borrowing from the risk-free account to achieve her investment goals. Leverage at this point can make sense as future income provides some security; note that ${V_{1}(t ; v_{1}) < 0}$ immediately implies that the time $t$ value of accumulated future income exceeds the expected value of consumption.

Some more properties of $\hat{\pi}_{1}(t ; v_{1})$ can be found analytically as follows. The first and second derivative of $\left(\hat{\pi}_{1}(t ; v_{1})\right)_{i}$, $i = 1, \hdots, N$, with respect to wealth $V_{1}(t ; v_{1})$ are
\begin{align*}
\frac{\partial}{\partial V_{1}(t ; v_{1})} \left(\hat{\pi}_{1}(t ; v_{1})\right)_{i} = {} & \frac{1}{1 - b(\tilde{t}_{1})} \left(\Sigma^{-1} (\mu - r \mathbf{1})\right)_{i} \frac{F_{1}(t)}{V_{1}(t ; v_{1})^{2}}, \\
\frac{\partial^{2}}{\partial V_{1}(t ; v_{1})^{2}} \left(\hat{\pi}_{1}(t ; v_{1})\right)_{i} = {} & - 2 \frac{1}{1 - b(\tilde{t}_{1})} \left(\Sigma^{-1} (\mu - r \mathbf{1})\right)_{i} \frac{F_{1}(t)}{V_{1}(t ; v_{1})^{3}}.
\end{align*}
Let ${\left(\Sigma^{-1} (\mu - r \mathbf{1})\right)_{i} > 0}$ for ${i \in \left\{1, \hdots, N\right\}}$, then
\begin{enumerate}
\item ${\frac{\partial}{\partial V_{1}(t ; v_{1})} \left(\hat{\pi}_{1}(t ; v_{1})\right)_{i} \stackrel{(>)}{\ge} 0\ \Leftrightarrow\ F_{1}(t) \stackrel{(>)}{\ge} 0}$.
\item ${\frac{\partial^{2}}{\partial V_{1}(t ; v_{1})^{2}} \left(\hat{\pi}_{1}(t ; v_{1})\right)_{i} \stackrel{(<)}{\le} 0\ \Leftrightarrow}$ either ${F_{1}(t) \stackrel{(>)}{\ge} 0}$ and ${V_{1}(t ; v_{1}) > 0}$ or ${F_{1}(t) \stackrel{(<)}{\le} 0}$ and ${V_{1}(t ; v_{1}) < 0}$.
\end{enumerate}
This implies that at time $t$:
\begin{enumerate}
\item $\left(\hat{\pi}_{1}(t ; v_{1})\right)_{i}$ is increasing in $V_{1}(t ; v_{1})$ if and only if ${F_{1}(t) \ge 0}$, and decreasing in $V_{1}(t ; v_{1})$ otherwise.
\item $\left(\hat{\pi}_{1}(t ; v_{1})\right)_{i}$ is concave in $V_{1}(t ; v_{1})$ if and only if
\begin{enumerate}
\item either ${F_{1}(t) \ge 0}$ and ${V_{1}(t ; v_{1}) > 0}$
\item or ${F_{1}(t) \le 0}$ and ${V_{1}(t ; v_{1}) < 0}$,
\end{enumerate}
and convex in $V_{1}(t ; v_{1})$ otherwise.
\end{enumerate}
The opposite inequalities and conclusions for $\left(\hat{\pi}_{1}(t ; v_{1})\right)_{i}$ and its derivatives apply if ${\left(\Sigma^{-1} (\mu - r \mathbf{1})\right)_{i} < 0}$.

The optimal controls in Theorem \ref{thm:Solution:ConsumptionOnly} determine the value function and the value for $\lambda_{1}$ as follows.

\begin{theorem} \label{thm:Solution:ConsumptionOnly:ValueFunction:lambda}
The optimal value function $\mathcal{V}_{1}(v_{1})$ to Problem \eqref{eq:OptimizationProblem:ConsumptionOnly} is strictly increasing and concave in $v_{1}$. Its value and first and second derivative with respect to the initial budget $v_{1}$ are given by
\begin{align*}
\mathcal{V}_{1}(v_{1}) = {} & \int_{0}^{T} \frac{1-b(t)}{b(t)} \left(\frac{e^{\left[\beta - b(t) \left(r - \frac{1}{2} \frac{1}{b(t)-1} \|\gamma\|^{2}\right)\right] t}}{a(t)}\right)^{\frac{1}{b(t)-1}} \lambda_{1}^{\frac{b(t)}{b(t)-1}} dt, \\
\mathcal{V}_{1}^{\prime}(v_{1}) = {} & \lambda_{1} > 0, \\
\mathcal{V}_{1}^{\prime\prime}(v_{1}) = {} & \lambda_{1}^{\prime} = - \left(\int_{0}^{T} \left(\frac{e^{\left[\beta - b(t) \left(r - \frac{1}{2} \frac{1}{b(t)-1} \|\gamma\|^{2}\right)\right] t}}{a(t)}\right)^{\frac{1}{b(t)-1}} \lambda_{1}^{- \frac{b(t)-2}{b(t)-1}}  dt\right)^{-1} < 0.
\end{align*}
\end{theorem}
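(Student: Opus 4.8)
The plan is to evaluate the value function directly at the optimal control and then differentiate. Since $(\pi_1,c_1)$ from Theorem~\ref{thm:Solution:ConsumptionOnly} is optimal, $\mathcal{V}_1(v_1)=J_1(\pi_1,c_1;v_1)=\IE[\int_0^T U_1(t,c_1(t))\,dt]$. Using that $\frac{1}{1-b(t)}(c_1(t)-\bar c(t))=(\lambda_1 e^{\beta t}\tilde Z(t)/a(t))^{1/(b(t)-1)}$, substitution into $U_1$ from \eqref{eq:utilitymodel:new} cancels the floor and leaves
\begin{align*}
U_1(t,c_1(t))=e^{-\beta t}a(t)\,\frac{1-b(t)}{b(t)}\left(\lambda_1\,\frac{e^{\beta t}}{a(t)}\,\tilde Z(t)\right)^{\frac{b(t)}{b(t)-1}},
\end{align*}
whose only random factor is $\tilde Z(t)^{b(t)/(b(t)-1)}$. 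I would interchange $\IE$ and $\int_0^T dt$ by Fubini and use that, from \eqref{eq:PricingKernel}, $\IE[\tilde Z(t)^p]=e^{[-pr+\frac12 p(p-1)\|\gamma\|^2]t}$ for every $p\in\IR$ (because $\gamma'W(t)$ is centered Gaussian with variance $\|\gamma\|^2 t$). Setting $p=b(t)/(b(t)-1)$, so that $p-1=1/(b(t)-1)$ and $p(p-1)=b(t)/(b(t)-1)^2$, and collecting the $\beta$-, $r$- and $\|\gamma\|^2$-contributions in the exponent, the integrand collapses to the claimed form; the bookkeeping fact is that $a(t)^{1-p}=a(t)^{-1/(b(t)-1)}$ combines with the exponential into $(e^{h(t)t}/a(t))^{1/(b(t)-1)}=:K(t)$, where $h(t):=\beta-b(t)(r-\frac12\frac{1}{b(t)-1}\|\gamma\|^2)$. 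This yields $\mathcal{V}_1(v_1)=\int_0^T\frac{1-b(t)}{b(t)}K(t)\lambda_1^{b(t)/(b(t)-1)}\,dt$, i.e. the stated value.

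For the first derivative I would treat $\mathcal{V}_1(v_1)=\Phi(\lambda_1(v_1))$ as a composition, where $\Phi(\lambda)=\int_0^T\frac{1-b(t)}{b(t)}K(t)\lambda^{b(t)/(b(t)-1)}\,dt$ and $\lambda_1(v_1)$ is determined implicitly by the budget equation \eqref{eq:ConsumptionOnly:lambda}, written as $G(\lambda_1):=\int_0^T(1-b(t))K(t)\lambda_1^{1/(b(t)-1)}\,dt=v_1-F_1(0)$. Since $1-b(t)>0$, $K(t)>0$ and the exponent $1/(b(t)-1)<0$, the map $G$ is $C^1$ and strictly decreasing, so the implicit function theorem gives differentiability of $\lambda_1(\cdot)$ with $\lambda_1'(v_1)=1/G'(\lambda_1)$. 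The chain rule then gives $\mathcal{V}_1'(v_1)=\Phi'(\lambda_1)\lambda_1'(v_1)=\Phi'(\lambda_1)/G'(\lambda_1)$. Computing $\Phi'(\lambda_1)=-\int_0^T K(t)\lambda_1^{1/(b(t)-1)}\,dt$ (using $\frac{1-b}{b}\cdot\frac{b}{b-1}=-1$) and $G'(\lambda_1)=-\int_0^T K(t)\lambda_1^{(2-b(t))/(b(t)-1)}\,dt$, the single identity $\frac{1}{b(t)-1}=1+\frac{2-b(t)}{b(t)-1}$ lets me pull one factor $\lambda_1$ out of the numerator, so the numerator equals $\lambda_1$ times the denominator and therefore $\mathcal{V}_1'(v_1)=\lambda_1$. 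Equivalently, $\lambda_1$ is the Lagrange multiplier of the budget constraint, so $\partial\mathcal{V}_1/\partial v_1=\lambda_1$ by the envelope theorem; the explicit computation simply makes the differentiability rigorous.

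The second derivative is then immediate: $\mathcal{V}_1''(v_1)=\lambda_1'(v_1)=1/G'(\lambda_1)=-(\int_0^T K(t)\lambda_1^{(2-b(t))/(b(t)-1)}\,dt)^{-1}$, which is the stated expression once $(2-b(t))/(b(t)-1)$ is rewritten as $-(b(t)-2)/(b(t)-1)$. The monotonicity and concavity follow by reading off signs: $\mathcal{V}_1'(v_1)=\lambda_1>0$ by Theorem~\ref{thm:Solution:ConsumptionOnly}, so $\mathcal{V}_1$ is strictly increasing; and since $K(t)>0$ and $\lambda_1>0$ make $G'(\lambda_1)<0$ strictly, $\mathcal{V}_1''(v_1)=1/G'(\lambda_1)<0$, so $\mathcal{V}_1$ is strictly concave.

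The routine algebra (the exponent bookkeeping in the first step and the exponent identity in the second) is not where the difficulty lies. The only points needing genuine care are the analytic justifications: interchanging expectation and time-integral via Fubini, which is licensed by the admissibility integrability $\IE[\int_0^T|U_1(t,c(t))|\,dt]<\infty$, and differentiating under the integral in $\Phi$ and $G$ via dominated convergence on compact $\lambda_1$-intervals, for which continuity of $b$ on $[0,T]$ keeps all exponents bounded. These, together with the strict monotonicity of $G$ that powers the implicit function theorem, are what make the formal manipulations legitimate.
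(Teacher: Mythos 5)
Your proposal is correct and takes essentially the same route as the paper: the value function is evaluated at the optimal consumption using the log-normal moments of $\tilde{Z}(t)$, and the identities $\mathcal{V}_{1}^{\prime}=\lambda_{1}$ and $\mathcal{V}_{1}^{\prime\prime}=\lambda_{1}^{\prime}$ are obtained by differentiating the budget equation \eqref{eq:ConsumptionOnly:lambda}; your phrasing via the implicit function theorem applied to $G(\lambda_{1})=v_{1}-F_{1}(0)$ is just the paper's computation written in $\lambda$-derivative rather than $v_{1}$-derivative form. The exponent bookkeeping and sign arguments all check out, and your explicit attention to Fubini and differentiation under the integral is, if anything, slightly more careful than the paper's own proof.
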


\subsection{The terminal wealth problem}
\label{sec:TerminalWealthProblem}
The terminal wealth-only problem is
\begin{align}
\begin{split} \label{eq:OptimizationProblem:TerminalWealthOnly}
J_{2}(\pi,c;v_{2}) = {} & \IE\left[U_{2}(V(T))\right], \\
\mathcal{V}_{2}(v_{2}) = {} & \sup_{(\pi,c) \in \Lambda_{2}} J_{2}(\pi,c;v_{2})
\end{split}
\end{align}
subject to the budget constraint
\begin{align} \label{eq:BudgetConstraint:TerminalWealthOnly}
\IE\left[\tilde{Z}(T) V(T)\right] \le v_{2},\ v_{2} \ge 0.
\end{align}
$\Lambda_{2}$ denotes the set of admissible strategies $(\pi,c)$ such that ${V(t) \ge 0}$, $\IP$-a.s., $\forall t \in [0,T]$, and which admit a unique solution to \eqref{eq:SDE:V:y} for ${y(t) \equiv 0}$.

In order to guarantee the terminal wealth floor, note $V(T) > F$, let us assume the following lower bound for $v_{2}$ which equals the discounted terminal floor:
\begin{align}
v_{2} > {} & e^{- r T} F =: F_{2}(0),\ F_{2}(t) := \IE\left[\frac{\tilde{Z}(T)}{\tilde{Z}(t)} F \Big| \mathcal{F}_{t}\right] = e^{- r (T-t)} F \ge 0. \label{eq:Condition:v2:WithoutProbabilityConstraint_&_eq:def:F2(t)}
\end{align}
Applying the martingale approach leads to the solution to the terminal wealth problem according to the upcoming theorem.

\begin{theorem} \label{thm:Solution:TerminalWealthOnly:WithoutProbabilityConstraint}
The solution to Problem \eqref{eq:OptimizationProblem:TerminalWealthOnly} with terminal utility function $U_{2}$ in \eqref{eq:utilitymodel:new} is
\begin{align*}
\hat{\pi}_{2}(t ; v_{2}) = {} & \frac{1}{1-\hat{b}} \Sigma^{-1}  (\mu - r \mathbf{1}) \frac{V_{2}(t ; v_{2}) - F_{2}(t)}{V_{2}(t ; v_{2})}, \\
c_{2}(t;v_{2}) = {} & 0, \\
V_{2}(t ; v_{2}) = {} & \left(v_{2} - e^{- r T} F\right) e^{\frac{\hat{b}}{\hat{b}-1} \left(r - \frac{1}{2} \frac{1}{\hat{b}-1} \|\gamma\|^{2}\right) t} \tilde{Z}(t)^{\frac{1}{\hat{b}-1}} + F_{2}(t), \\
V_{2}(T ; v_{2}) = {} & \left(v_{2} - e^{- r T} F\right) e^{\frac{\hat{b}}{\hat{b}-1} \left(r - \frac{1}{2} \frac{1}{\hat{b}-1} \|\gamma\|^{2}\right) T} \tilde{Z}(T)^{\frac{1}{\hat{b}-1}} + F,
\end{align*}
for all $t \in [0,T]$. For the optimal $\hat{\pi}_{2}(t ; v_{2})$, Equation \eqref{eq:BudgetConstraint:TerminalWealthOnly} is fulfilled with equality.
\end{theorem}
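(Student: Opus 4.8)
The plan is to treat \eqref{eq:OptimizationProblem:TerminalWealthOnly} as a static maximization over the attainable terminal wealth $V(T)$ and solve it by the martingale (Lagrangian) method, exactly as for the consumption problem but in streamlined form: there is a single terminal payoff and the risk-aversion coefficient $\hat{b}$ is constant, so no averaging in time of the type $b(\tilde{t}_{1})$ is needed. First I would observe that any strictly positive consumption merely erodes the budget available at $T$ without entering the objective $\IE[U_{2}(V(T))]$, so the optimal consumption is $c_{2} \equiv 0$; the problem reduces to maximizing $\IE[U_{2}(V(T))]$ over $\mathcal{F}_{T}$-measurable $V(T)$ subject to the static budget constraint $\IE[\tilde{Z}(T)V(T)] \le v_{2}$, which by completeness of the Black--Scholes market is equivalent to the dynamic formulation.

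Next I would form the Lagrangian $\IE[U_{2}(V(T)) - \lambda_{2}\tilde{Z}(T)V(T)] + \lambda_{2}v_{2}$ and maximize pointwise in each state. Using $U_{2}'(v) = e^{-\beta T}\hat{a}\bigl(\tfrac{1}{1-\hat{b}}(v-F)\bigr)^{\hat{b}-1}$, the first-order condition $U_{2}'(V(T)) = \lambda_{2}\tilde{Z}(T)$ inverts to
\begin{align*}
V_{2}(T;v_{2}) = (1-\hat{b})\Bigl(\tfrac{\lambda_{2}e^{\beta T}}{\hat{a}}\,\tilde{Z}(T)\Bigr)^{\frac{1}{\hat{b}-1}} + F.
\end{align*}
The multiplier $\lambda_{2} > 0$ is then pinned down by imposing the budget constraint with equality, $\IE[\tilde{Z}(T)V_{2}(T;v_{2})] = v_{2}$. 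This requires the log-normal moment $\IE\bigl[\tilde{Z}(T)^{\hat{b}/(\hat{b}-1)}\bigr]$, which from the explicit Gaussian form \eqref{eq:PricingKernel} of $\tilde{Z}$ evaluates to $\exp\bigl(-\tfrac{\hat{b}}{\hat{b}-1}(r-\tfrac{1}{2}\tfrac{1}{\hat{b}-1}\|\gamma\|^{2})T\bigr)$, together with $\IE[\tilde{Z}(T)] = e^{-rT}$; since $v_{2} > e^{-rT}F = F_{2}(0)$ by \eqref{eq:Condition:v2:WithoutProbabilityConstraint_&_eq:def:F2(t)}, this yields a unique positive $\lambda_{2}$ and confirms that the constraint binds.

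I would then recover the wealth process by deflated (risk-neutral) valuation, $V_{2}(t;v_{2}) = \tilde{Z}(t)^{-1}\IE[\tilde{Z}(T)V_{2}(T;v_{2})\mid\mathcal{F}_{t}]$. Splitting off the constant floor produces the $F_{2}(t) = e^{-r(T-t)}F$ term, while the remaining conditional expectation of $\tilde{Z}(T)^{\hat{b}/(\hat{b}-1)}$ factorizes through the independent increment $\tilde{Z}(T)/\tilde{Z}(t)$; after substituting the value of $\lambda_{2}$ the exponentials telescope to leave precisely
\begin{align*}
V_{2}(t;v_{2}) = (v_{2} - e^{-rT}F)\,e^{\frac{\hat{b}}{\hat{b}-1}(r-\frac{1}{2}\frac{1}{\hat{b}-1}\|\gamma\|^{2})t}\,\tilde{Z}(t)^{\frac{1}{\hat{b}-1}} + F_{2}(t).
\end{align*}
Finally, to extract $\hat{\pi}_{2}$ I would apply It\^o's formula to this expression (using $d\tilde{Z} = -\tilde{Z}(r\,dt + \gamma'\,dW)$), read off the diffusion term as $\tfrac{1}{1-\hat{b}}(V_{2}-F_{2})\gamma'\,dW$, and match it against the diffusion $V_{2}\hat{\pi}_{2}'\sigma\,dW$ of the wealth equation \eqref{eq:SDE:V:y} with $c=y=0$; solving with $\gamma = \sigma^{-1}(\mu-r\mathbf{1})$ and $\Sigma = \sigma\sigma'$ delivers the claimed $\hat{\pi}_{2}(t;v_{2}) = \tfrac{1}{1-\hat{b}}\Sigma^{-1}(\mu-r\mathbf{1})\tfrac{V_{2}(t;v_{2})-F_{2}(t)}{V_{2}(t;v_{2})}$, and the drift automatically matches because $\tilde{Z}(t)V_{2}(t;v_{2})$ is a martingale by construction.

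The conceptual crux is the representation step: justifying that the statically optimal $V_{2}(T;v_{2})$ is attainable and recovering both the wealth process and the replicating strategy hinges on market completeness, while the only genuine technical care lies in the consistent bookkeeping of the $\hat{b}$-exponents in the Gaussian moments. Optimality itself is then immediate from the strict concavity of $U_{2}$ on the convex budget set, so the first-order conditions are sufficient; admissibility follows because the first term in $V_{2}(t;v_{2})$ is strictly positive, whence $V_{2}(t;v_{2}) > F_{2}(t) \ge 0$ for all $t$, consistent with the portfolio-insurance structure in which the risky allocation vanishes as wealth approaches the floor.
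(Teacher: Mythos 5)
Your proposal is correct and follows essentially the same route as the paper's proof: set $c_{2}\equiv 0$, apply the martingale/Lagrangian method to obtain the first-order condition $U_{2}'(V(T))=\lambda_{2}\tilde{Z}(T)$, determine $\lambda_{2}$ from the binding budget constraint using the log-normal moments of $\tilde{Z}(T)$, recover $V_{2}(t;v_{2})$ by deflated conditional expectation, and read off $\hat{\pi}_{2}$ by matching the diffusion term of the It\^o expansion against the wealth SDE with $c=y=0$. All the key computations, including the exponent bookkeeping $\IE\bigl[\tilde{Z}(T)^{\hat{b}/(\hat{b}-1)}\bigr]=e^{-\frac{\hat{b}}{\hat{b}-1}(r-\frac{1}{2}\frac{1}{\hat{b}-1}\|\gamma\|^{2})T}$, agree with the paper.
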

Theorem \ref{thm:Solution:TerminalWealthOnly:WithoutProbabilityConstraint} shows that the optimal fraction of wealth allocated to the risky assets follows a CPPI strategy with floor $F_{2}(t) \ge 0$ at time $t$, with constant multiple. Moreover, ${V_{2}(0 ; v_{2}) > F_{2}(0) = e^{- r T} F}$ by the assumption in \eqref{eq:Condition:v2:WithoutProbabilityConstraint_&_eq:def:F2(t)}. In addition, $\hat{\pi}_{2}(t ; v_{2})$ converges to $0$ when ${V_{2}(t ; v_{1})}$ approaches $F_{2}(t)$. Thus, it follows ${V_{2}(t ; v_{2}) > F_{2}(t)}$ a.s., which additionally yields that $(\hat{\pi}_{2},0)$ is an admissible pair, i.e. ${(\hat{\pi}_{2},0) \in \Lambda_{2}}$. The characteristics ${V_{2}(t ; v_{2}) > F_{2}(t)}$ a.s. also directly follows from the formula for ${V_{2}(t ; v_{2})}$ in Theorem \ref{thm:Solution:TerminalWealthOnly:WithoutProbabilityConstraint}. The next remark shows that the optimal proportion allocated to the risky assets is constant over time if one disregards the floor $F$.

\begin{remark}
When $F = 0$, then
\begin{align*}
\hat{\pi}_{2}(t ; v_{2}) = \frac{1}{1 - \hat{b}} \Sigma^{-1} (\mu - r \mathbf{1})
\end{align*}
which is a constant mix strategy and equals the standard result for CRRA utility with constant risk aversion parameter, where the optimal fraction of wealth allocated to the single risky assets does not depend on time or wealth.
\end{remark}
In what follows we analyze some characteristics of the optimal strategy $\hat{\pi}_{2}(t ; v_{2})$. The first and second derivative of $\left(\hat{\pi}_{2}(t ; v_{2})\right)_{i}$, $i = 1, \hdots, N$, with respect to wealth $V_{2}(t ; v_{2})$ are
\begin{align*}
\frac{\partial}{\partial V_{2}(t ; v_{2})} \left(\hat{\pi}_{2}(t ; v_{2})\right)_{i} = {} & \frac{1}{1 - \hat{b}} \left(\Sigma^{-1} (\mu - r \mathbf{1})\right)_{i} \frac{F_{2}(t)}{V_{2}(t ; v_{2})^{2}}, \\
\frac{\partial^{2}}{\partial V_{2}(t ; v_{2})^{2}} \left(\hat{\pi}_{2}(t ; v_{2})\right)_{i} = {} & - 2 \frac{1}{1 - \hat{b}} \left(\Sigma^{-1} (\mu - r \mathbf{1})\right)_{i} \frac{F_{2}(t)}{V_{2}(t ; v_{2})^{3}}.
\end{align*}
Let ${\left(\Sigma^{-1} (\mu - r \mathbf{1})\right)_{i} > 0}$ for ${i \in \left\{1, \hdots, N\right\}}$. Then ${\frac{\partial}{\partial V_{2}(t ; v_{2})} \left(\hat{\pi}_{2}(t ; v_{2})\right)_{i} \ge 0}$ and ${\frac{\partial^{2}}{\partial V_{2}(t ; v_{2})^{2}} \left(\hat{\pi}_{2}(t ; v_{2})\right)_{i} \le 0}$, where the inequalities hold strictly when $F > 0$. Hence, $\left(\hat{\pi}_{2}(t ; v_{2})\right)_{i}$ increases and is concave in the wealth $V_{2}(t ; v_{2})$. Otherwise, if ${\left(\Sigma^{-1} (\mu - r \mathbf{1})\right)_{i} < 0}$ for ${i \in \left\{1, \hdots, N\right\}}$, then $\left(\hat{\pi}_{2}(t ; v_{2})\right)_{i}$ decreases and is convex in the wealth $V_{2}(t ; v_{2})$. For the optimal exposure to the risky assets it therefore holds
\begin{align*}
\left(\hat{\pi}_{2}(t ; v_{2}) V_{2}(t ; v_{2})\right)_{i} > 0\ \Leftrightarrow\ \left(\Sigma^{-1} (\mu - r \mathbf{1})\right)_{i} > 0.
\end{align*}
Thus, either it is ${\left(\hat{\pi}_{2}(t ; v_{2})\right)_{i} > 0}$ and ${\left(\hat{\pi}_{2}(t ; v_{2}) V_{2}(t ; v_{2})\right)_{i} > 0}$ or ${\left(\hat{\pi}_{2}(t ; v_{2})\right)_{i} < 0}$ and ${\left(\hat{\pi}_{2}(t ; v_{2}) V_{2}(t ; v_{2})\right)_{i} < 0}$.

The optimal controls in Theorem \ref{thm:Solution:TerminalWealthOnly:WithoutProbabilityConstraint} determine the value function and the value for $\lambda_{2}$.

\begin{theorem} \label{thm:Solution:TerminalWealthOnly:ValueFunction:lambda}
The optimal value function $\mathcal{V}_{2}(v_{2})$ to Problem \eqref{eq:OptimizationProblem:TerminalWealthOnly} is strictly increasing and concave in $v_{2}$. Its value and first and second derivative with respect to the initial budget $v_{2}$ are given by
\begin{align*}
\mathcal{V}_{2}(v_{2}) = {} & e^{\left[- \beta + \hat{b} \left(r - \frac{1}{2} \frac{1}{\hat{b}-1} \|\gamma\|^{2}\right)\right] T}  \frac{\left(1-\hat{b}\right)^{1-\hat{b}}}{\hat{b}}  \hat{a} \left(v_{2} - F_{2}(0)\right)^{\hat{b}}, \\
\mathcal{V}_{2}^{\prime}(v_{2}) = {} & \lambda_{2} > 0, \\
\mathcal{V}_{2}^{\prime\prime}(v_{2}) = {} & \lambda_{2}^{\prime} = - e^{\left[- \beta + \hat{b} \left(r - \frac{1}{2} \frac{1}{\hat{b}-1} \|\gamma\|^{2}\right)\right] T}  \left(1-\hat{b}\right)^{2-\hat{b}}  \hat{a} \left(v_{2} - F_{2}(0)\right)^{\hat{b}-2} < 0.
\end{align*}
The Lagrange multiplier is given by \eqref{eq:Lagrange:TerminalWealthOnly} as
\begin{align*}
\lambda_{2} = e^{-\left[\beta - \hat{b} \left(r - \frac{1}{2} \frac{1}{\hat{b}-1} \|\gamma\|^{2}\right)\right] T}  \left(1-\hat{b}\right)^{1-\hat{b}}  \hat{a} \left(v_{2} - F_{2}(0)\right)^{\hat{b}-1} > 0.
\end{align*}
\end{theorem}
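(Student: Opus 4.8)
The plan is to obtain $\mathcal{V}_{2}(v_{2})$ directly, by substituting the explicit optimal terminal wealth $V_{2}(T;v_{2})$ from Theorem~\ref{thm:Solution:TerminalWealthOnly:WithoutProbabilityConstraint} into the terminal utility $U_{2}$ from \eqref{eq:utilitymodel:new} and taking the expectation under $\IP$, since by definition $\mathcal{V}_{2}(v_{2}) = \IE[U_{2}(V_{2}(T;v_{2}))]$ at the optimum. Because $V_{2}(T;v_{2}) - F = (v_{2} - e^{-rT}F)\, e^{\frac{\hat{b}}{\hat{b}-1}\left(r - \frac{1}{2}\frac{1}{\hat{b}-1}\|\gamma\|^{2}\right)T}\, \tilde{Z}(T)^{\frac{1}{\hat{b}-1}}$, inserting this into $U_{2}(v) = e^{-\beta T}\hat{a}\frac{1-\hat{b}}{\hat{b}}\left(\frac{1}{1-\hat{b}}(v-F)\right)^{\hat{b}}$ factorizes $U_{2}(V_{2}(T;v_{2}))$ into a deterministic prefactor times the single random quantity $\tilde{Z}(T)^{\frac{\hat{b}}{\hat{b}-1}}$; the whole task then reduces to computing $\IE[\tilde{Z}(T)^{\frac{\hat{b}}{\hat{b}-1}}]$ and collecting exponents.

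The core step is this one expectation. Because $\tilde{Z}(T) = e^{-\left(r+\frac{1}{2}\|\gamma\|^{2}\right)T - \gamma'W(T)}$ with $\gamma'W(T)\sim N(0,\|\gamma\|^{2}T)$, the power $\tilde{Z}(T)^{\frac{\hat{b}}{\hat{b}-1}}$ is lognormal, so the Gaussian moment generating function gives $\IE[\tilde{Z}(T)^{\frac{\hat{b}}{\hat{b}-1}}] = \exp\left(-\frac{\hat{b}}{\hat{b}-1}rT + \frac{1}{2}\frac{\hat{b}}{(\hat{b}-1)^{2}}\|\gamma\|^{2}T\right)$. Multiplying this by the deterministic factor $e^{\frac{\hat{b}^{2}}{\hat{b}-1}\left(r-\frac{1}{2}\frac{1}{\hat{b}-1}\|\gamma\|^{2}\right)T}$ arising from $(V_{2}(T;v_{2})-F)^{\hat{b}}$ and grouping the $r$- and $\|\gamma\|^{2}$-contributions separately, I expect the two $r$-terms to collapse to $\hat{b}rT$ and the two variance terms to collapse to $-\frac{1}{2}\frac{\hat{b}}{\hat{b}-1}\|\gamma\|^{2}T$, i.e. jointly to $\hat{b}\left(r-\frac{1}{2}\frac{1}{\hat{b}-1}\|\gamma\|^{2}\right)T$. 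Combined with the algebraic identity $\frac{1-\hat{b}}{\hat{b}}(1-\hat{b})^{-\hat{b}} = \frac{(1-\hat{b})^{1-\hat{b}}}{\hat{b}}$ and the observation $v_{2}-e^{-rT}F = v_{2}-F_{2}(0)$, this reproduces the claimed closed form for $\mathcal{V}_{2}(v_{2})$.

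With $\mathcal{V}_{2}$ in closed form as a constant multiple of $(v_{2}-F_{2}(0))^{\hat{b}}$, the first and second derivatives in $v_{2}$ are immediate: differentiation brings down the factor $\hat{b}$ and then $(\hat{b}-1)$, which recombine into the stated $\mathcal{V}_{2}'(v_{2})$ and $\mathcal{V}_{2}''(v_{2})$. The signs follow from the standing assumptions $\hat{a}>0$, $\hat{b}<1$ (so $1-\hat{b}>0$) and $v_{2}>F_{2}(0)$ from \eqref{eq:Condition:v2:WithoutProbabilityConstraint_&_eq:def:F2(t)}: then $(v_{2}-F_{2}(0))^{\hat{b}-1}>0$ yields $\mathcal{V}_{2}'(v_{2})>0$ (strictly increasing), and the explicit leading minus sign in $\mathcal{V}_{2}''(v_{2})$ yields strict concavity.

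It remains to identify $\mathcal{V}_{2}'(v_{2})$ with the Lagrange multiplier $\lambda_{2}$. I would confirm this by comparing the computed derivative with the multiplier equation \eqref{eq:Lagrange:TerminalWealthOnly} obtained from the binding budget constraint \eqref{eq:BudgetConstraint:TerminalWealthOnly} in the martingale solution of Theorem~\ref{thm:Solution:TerminalWealthOnly:WithoutProbabilityConstraint}; the two expressions coincide, which is precisely the envelope relation between the value function and the budget multiplier. I anticipate no conceptual difficulty in the argument---the only genuine obstacle is the careful bookkeeping of the $r$- and $\|\gamma\|^{2}$-exponents in the expectation step, where the Gaussian variance contribution and the deterministic drift must cancel exactly to yield the compact coefficient $\hat{b}\left(r-\frac{1}{2}\frac{1}{\hat{b}-1}\|\gamma\|^{2}\right)$.
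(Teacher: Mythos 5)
Your proposal is correct and follows essentially the same route as the paper's proof: substitute the explicit optimal terminal wealth from Theorem \ref{thm:Solution:TerminalWealthOnly:WithoutProbabilityConstraint} into $U_{2}$, evaluate $\IE[\tilde{Z}(T)^{\hat{b}/(\hat{b}-1)}]$ via lognormality so the $r$- and $\|\gamma\|^{2}$-exponents collapse to $\hat{b}\left(r-\frac{1}{2}\frac{1}{\hat{b}-1}\|\gamma\|^{2}\right)T$, then differentiate the resulting constant multiple of $(v_{2}-F_{2}(0))^{\hat{b}}$ and match $\mathcal{V}_{2}^{\prime}(v_{2})$ with \eqref{eq:Lagrange:TerminalWealthOnly}. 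Your exponent bookkeeping and sign arguments agree with the paper's computation.
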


\subsection{Optimal merging of the individual solutions}
\label{sec:OptimalMerging}
Let $(\pi_{1}(t;v_{1}), c_{1}(t;v_{1}))$ denote the optimal controls to Problem \eqref{eq:OptimizationProblem:ConsumptionOnly} with optimal wealth process $V_{1}(t;v_{1})$ to the initial wealth ${v_{1} \ge \int_{0}^{T} e^{- r t} \left(\bar{c}(t) - y(t)\right) dt = F_{1}(0)}$ and $(\pi_{2}(t;v_{2}), c_{2}(t;v_{2}))$ the optimal controls to Problem \eqref{eq:OptimizationProblem:TerminalWealthOnly} with optimal wealth process $V_{2}(t;v_{2})$ to the initial wealth ${v_{2} \ge e^{- r T} F  = F_{2}(0)}$. Then merging the two solutions to solve Problem \eqref{eq:OptimizationProblem} is based on the following theorem.

\begin{theorem} \label{thm:ConnectionValueFunctions}
The connection between the value functions is
\begin{align*}
\mathcal{V}(v_{0}) = \sup_{v_{1} \ge F_{1}(0),\ v_{2} \ge F_{2}(0),\ v_{1} + v_{2} = v_{0}} \left\{\mathcal{V}_{1}(v_{1}) + \mathcal{V}_{2}(v_{2})\right\}.
\end{align*}
\end{theorem}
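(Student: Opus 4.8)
The plan is to establish the identity by proving the two inequalities ``$\le$'' and ``$\ge$'' separately. Both rest on two structural features already available: the objective \eqref{eq:ObjectiveFunction} splits additively as $J(\pi,c;v_{0}) = \IE[\int_{0}^{T} U_{1}(t,c(t)) dt] + \IE[U_{2}(V(T))]$, where the first summand depends on the strategy only through $c$ and the second only through the terminal wealth $V(T)$; and the market is complete, so that any non-negative consumption stream and any terminal payoff can be financed at a cost equal to its $\tilde{Z}$-expected discounted value. I would also invoke the monotonicity of $\mathcal{V}_{1}$ and $\mathcal{V}_{2}$ from Theorems \ref{thm:Solution:ConsumptionOnly:ValueFunction:lambda} and \ref{thm:Solution:TerminalWealthOnly:ValueFunction:lambda} to absorb any budget slack.

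For the inequality $\mathcal{V}(v_{0}) \le \sup\{\cdots\}$, I would start from an arbitrary admissible pair $(\pi,c) \in \Lambda$ with wealth process $V$ and define the two induced budgets
\[
v_{1} := \IE\Big[\int_{0}^{T} \tilde{Z}(t) c(t) dt\Big] - \int_{0}^{T} e^{-rt} y(t) dt, \qquad v_{2} := \IE\big[\tilde{Z}(T) V(T)\big].
\]
The budget constraint \eqref{eq:BudgetConstraint:y} then reads exactly $v_{1}+v_{2} \le v_{0}$. Since $c(t) > \bar{c}(t)$ and $V(T) > F$, these satisfy $v_{1} > F_{1}(0)$ and $v_{2} > F_{2}(0)$, so both lie in the admissible ranges of the sub-problems. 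By completeness there exist investment strategies financing $c$ (with income $y$ and terminal value $0$) out of $v_{1}$, and financing $V(T)$ (without consumption or income) out of $v_{2}$; the conditional representations $V_{1}(t) + \int_{t}^{T} e^{-r(s-t)} y(s) ds = \IE[\int_{t}^{T} \frac{\tilde{Z}(s)}{\tilde{Z}(t)} c(s) ds \,|\, \mathcal{F}_{t}] \ge 0$ and $V_{2}(t) = \IE[\frac{\tilde{Z}(T)}{\tilde{Z}(t)} V(T) \,|\, \mathcal{F}_{t}] \ge 0$ show these are admissible for $\Lambda_{1}$ and $\Lambda_{2}$. Hence $J(\pi,c;v_{0}) \le \mathcal{V}_{1}(v_{1}) + \mathcal{V}_{2}(v_{2})$, and after shifting the slack $v_{0}-v_{1}-v_{2} \ge 0$ into $v_{1}$ using monotonicity of $\mathcal{V}_{1}$ the right-hand side is dominated by the constrained supremum; taking the supremum over $(\pi,c)$ closes this direction.

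For the reverse inequality I would fix any split $v_{1}+v_{2}=v_{0}$ with $v_{1}\ge F_{1}(0)$, $v_{2}\ge F_{2}(0)$, take the optimizers $(\hat{\pi}_{1},c_{1})$ and $(\hat{\pi}_{2},0)$ from Theorems \ref{thm:Solution:ConsumptionOnly} and \ref{thm:Solution:TerminalWealthOnly:WithoutProbabilityConstraint}, and glue them by adding wealth processes: set $V := V_{1} + V_{2}$, keep the consumption $c := c_{1}$, and define the merged risky allocation through the dollar exposures, $\hat{\pi} V := \hat{\pi}_{1} V_{1} + \hat{\pi}_{2} V_{2}$. Adding the two wealth SDEs shows that $V$ solves \eqref{eq:SDE:V:y} under $(\pi,c_{1})$ with $V(0)=v_{1}+v_{2}=v_{0}$, while $V_{1}(T)=0$ gives $V(T)=V_{2}(T)>F$; admissibility $V(t)+\int_{t}^{T} e^{-r(s-t)} y(s) ds \ge 0$ follows by adding the admissibility of the two pieces. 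Since the objective is additive and $c_{2}\equiv 0$, this merged strategy attains $J(\pi,c_{1};v_{0}) = \mathcal{V}_{1}(v_{1}) + \mathcal{V}_{2}(v_{2})$, so $\mathcal{V}(v_{0})$ dominates every term in the supremum, and combining the two bounds yields equality.

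The main obstacle I anticipate is the admissibility bookkeeping rather than any hard estimate. In the ``$\le$'' direction one must confirm via market completeness that the induced consumption and terminal payoff are separately financeable by \emph{admissible} strategies out of $v_{1}$ and $v_{2}$, so that $J_{1}\le\mathcal{V}_{1}$ and $J_{2}\le\mathcal{V}_{2}$ may be applied. In the ``$\ge$'' direction one must handle the definition of the relative portfolio $\hat{\pi}=(\hat{\pi}_{1}V_{1}+\hat{\pi}_{2}V_{2})/V$ at times where $V=V_{1}+V_{2}$ may vanish; this is cleanly avoided by working throughout with the dollar exposures and verifying the wealth dynamics at the level of amounts invested. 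The monotonicity of the sub-value functions is precisely what lets one replace the budget inequality $v_{1}+v_{2}\le v_{0}$ by the equality constraint in the supremum without loss.
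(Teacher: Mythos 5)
Your proposal is correct and follows essentially the same route as the paper: the same two inequalities, the same induced budgets $v_{1}=\IE[\int_{0}^{T}\tilde{Z}(t)(c(t)-y(t))dt]$ and $v_{2}=\IE[\tilde{Z}(T)V(T)]$ for the ``$\le$'' direction, and the same gluing of the two sub-problem optimizers for the ``$\ge$'' direction. Your version is slightly more careful than the paper's (arbitrary admissible pairs rather than a presumed optimizer, explicit treatment of budget slack via monotonicity, and of the division by $V$ when $V$ may vanish), but these are refinements of the same argument rather than a different approach.
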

Notice that ${F(t) = F_{1}(t) + F_{2}(t)}$, hence \eqref{eq:Condition:v0:minimalrequirement} ensures that ${v_{0} = v_{1} + v_{2} > F_{1}(0) + F_{2}(0)}$ is claimed. When discounted future income exceeds consumption over the considered period, i.e. when the initial budget to the consumption problem is negative (${v_{1} < 0}$), then ${v_{2} > v_{0}}$ and a higher amount of money $v_{2}$ is invested according to the terminal wealth problem at initial time as the initial endowment $v_{0}$ of the investor.

Theorem \ref{thm:ConnectionValueFunctions} shows that an optimal allocation to consumption and terminal wealth at $t = 0$ together with the solution to the two separate problems equals the solution to the original optimization problem. The optimal initial budgets are denoted by $v_{1}^{\star}$ and $v_{2}^{\star}$. The next lemma provides a condition for $v_{1}^{\star}$ and $v_{2}^{\star}$.

\begin{lemma} \label{lemma:EqualityConditionValueFunctions}
The optimal $v_{1}^{\star}$ solves
\begin{align} \label{eq:OptimalMerging:ValueFunction}
\mathcal{V}_{1}^{\prime}(v_{1}) - \mathcal{V}_{2}^{\prime}(v_{0} - v_{1}) = 0
\end{align}
and is subject to ${F_{1}(0) \le v_{1}^{\star} \le v_{0} - F_{2}(0)}$. The optimal $v_{2}^{\star}$ is then given by ${v_{2}^{\star} = v_{0} - v_{1}^{\star}}$.
\end{lemma}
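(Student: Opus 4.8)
The plan is to exploit Theorem \ref{thm:ConnectionValueFunctions}, which recasts $\mathcal{V}(v_0)$ as a maximization over the pair $(v_1,v_2)$ subject to the single linear constraint $v_1+v_2=v_0$ together with $v_1\ge F_1(0)$, $v_2\ge F_2(0)$. Eliminating $v_2=v_0-v_1$, this becomes the one-dimensional problem of maximizing
\[
H(v_1) := \mathcal{V}_1(v_1) + \mathcal{V}_2(v_0-v_1)
\]
over the compact interval $v_1\in[F_1(0),\,v_0-F_2(0)]$, which is nonempty because \eqref{eq:Condition:v0:minimalrequirement} together with $F(0)=F_1(0)+F_2(0)$ gives $v_0-F_2(0)>F_1(0)$. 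By Theorems \ref{thm:Solution:ConsumptionOnly:ValueFunction:lambda} and \ref{thm:Solution:TerminalWealthOnly:ValueFunction:lambda}, both $\mathcal{V}_1$ and $\mathcal{V}_2$ are continuously differentiable and strictly concave, hence $H$ is continuously differentiable and strictly concave on the interior of this interval, so it possesses a unique maximizer $v_1^\star$.

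Differentiating, $H'(v_1)=\mathcal{V}_1^\prime(v_1)-\mathcal{V}_2^\prime(v_0-v_1)$, so the claimed first-order condition \eqref{eq:OptimalMerging:ValueFunction} is exactly $H'(v_1^\star)=0$. To guarantee that $v_1^\star$ is interior (so that stationarity is necessary and sufficient), I would examine the behavior of $H'$ at the two endpoints. Using $\mathcal{V}_2^\prime(v_2)=\lambda_2=\mathrm{const}\cdot(v_2-F_2(0))^{\hat b-1}$ from Theorem \ref{thm:Solution:TerminalWealthOnly:ValueFunction:lambda} and $\hat b-1<0$, one has $\mathcal{V}_2^\prime(v_2)\to+\infty$ as $v_2\downarrow F_2(0)$; symmetrically I claim $\mathcal{V}_1^\prime(v_1)=\lambda_1(v_1)\to+\infty$ as $v_1\downarrow F_1(0)$. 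Granting both, as $v_1\downarrow F_1(0)$ the first term blows up while the second stays finite (since then $v_0-v_1\to v_0-F_1(0)>F_2(0)$), so $H'(v_1)\to+\infty$; and as $v_1\uparrow v_0-F_2(0)$ the roles reverse and $H'(v_1)\to-\infty$. By the intermediate value theorem and the strict monotonicity of $H'$ implied by strict concavity, the zero $v_1^\star$ lies strictly inside $[F_1(0),\,v_0-F_2(0)]$ and is unique, and $v_2^\star=v_0-v_1^\star$ follows from the constraint.

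The main obstacle is the boundary blow-up $\mathcal{V}_1^\prime(v_1)\to+\infty$ as $v_1\downarrow F_1(0)$, since $\lambda_1(v_1)$ is only defined implicitly through the budget equation \eqref{eq:ConsumptionOnly:lambda}, namely $\int_0^T g(t,0;v_1)\,dt=v_1-F_1(0)$. The way I would handle this is to observe that each integrand is a positive multiple of $\lambda_1^{1/(b(t)-1)}$ with exponent $1/(b(t)-1)<0$ (because $b(t)<1$); hence the left-hand side is strictly decreasing in $\lambda_1$, tends to $0$ as $\lambda_1\to+\infty$ and to $+\infty$ as $\lambda_1\downarrow 0$. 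This yields a well-defined strictly decreasing inverse $\lambda_1(v_1)$, and forcing $v_1-F_1(0)\downarrow 0$ on the right-hand side forces $\lambda_1(v_1)\to+\infty$, as needed. The interior maximality can alternatively be argued purely from concavity together with these one-sided derivative limits, without invoking the explicit form of $g$, but the explicit computation makes the limit transparent.
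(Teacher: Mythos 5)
Your core argument is correct and coincides with the paper's own proof of this lemma: both reduce the problem via Theorem \ref{thm:ConnectionValueFunctions} to maximizing $\mathcal{V}_{1}(v_{1})+\mathcal{V}_{2}(v_{0}-v_{1})$ over $v_{1}\in[F_{1}(0),v_{0}-F_{2}(0)]$, read off the first-order condition \eqref{eq:OptimalMerging:ValueFunction}, and invoke the strict concavity established in Theorems \ref{thm:Solution:ConsumptionOnly:ValueFunction:lambda} and \ref{thm:Solution:TerminalWealthOnly:ValueFunction:lambda} to conclude that the stationary point is the maximizer. The paper stops there for this lemma and defers existence, uniqueness and the verification that $v_{1}^{\star}$ actually lies in $[F_{1}(0),v_{0}-F_{2}(0)]$ to Lemma \ref{lemma:EqualityConditionValueFunctions:Solution}, where it is done by rewriting the stationarity condition as the root of an explicit function $f$ built from $\chi(t)$ and checking the sign of $f$ at the two endpoints. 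Your alternative route to the same conclusion --- showing $\mathcal{V}_{2}^{\prime}(v_{2})\to+\infty$ as $v_{2}\downarrow F_{2}(0)$ directly from its closed form, and $\mathcal{V}_{1}^{\prime}(v_{1})=\lambda_{1}(v_{1})\to+\infty$ as $v_{1}\downarrow F_{1}(0)$ from the monotonicity and limits of the budget equation \eqref{eq:ConsumptionOnly:lambda} in $\lambda_{1}$ --- is sound (the monotonicity and limit claims you need are exactly those the paper itself records in the proof of the consumption-only theorem), and it has the merit of establishing interiority without ever computing $\chi(t)$; what it does not deliver, and what the paper's explicit $f$-based argument in Lemma \ref{lemma:EqualityConditionValueFunctions:Solution} does, is the computable characterization \eqref{eq:Separation:Optimalv1} of $v_{1}^{\star}$ that is needed downstream.
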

Within our specified setup, we can address the condition in Lemma \ref{lemma:EqualityConditionValueFunctions} in more detail, the result is provided next.

\begin{lemma} \label{lemma:EqualityConditionValueFunctions:Solution}
The optimal $v_{1}^{\star}$ to \eqref{eq:OptimalMerging:ValueFunction} exists uniquely and satisfies the boundary condition $F_{1}(0) \le v_{1}^{\star} \le v_{0} - F_{2}(0)$. $v_{1}^{\star}$ is the solution to the equation
\begin{align} \label{eq:Separation:Optimalv1}
v_{1} - \int_{0}^{T} \chi(t) \left(v_{0} - v_{1} - F_{2}(0)\right)^{\frac{\hat{b}-1}{b(t)-1}} dt = F_{1}(0)
\end{align}
with
\begin{align} \label{eq:definition:chi}
\chi(t) = (1-b(t)) \left(1-\hat{b}\right)^{\frac{1-\hat{b}}{b(t)-1}} \left(\frac{\hat{a}}{a(t)}\right)^{\frac{1}{b(t)-1}} \left(\frac{e^{\left[\beta - b(t) \left(r - \frac{1}{2} \frac{1}{b(t)-1} \|\gamma\|^{2}\right)\right] t}}{e^{\left[\beta - \hat{b} \left(r - \frac{1}{2} \frac{1}{\hat{b}-1} \|\gamma\|^{2}\right)\right] T}}\right)^{\frac{1}{b(t)-1}} > 0.
\end{align}
The optimal $v_{2}^{\star}$ is given by ${v_{2}^{\star} = v_{0} - v_{1}^{\star}}$.

Moreover, the optimal Lagrange multiplier $\lambda_{1}^{\star} = \lambda_{1}(v_{1}^{\star})$ is given by
\begin{align*}
\lambda_{1}^{\star} = \left(1-\hat{b}\right)^{1-\hat{b}} \hat{a} e^{-\left[\beta - \hat{b} \left(r - \frac{1}{2} \frac{1}{\hat{b}-1} \|\gamma\|^{2}\right)\right] T} \left(v_{0} - v_{1}^{\star} - F_{2}(0)\right)^{\hat{b}-1}.
\end{align*}
\end{lemma}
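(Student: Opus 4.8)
The plan is to recast the first-order condition from Lemma \ref{lemma:EqualityConditionValueFunctions} as a single scalar equation in $v_1$ and dispatch it by a monotonicity-plus-boundary argument. First I would define $h(v_1) := \mathcal{V}_1'(v_1) - \mathcal{V}_2'(v_0 - v_1)$ on the admissible interval $[F_1(0),\, v_0 - F_2(0)]$. Since $\mathcal{V}_1$ and $\mathcal{V}_2$ are continuously differentiable (Theorems \ref{thm:Solution:ConsumptionOnly:ValueFunction:lambda} and \ref{thm:Solution:TerminalWealthOnly:ValueFunction:lambda}), $h$ is continuous, and because $h'(v_1) = \mathcal{V}_1''(v_1) + \mathcal{V}_2''(v_0-v_1) < 0$ (both value functions being strictly concave, as stated in those theorems), $h$ is strictly decreasing. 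Existence and uniqueness of a root then reduce to verifying a sign change at the two endpoints.

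For the boundary behavior I would track the two Lagrange multipliers. As $v_1 \downarrow F_1(0)$, the budget relation \eqref{eq:ConsumptionOnly:lambda} forces $\int_0^T g(t,0;v_1)\,dt \to 0$; since each integrand is proportional to $\lambda_1^{1/(b(t)-1)}$ with a negative exponent (because $b(t)<1$), this drives $\lambda_1 = \mathcal{V}_1'(v_1) \to +\infty$, while $\mathcal{V}_2'(v_0-v_1)$ stays finite since $v_0 - v_1 \to v_0 - F_1(0) > F_2(0)$; hence $h \to +\infty$. Symmetrically, as $v_1 \uparrow v_0 - F_2(0)$, the closed-form $\lambda_2 \propto (v_2 - F_2(0))^{\hat{b}-1}$ with $\hat{b}-1<0$ forces $\mathcal{V}_2'(v_0-v_1) \to +\infty$ while $\mathcal{V}_1'(v_1)$ remains finite, so $h \to -\infty$. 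Continuity together with strict monotonicity then yields a unique $v_1^\star$ in the open interval, establishing the existence, uniqueness, and bracketing claims, and $v_2^\star = v_0 - v_1^\star$.

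To obtain the explicit characterization \eqref{eq:Separation:Optimalv1}, I would convert the first-order condition $\lambda_1(v_1) = \lambda_2(v_0-v_1)$ into an equation in $v_1$ alone. The consumption budget constraint \eqref{eq:ConsumptionOnly:lambda} expresses $v_1 - F_1(0)$ as an integral whose integrand carries the factor $\lambda_1^{1/(b(t)-1)}$. Substituting $\lambda_1 = \lambda_2(v_0-v_1)$, with $\lambda_2$ taken in closed form from Theorem \ref{thm:Solution:TerminalWealthOnly:ValueFunction:lambda}, the power $\lambda_2^{1/(b(t)-1)}$ splits into the deterministic prefactors that assemble into $\chi(t)$ of \eqref{eq:definition:chi} and the wealth-dependent term $(v_0-v_1-F_2(0))^{(\hat{b}-1)/(b(t)-1)}$; rearranging yields \eqref{eq:Separation:Optimalv1}. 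The formula for $\lambda_1^\star$ is then immediate, since $\lambda_1^\star = \lambda_1(v_1^\star) = \lambda_2(v_0 - v_1^\star)$ evaluated at $v_2 = v_0 - v_1^\star$.

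The main obstacle I anticipate is the boundary analysis rather than the bookkeeping: one must argue carefully that $\lambda_1 \to +\infty$ as $v_1 \downarrow F_1(0)$ directly from the \emph{implicit} relation \eqref{eq:ConsumptionOnly:lambda}, since $\lambda_1$ admits no closed form there, whereas the endpoint $v_1 \uparrow v_0 - F_2(0)$ is transparent from the explicit $\lambda_2$. Matching the deterministic prefactors in the substitution step to the precise expression for $\chi(t)$ is routine but demands attention to the nested exponents $1/(b(t)-1)$.
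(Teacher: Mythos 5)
Your proposal is correct and follows the same overall strategy as the paper: equate $\mathcal{V}_1'(v_1)=\lambda_1$ with the closed-form $\mathcal{V}_2'(v_0-v_1)=\lambda_2$, substitute into the consumption budget identity \eqref{eq:ConsumptionOnly:lambda} to assemble $\chi(t)$ and obtain \eqref{eq:Separation:Optimalv1}, and then settle existence and uniqueness by a monotonicity-plus-sign-change argument. The one place where you diverge is in how that last step is executed. You run the intermediate value theorem on $h(v_1)=\mathcal{V}_1'(v_1)-\mathcal{V}_2'(v_0-v_1)$, using strict concavity for monotonicity and blow-up of the Lagrange multipliers at the two endpoints for the sign change; the delicate point, as you note, is showing $\lambda_1\to+\infty$ as $v_1\downarrow F_1(0)$ from the implicit relation \eqref{eq:ConsumptionOnly:lambda} (this does go through, since the proof of Theorem \ref{thm:Solution:ConsumptionOnly} establishes that $\int_0^T g(t,0;v_1)\,dt$ is strictly decreasing in $\lambda_1$ with limits $\infty$ and $0$). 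The paper instead performs the substitution \emph{first} and applies the IVT to the fully explicit function $f(x)=x-\int_0^T\chi(t)\left(v_0-x-F_2(0)\right)^{\frac{\hat b-1}{b(t)-1}}dt-F_1(0)$, whose derivative is manifestly positive and whose endpoint values are finite with opposite signs (using $v_0>F(0)$ and $F(0)=F_1(0)+F_2(0)$). The paper's ordering buys a cleaner boundary analysis that entirely sidesteps the implicit behaviour of $\lambda_1$; your ordering is conceptually closer to the first-order condition of Lemma \ref{lemma:EqualityConditionValueFunctions} and needs no explicit computation of $f'$. Both are valid, and your derivation of $\chi(t)$ and of $\lambda_1^\star$ matches the paper's.
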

For general $a(t)$ and $b(t)$, $v_{1}^{*}$ as the unique solution to Equation \eqref{eq:Separation:Optimalv1} can for instance be determined numerically. Denote by $v_{1}^{\star} \ge F_{1}(0),\ v_{2}^{\star} \ge F_{2}(0)$ with $v_{1}^{\star} + v_{2}^{\star} = v_{0}$ the optimal allocation of the initial wealth according to Lemma \ref{lemma:EqualityConditionValueFunctions:Solution} in what follows and denote $\lambda_{1}^{\star} = \lambda_{1}(v_{1}^{\star})$ and $\tilde{t}_{1}^{\star} = \tilde{t}_{1}(v_{1}^{\star})$. We use the individual solutions to the two separate Problems \eqref{eq:OptimizationProblem:ConsumptionOnly} and \eqref{eq:OptimizationProblem:TerminalWealthOnly} and merge both solutions optimally according to Lemma \ref{lemma:EqualityConditionValueFunctions:Solution} to obtain the solution to the original Problem \eqref{eq:OptimizationProblem}.

\begin{theorem} \label{thm:Solution:Merging:OriginalProblem}
The optimal wealth process is given by ${V^{\star}(t; v_{0}) = V_{1}(t;v_{1}^{\star}) + V_{2}(t;v_{2}^{\star})}$. The optimal controls to Problem \eqref{eq:OptimizationProblem} are 
\begin{align*}
c^{\star}(t; v_{0}) = c_{1}(t;v_{1}^{\star}),\ \hat{\pi}^{\star}(t; v_{0}) = \frac{\hat{\pi}_{1}(t;v_{1}^{\star}) V_{1}(t;v_{1}^{\star}) + \hat{\pi}_{2}(t;v_{2}^{\star}) V_{2}(t;v_{2}^{\star})}{V_{1}(t;v_{1}^{\star}) + V_{2}(t;v_{2}^{\star})}.
\end{align*}
The optimal controls and the optimal wealth process to Problem \eqref{eq:OptimizationProblem} under the utility function setup \eqref{eq:utilitymodel:new} are given by
\begin{align*}
\hat{\pi}^{\star}(t ; v_{0}) = {} & \Sigma^{-1} (\mu - r \mathbf{1}) \frac{\frac{1}{1 - b(\tilde{t}_{1}^{\star})} \left(V_{1}(t ; v_{1}^{\star}) - F_{1}(t)\right) + \frac{1}{1-\hat{b}} \left(V_{2}(t ; v_{2}^{\star}) - F_{2}(t)\right)}{V^{\star}(t ; v_{0})}, \\
c^{\star}(t;v_{0}) = {} & g(t,t; v_{1}^{\star}) \tilde{Z}(t)^{\frac{1}{b(t)-1}} + \bar{c}(t) = (1-b(t)) \left(\lambda_{1}^{\star} \frac{e^{\beta  t}}{a(t)} \tilde{Z}(t)\right)^{\frac{1}{b(t)-1}} + \bar{c}(t), \\
V^{\star}(t ; v_{0}) = {} & \int_{t}^{T} g(s,t; v_{1}^{\star}) \tilde{Z}(t)^{\frac{1}{b(s)-1}} ds + \left(v_{2}^{\star} - F_{2}(0)\right) e^{\frac{\hat{b}}{\hat{b}-1} \left(r - \frac{1}{2} \frac{1}{\hat{b}-1} \|\gamma\|^{2}\right) t} \tilde{Z}(t)^{\frac{1}{\hat{b}-1}} + F(t), \\
V^{\star}(T ; v_{0}) = {} & \left(v_{2}^{\star} - F_{2}(0)\right) e^{\frac{\hat{b}}{\hat{b}-1} \left(r - \frac{1}{2} \frac{1}{\hat{b}-1} \|\gamma\|^{2}\right) T} \tilde{Z}(T)^{\frac{1}{\hat{b}-1}} + F, \\
V_{1}(t ; v_{1}^{\star}) = {} & \int_{t}^{T} g(s,t; v_{1}^{\star}) \tilde{Z}(t)^{\frac{1}{b(s)-1}} ds + F_{1}(t), \\
V_{2}(t ; v_{2}^{\star}) = {} & \left(v_{2}^{\star} - F_{2}(0)\right) e^{\frac{\hat{b}}{\hat{b}-1} \left(r - \frac{1}{2} \frac{1}{\hat{b}-1} \|\gamma\|^{2}\right) t} \tilde{Z}(t)^{\frac{1}{\hat{b}-1}} + F_{2}(t) \text{,\ $\forall t \in [0,T]$, with}
\end{align*}
${g(s,t; v_{1}^{\star}) = \chi(s) e^{\frac{b(s)}{b(s)-1} \left(r - \frac{1}{2} \frac{1}{b(s)-1} \|\gamma\|^{2}\right) t} \left(v_{0} - v_{1}^{\star} - F_{2}(0)\right)^{\frac{\hat{b}-1}{b(s)-1}}}$, and ${\tilde{t}_{1}^{\star} = \tilde{t}_{1}(v_{1}^{\star}) \in (t,T)}$ solves
\begin{align*}
b(\tilde{t}_{1}^{\star}) = {} & 1 + \frac{\int_{t}^{T} g(s,t; v_{1}^{\star}) \tilde{Z}(t)^{\frac{1}{b(s)-1}} ds}{\int_{t}^{T} \frac{1}{b(s)-1} g(s,t; v_{1}^{\star}) \tilde{Z}(t)^{\frac{1}{b(s)-1}} ds}.
\end{align*}
For the optimal $(\hat{\pi}^{\star}(t ; v_{0}), c^{\star}(t;v_{0}))$, Equation \eqref{eq:BudgetConstraint:y} holds with equality.
\end{theorem}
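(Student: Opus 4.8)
The plan is to exploit the separation already in place rather than re-deriving anything from scratch. By Theorem~\ref{thm:ConnectionValueFunctions} the value function of the original problem equals $\sup_{v_1+v_2=v_0}\{\mathcal{V}_1(v_1)+\mathcal{V}_2(v_2)\}$, and by Lemma~\ref{lemma:EqualityConditionValueFunctions:Solution} this supremum is attained at the unique admissible split $(v_1^\star,v_2^\star)$ with $v_2^\star=v_0-v_1^\star$. Hence it suffices to show that the controls $(\hat\pi^\star,c^\star)$ assembled from the two sub-problem optima at budgets $v_1^\star$ and $v_2^\star$ are admissible for Problem~\eqref{eq:OptimizationProblem} and realize the value $\mathcal{V}_1(v_1^\star)+\mathcal{V}_2(v_2^\star)$; the closed-form expressions then come from substituting the explicit $\lambda_1^\star$ and $v_2^\star$ into the sub-problem formulas.

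First I would verify that $V^\star=V_1(\cdot;v_1^\star)+V_2(\cdot;v_2^\star)$ solves the wealth SDE~\eqref{eq:SDE:V:y} under $(\hat\pi^\star,c^\star)$. Writing $dV_1$ (carrying income $y$ and consumption $c_1$) and $dV_2$ (with $y\equiv0$, $c_2\equiv0$) and adding, the drift and diffusion risky-asset terms combine into functions of $\hat\pi_1V_1+\hat\pi_2V_2$; since $\hat\pi^\star$ is defined exactly as the wealth-weighted average $(\hat\pi_1V_1+\hat\pi_2V_2)/V^\star$, the dollar exposure $\hat\pi^\star V^\star$ matches and $c^\star=c_1$ and income $y$ reproduce~\eqref{eq:SDE:V:y} verbatim. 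Admissibility follows from $V_1(t;v_1^\star)>F_1(t)$ and $V_2(t;v_2^\star)>F_2(t)$ a.s. (Theorems~\ref{thm:Solution:ConsumptionOnly} and~\ref{thm:Solution:TerminalWealthOnly:WithoutProbabilityConstraint}): summing yields $V^\star(t)>F_1(t)+F_2(t)=F(t)$, so $V^\star(t)+\int_t^T e^{-r(s-t)}y(s)\,ds>\int_t^T e^{-r(s-t)}\bar c(s)\,ds+e^{-r(T-t)}F\ge0$, while $c^\star>\bar c$ and $V^\star(T)>F$ hold by construction. Using $V_1(T;v_1^\star)=0$ gives $V^\star(T)=V_2(T;v_2^\star)$, so the objective splits as $J(\hat\pi^\star,c^\star;v_0)=\IE[\int_0^T U_1(t,c_1)\,dt]+\IE[U_2(V_2(T))]=\mathcal{V}_1(v_1^\star)+\mathcal{V}_2(v_2^\star)=\mathcal{V}(v_0)$, which proves optimality.

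It then remains to turn the abstract optima into the displayed forms. I would insert the explicit $\lambda_1^\star$ from Lemma~\ref{lemma:EqualityConditionValueFunctions:Solution} into the expression for $g(s,t;v_1)$ of Theorem~\ref{thm:Solution:ConsumptionOnly}: raising $\lambda_1^\star\propto(v_0-v_1^\star-F_2(0))^{\hat b-1}$ to the power $1/(b(s)-1)$ produces the factor $(v_0-v_1^\star-F_2(0))^{(\hat b-1)/(b(s)-1)}$, and regrouping the exponential $e^{\beta s-b(s)(\cdots)(s-t)}$ together with the $e^{-[\cdots]T}$, $(1-\hat b)^{1-\hat b}$ and $\hat a$ factors collapses into $\chi(s)$ from~\eqref{eq:definition:chi} times $e^{\frac{b(s)}{b(s)-1}(r-\frac12\frac1{b(s)-1}\|\gamma\|^2)t}$. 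Plugging this $g(s,t;v_1^\star)$ into the $V_1,c_1,\hat\pi_1$ expressions and the $v_2^\star$-form of $V_2,\hat\pi_2$, then assembling $V^\star=V_1+V_2$ and $\hat\pi^\star=(\hat\pi_1V_1+\hat\pi_2V_2)/V^\star$ with $F(t)=F_1(t)+F_2(t)$, gives the stated formulas; the equation for $\tilde t_1^\star$ is just Equation~\eqref{eq:ConsumptionOnly:tautilde} solved algebraically for $b(\tilde t_1)$. Finally, the budget constraint~\eqref{eq:BudgetConstraint:y} holds with equality by adding the equalities in~\eqref{eq:BudgetConstraint:ConsumptionOnly:y} and~\eqref{eq:BudgetConstraint:TerminalWealthOnly} (using $V^\star(T)=V_2(T)$ and $v_1^\star+v_2^\star=v_0$).

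The conceptual crux is the first step: confirming that the wealth-weighted portfolio genuinely reproduces~\eqref{eq:SDE:V:y} so that the merged pair inherits admissibility with $c^\star>\bar c$ and $V^\star(T)>F$. Once this and the objective decomposition are secured, optimality is immediate from the earlier results. The main technical obstacle I expect is purely computational: keeping the three interacting exponential factors — the $(s-t)$-dependence in $g$, the $e^{-[\cdots]T}$ buried in $\lambda_1^\star$, and the discounting in $\chi(s)$ — organized so that the $t$-dependence cancels cleanly into the single factor $e^{\frac{b(s)}{b(s)-1}(\cdots)t}$ and the residual $s$-dependence regroups exactly into $\chi(s)$.
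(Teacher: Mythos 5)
Your proposal is correct and follows essentially the same route as the paper: merge the two sub-problem optima at the split $(v_1^\star,v_2^\star)$ from Lemma \ref{lemma:EqualityConditionValueFunctions:Solution}, identify $\hat{\pi}^{\star}$ and $c^{\star}$ by matching the diffusion and drift terms of $dV^{\star}=dV_{1}+dV_{2}$ against \eqref{eq:SDE:V:y}, check admissibility from $(\hat{\pi}_{1},c_{1})\in\Lambda_{1}$ and $(\hat{\pi}_{2},0)\in\Lambda_{2}$, and obtain the closed forms by substituting $\lambda_{1}^{\star}$ into $g$ so that it collapses to $\chi(s)$ times the stated exponential. Your added explicit optimality bookkeeping (objective splitting via $V_{1}(T;v_{1}^{\star})=0$ and the appeal to Theorem \ref{thm:ConnectionValueFunctions}) is exactly what the paper delegates to the preceding results, so there is no substantive difference.
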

It follows immediately that ${c_{1}(t;v_{1}) > \bar{c}(t)}$, a.s.. Theorem \ref{thm:Solution:Merging:OriginalProblem} furthermore proves that the general optimal relative investment strategy can be written as a mixture of a PPI and a CPPI strategy, but is not necessarily of a PPI or even CPPI type itself. The PPI comes from the consumption-only problem, see Theorem \ref{thm:Solution:ConsumptionOnly}, the CPPI arises as the solution to the terminal wealth-only problem, see Theorem \ref{thm:Solution:TerminalWealthOnly:WithoutProbabilityConstraint}. The way which of the two strategies dominates the overall optimal investment policy is initially determined by the wealth distribution through $v_{1}^{\star}$ and $v_{2}^{\star}$ and later through $V_{1}(t ; v_{1}^{\star})$ and $V_{2}(t ; v_{2}^{\star})$. The special case where the coefficient of risk aversion $b(t)$ from consumption equals the one from terminal wealth $\hat{b}$ at any time is covered by the next remark.

\begin{remark} \label{remark:Ye:case}
Assume $b(t) \equiv \hat{b}$ constant. Then the optimal controls turn into
\begin{align*}
\hat{\pi}^{\star}_{(b(t) \equiv \hat{b})}(t ; v_{0}) = {} & \frac{1}{1-\hat{b}} \Sigma^{-1} (\mu - r \mathbf{1}) \frac{V^{\star}_{(b(t) \equiv \hat{b})}(t ; v_{0}) - F(t)}{V^{\star}_{(b(t) \equiv \hat{b})}(t ; v_{0})}, \\
c^{\star}_{(b(t) \equiv \hat{b})}(t;v_{0}) = {} & \zeta(t) \left(V^{\star}_{(b(t) \equiv \hat{b})}(t ; v_{0}) -F(t)\right) + \bar{c}(t),
\end{align*}
with
\begin{align*}
\zeta(t) = \frac{\chi(t)}{\int_{t}^{T} \chi(s) ds + 1} > 0,
\end{align*}
where
\begin{align*}
\chi(t) = \left(\frac{\hat{a}}{a(t)}\right)^{\frac{1}{\hat{b}-1}} e^{- \frac{1}{\hat{b}-1} \left[\beta - \hat{b} \left(r - \frac{1}{2} \frac{1}{\hat{b}-1} \|\gamma\|^{2}\right)\right] (T-t)} > 0.
\end{align*}
The optimal investment strategy $\hat{\pi}^{\star}_{(b(t) \equiv \hat{b})}(t ; v_{0})$ now is a traditional CPPI strategy with floor $F(t)$ and constant multiple vector ${\frac{1}{1-\hat{b}} \Sigma^{-1} (\mu - r \mathbf{1})}$. The optimal consumption rate $c^{\star}_{(b(t) \equiv \hat{b})}(t;v_{0})$ is the sum of the consumption floor $\bar{c}(t)$ and the time-varying proportion $\zeta(t)$ of the cushion ${V^{\star}_{(b(t) \equiv \hat{b})}(t ; v_{0}) - F(t)}$ at time $t$. The fraction between the risky exposure (vector) and consumption is time-varying and it holds
\begin{align}
\frac{\hat{\pi}^{\star}_{(b(t) \equiv \hat{b})}(t ; v_{0})  V^{\star}_{(b(t) \equiv \hat{b})}(t ; v_{0})}{c^{\star}_{(b(t) \equiv \hat{b})}(t;v_{0})} =  \frac{1}{1-\hat{b}} \Sigma^{-1} (\mu - r \mathbf{1})  \left(\zeta(t) + \frac{\bar{c}(t)}{V^{\star}_{(b(t) \equiv \hat{b})}(t ; v_{0}) - F(t)}\right)^{-1}. \label{eq:fraction:b(t)constant}
\end{align}
Optimal consumption $c^{\star}_{(b(t) \equiv \hat{b})}(t;v_{0})$ as well as, under ${\Sigma^{-1} (\mu - r \mathbf{1}) > \mathbf{0}}$, optimal risky exposure \linebreak ${\hat{\pi}^{\star}_{(b(t) \equiv \hat{b})}(t ; v_{0}) V^{\star}_{(b(t) \equiv \hat{b})}(t ; v_{0})}$ linearly increase in the cushion ${V^{\star}_{(b(t) \equiv \hat{b})}(t ; v_{0}) - F(t)}$. Hence, the higher the surplus ${V^{\star}_{(b(t) \equiv \hat{b})}(t ; v_{0}) - F(t)}$, the more is invested risky and the more is consumed. The formula \eqref{eq:fraction:b(t)constant} shows that, under ${\Sigma^{-1} (\mu - r \mathbf{1}) > \mathbf{0}}$, an increase in the cushion ${V^{\star}_{(b(t) \equiv \hat{b})}(t ; v_{0}) - F(t)}$ leads to a stronger increase in the risky exposure ${\hat{\pi}^{\star}_{(b(t) \equiv \hat{b})}(t ; v_{0}) V^{\star}_{(b(t) \equiv \hat{b})}(t ; v_{0})}$ than in consumption $c^{\star}_{(b(t) \equiv \hat{b})}(t;v_{0})$. Therefore, for a larger surplus ${V^{\star}_{(b(t) \equiv \hat{b})}(t ; v_{0}) - F(t)}$, also the relative increase in the risky exposure is larger than the relative increase in consumption, thus investing money in stocks is preferred to consuming.

The associated optimal wealth process is given as a function of the pricing kernel
\begin{align*}
V^{\star}_{(b(t) \equiv \hat{b})}(t ; v_{0}) = \frac{1}{\zeta(t)} \tilde{Z}(t)^{\frac{1}{\hat{b}-1}} \left(v_{0} - F(0)\right) e^{\frac{\hat{b}}{\hat{b}-1} \left(r - \frac{1}{2} \frac{1}{\hat{b}-1} \|\gamma\|^{2}\right) t} \frac{\chi(t)}{\int_{0}^{T} \chi(t) dt + 1} + F(t).
\end{align*}
This special case result coincides with the findings by \cite{Ye2008}, who used the HJB approach, extended by additionally providing the optimal wealth process $V^{\star}_{(b(t) \equiv \hat{b})}(t ; v_{0})$.
\end{remark}
We aim to interpret the optimal $\hat{\pi}^{\star}(t ; v_{0})$ for time-varying $b(t)$ and particularly to point out the difference to constant $b(t)$ in Remark \ref{remark:Ye:case}. Writing ${V_{1}(t ; v_{1}^{\star}) = V^{\star}(t ; v_{0}) - V_{2}(t ; v_{2}^{\star})}$ where $V_{2}(t ; v_{2}^{\star})$ follows the wealth process of a standard CPPI strategy with floor $F_{2}(t)$ at time $t$ to the initial endowment $v_{2}^{\star}$ and constant multiplier vector ${\frac{1}{1-\hat{b}} \Sigma^{-1}  (\mu - r \mathbf{1})}$, we obtain the following representation of the optimal investment decision

\begin{align}
\hat{\pi}^{\star}(t ; v_{0}) = {} & \Sigma^{-1} (\mu - r \mathbf{1}) \frac{\frac{1}{1 - b(\tilde{t}_{1}^{\star})} \left(V^{\star}(t ; v_{0}) - V_{2}(t ; v_{2}^{\star}) - F_{1}(t)\right) + \frac{1}{1-\hat{b}} \left(V_{2}(t ; v_{2}^{\star}) - F_{2}(t)\right)}{V^{\star}(t ; v_{0})} \nonumber \\
= {} & \Sigma^{-1} (\mu - r \mathbf{1}) \left\{\frac{1}{1 - b(\tilde{t}_{1}^{\star})} \frac{V^{\star}(t ; v_{0}) - F_{1}(t)}{V^{\star}(t ; v_{0})} + \frac{\hat{b}- b(\tilde{t}_{1}^{\star})}{(1-\hat{b}) \left(1 - b(\tilde{t}_{1}^{\star})\right)} \frac{V_{2}(t ; v_{2}^{\star}) - \frac{1 - b(\tilde{t}_{1}^{\star})}{\hat{b}- b(\tilde{t}_{1}^{\star})} F_{2}(t)}{V^{\star}(t ; v_{0})}\right\} \nonumber \\
= {} & \Sigma^{-1} (\mu - r \mathbf{1}) \left\{\frac{1}{1 - b(\tilde{t}_{1}^{\star})} \frac{V^{\star}(t ; v_{0}) - F(t)}{V^{\star}(t ; v_{0})} + \frac{\hat{b} - b(\tilde{t}_{1}^{\star})}{(1-\hat{b}) \left(1 - b(\tilde{t}_{1}^{\star})\right)} \frac{V_{2}(t ; v_{2}^{\star}) - F_{2}(t)}{V^{\star}(t ; v_{0})}\right\} \nonumber \\
= {} & \Sigma^{-1} (\mu - r \mathbf{1}) \left\{\frac{1}{1 - b(\tilde{t}_{1}^{\star})} \frac{V^{\star}(t ; v_{0}) - F(t)}{V^{\star}(t ; v_{0})} + \frac{\hat{b} - b(\tilde{t}_{1}^{\star})}{(1-\hat{b}) \left(1 - b(\tilde{t}_{1}^{\star})\right)} \frac{V_{2}(t ; v_{2}^{\star})}{V^{\star}(t ; v_{0})} \frac{V_{2}(t ; v_{2}^{\star}) - F_{2}(t)}{V_{2}(t ; v_{2}^{\star})}\right\}, \label{eq:SeparationTechnique:pihat:2xPPI}
\end{align}
which can be implemented easily; $F(t)$ is defined in \eqref{eq:def:F(t)}. Formula \eqref{eq:SeparationTechnique:pihat:2xPPI} shows that the optimal relative allocation $\hat{\pi}^{\star}(t ; v_{0})$ can be written as a PPI strategy in $V^{\star}(t ; v_{0})$ with floor $F(t)$ plus a PPI strategy in $V_{2}(t ; v_{2}^{\star})$ with floor $F_{2}(t)$. Alternatively, write ${V_{2}(t ; v_{2}^{\star}) = V^{\star}(t ; v_{0}) - V_{1}(t ; v_{1}^{\star})}$, where $V_{1}(t ; v_{1}^{\star})$ is the replicating wealth process of a PPI strategy with floor $F_{1}(t)$ to the initial wealth $v_{1}^{\star}$ and now time- and state-varying multiplier vector $\frac{1}{1 - b(\tilde{t}_{1})} \Sigma^{-1} (\mu - r \mathbf{1})$ and, in contrast to $V_{2}(t ; v_{2}^{\star})$, a non-zero consumption rate process. Then $\hat{\pi}^{\star}(t ; v_{0})$ can be reformulated as
\begin{align}
\hat{\pi}^{\star}(t ; v_{0}) = {} & \Sigma^{-1} (\mu - r \mathbf{1}) \left\{\left(\frac{1}{1 - b(\tilde{t}_{1}^{\star})} - \frac{1}{1-\hat{b}}\right) \frac{V_{1}(t ; v_{1}^{\star}) - F_{1}(t)}{V^{\star}(t ; v_{0})} + \frac{1}{1-\hat{b}} \frac{V^{\star}(t ; v_{0}) - F(t)}{V^{\star}(t ; v_{0})}\right\} \nonumber \\
= {} & \Sigma^{-1} (\mu - r \mathbf{1}) \left\{\frac{1}{1-\hat{b}} \frac{V^{\star}(t ; v_{0}) - F(t)}{V^{\star}(t ; v_{0})} - \frac{\hat{b} - b(\tilde{t}_{1}^{\star})}{(1-\hat{b}) (1 - b(\tilde{t}_{1}^{\star}))} \frac{V_{1}(t ; v_{1}^{\star}) - F_{2}(t)}{V^{\star}(t ; v_{0})}\right\} \nonumber \\
\begin{split} \label{eq:SeparationTechnique:pihat:V1:1xCPPI:1xPPI}
= {} & \Sigma^{-1} (\mu - r \mathbf{1}) \left\{\frac{1}{1-\hat{b}} \frac{V^{\star}(t ; v_{0}) - F(t)}{V^{\star}(t ; v_{0})} - \frac{\hat{b} - b(\tilde{t}_{1}^{\star})}{(1-\hat{b}) (1 - b(\tilde{t}_{1}^{\star}))} \frac{V_{1}(t ; v_{1}^{\star})}{V^{\star}(t ; v_{0})} \frac{V_{1}(t ; v_{1}^{\star}) - F_{1}(t)}{V_{1}(t ; v_{1}^{\star})}\right\}.
\end{split}
\end{align}
This formula shows that the optimal relative investment $\hat{\pi}^{\star}(t ; v_{0})$ is the sum of a conventional CPPI strategy on $V^{\star}(t ; v_{0})$ with floor $F(t)$ and a PPI strategy on $V_{1}(t ; v_{1}^{\star})$ with floor $F_{1}(t)$.

Recall from Remark \ref{remark:Ye:case} that $\hat{\pi}^{\star}_{(b(t) \equiv \hat{b})}(t ; v_{0})$ for constant $b(t) \equiv \hat{b}$ follows a traditional CPPI strategy ${\frac{1}{1-\hat{b}} \Sigma^{-1} (\mu - r \mathbf{1}) \frac{V^{\star}_{(b(t) \equiv \hat{b})}(t ; v_{0}) - F(t)}{V^{\star}_{(b(t) \equiv \hat{b})}(t ; v_{0})}}$ to the floor $F(t)$. The formula for $\hat{\pi}^{\star}(t ; v_{0})$ in \eqref{eq:SeparationTechnique:pihat:V1:1xCPPI:1xPPI} shows that the optimal strategy $\hat{\pi}^{\star}(t ; v_{0})$ for time-varying $b(t)$ consists of two parts:
\begin{enumerate}
\item The first part coincides with $\hat{\pi}^{\star}_{(b(t) \equiv \hat{b})}(t ; v_{0})$ and is a traditional CPPI strategy ${\frac{1}{1-\hat{b}} \Sigma^{-1} (\mu - r \mathbf{1}) \frac{V^{\star}(t ; v_{0}) - F(t)}{V^{\star}(t ; v_{0})}}$ in $V^{\star}(t ; v_{0})$ to the floor $F(t)$.
\item The second, additional part is a time- and state-varying term which can be either positive, negative or zero; hence it can reduce or increase risky investments or can leave it unmodified in comparison with $\hat{\pi}^{\star}_{(b(t) \equiv \hat{b})}(t ; v_{0})$.
\end{enumerate}
It is the second part which leads to a deviation in $\hat{\pi}^{\star}(t ; v_{0})$ compared to $\hat{\pi}^{\star}_{(b(t) \equiv \hat{b})}(t ; v_{0})$. For this sake, we analyze this second piece in what follows. Note that by Theorem \ref{thm:Solution:ConsumptionOnly} it holds ${V_{1}(t ; v_{1}) > F_{1}(t)}$ a.s..
\begin{enumerate}
\item If ${V^{\star}(t ; v_{0}) > 0}$, for instance this is reasonable for ${v_{0} > 0}$ and an income rate that outweighs or exceeds consumption, then it follows
\begin{align*}
\frac{V_{1}(t ; v_{1}^{\star})}{V^{\star}(t ; v_{0})} \frac{V_{1}(t ; v_{1}^{\star}) - F_{1}(t)}{V_{1}(t ; v_{1}^{\star})} = \frac{V_{1}(t ; v_{1}^{\star}) - F_{1}(t)}{V^{\star}(t ; v_{0})} > 0.
\end{align*}
This implies for ${i = 1, \hdots, N}$ at time $t$:
\begin{enumerate}
\item $\hat{b} > b(\tilde{t}_{1}^{\star})$:
\begin{align*}
\left(\hat{\pi}^{\star}(t ; v_{0})\right)_{i} < \frac{1}{1-\hat{b}} \left(\Sigma^{-1} (\mu - r \mathbf{1})\right)_{i} \frac{V^{\star}(t ; v_{0}) - F(t)}{V^{\star}(t ; v_{0})}\ \Leftrightarrow\ \left(\Sigma^{-1} (\mu - r \mathbf{1})\right)_{i} > 0.
\end{align*}
\item $\hat{b} = b(\tilde{t}_{1}^{\star})$:
\begin{align*}
\hat{\pi}^{\star}(t ; v_{0}) = \frac{1}{1-\hat{b}} \Sigma^{-1} (\mu - r \mathbf{1}) \frac{V^{\star}(t ; v_{0}) - F(t)}{V^{\star}(t ; v_{0})}.
\end{align*}
\item $\hat{b} < b(\tilde{t}_{1}^{\star})$:
\begin{align*}
\left(\hat{\pi}^{\star}(t ; v_{0})\right)_{i} > \frac{1}{1-\hat{b}} \left(\Sigma^{-1} (\mu - r \mathbf{1})\right)_{i} \frac{V^{\star}(t ; v_{0}) - F(t)}{V^{\star}(t ; v_{0})}\ \Leftrightarrow\ \left(\Sigma^{-1} (\mu - r \mathbf{1})\right)_{i} < 0.
\end{align*}
\end{enumerate}
\item If ${V^{\star}(t ; v_{0}) < 0}$, for instance this is reasonable for ${v_{0} < 0}$ and a high demand for consumption in the past, then it follows
\begin{align*}
\frac{V_{1}(t ; v_{1}^{\star})}{V^{\star}(t ; v_{0})} \frac{V_{1}(t ; v_{1}^{\star}) - F_{1}(t)}{V_{1}(t ; v_{1}^{\star})} = \frac{V_{1}(t ; v_{1}^{\star}) - F_{1}(t)}{V^{\star}(t ; v_{0})} > 0.
\end{align*}
This in turn implies for ${i = 1, \hdots, N}$ at time $t$:
\begin{enumerate}
\item $\hat{b} > b(\tilde{t}_{1}^{\star})$:
\begin{align*}
\left(\hat{\pi}^{\star}(t ; v_{0})\right)_{i} > \frac{1}{1-\hat{b}} \left(\Sigma^{-1} (\mu - r \mathbf{1})\right)_{i} \frac{V^{\star}(t ; v_{0}) - F(t)}{V^{\star}(t ; v_{0})}\ \Leftrightarrow\ \left(\Sigma^{-1} (\mu - r \mathbf{1})\right)_{i} > 0.
\end{align*}
\item $\hat{b} = b(\tilde{t}_{1}^{\star})$:
\begin{align*}
\hat{\pi}^{\star}(t ; v_{0}) = \frac{1}{1-\hat{b}} \Sigma^{-1} (\mu - r \mathbf{1}) \frac{V^{\star}(t ; v_{0}) - F(t)}{V^{\star}(t ; v_{0})}.
\end{align*}
\item $\hat{b} < b(\tilde{t}_{1}^{\star})$:
\begin{align*}
\left(\hat{\pi}^{\star}(t ; v_{0})\right)_{i} < \frac{1}{1-\hat{b}} \left(\Sigma^{-1} (\mu - r \mathbf{1})\right)_{i} \frac{V^{\star}(t ; v_{0}) - F(t)}{V^{\star}(t ; v_{0})}\ \Leftrightarrow\ \left(\Sigma^{-1} (\mu - r \mathbf{1})\right)_{i} < 0.
\end{align*}
\end{enumerate}
\end{enumerate}
In particular, consider the situation ${V^{\star}(t ; v_{0}) > 0}$ and let ${\left(\Sigma^{-1} (\mu - r \mathbf{1})\right)_{i} > 0}$ hold for risky asset $i$. Under ${\hat{b} > b(\tilde{t}_{1}^{\star})}$, the optimal relative investment in stock $i$, which is $\left(\hat{\pi}^{\star}(t ; v_{0})\right)_{i}$, is reduced compared to the relative investment decision $\left(\hat{\pi}^{\star}_{(b(t) \equiv \hat{b})}(t ; v_{0})\right)_{i}$ under ${b(t) \equiv \hat{b}}$. Since ${\hat{b} > b(\tilde{t}_{1}^{\star})}$ can be interpreted as higher risk aversion for consumption than terminal wealth, this is meaningful.

In the situation ${V^{\star}(t ; v_{0}) < 0}$ the interpretation seems counterintuitive at first glance. But when looking at risky exposures rather than risky relative investments, analogue conclusions hold. The same approach shall be used when considering ${V^{\star}(t ; v_{0}) = 0}$.

Furthermore, it is worth to mention that $\hat{\pi}^{\star}(t ; v_{0})$ approaches $0$ when $V^{\star}(t ; v_{0})$ approaches $F(t)$, which can be observed in \eqref{eq:SeparationTechnique:pihat:2xPPI}; the argument is the following: When $V^{\star}(t ; v_{0})$ falls towards $F(t)$, then automatically $V_{1}(t ; v_{1}^{\star})$ approaches $F_{1}(t)$ and $V_{2}(t ; v_{2}^{\star})$ converges towards $F_{2}(t)$ simultaneously, since $V^{\star}(t ; v_{0}) = V_{1}(t ; v_{1}^{\star}) + V_{2}(t ; v_{2}^{\star})$, ${F(t) = F_{1}(t) + F_{2}(t)}$ and ${V_{1}(t ; v_{1}^{\star}) > F_{1}(t)}$, ${V_{2}(t ; v_{2}^{\star}) > F_{2}(t)}$ a.s. which was already shown in Sections \ref{sec:ConsumptionProblem:y} and \ref{sec:TerminalWealthProblem}. We moreover proved that in this case ${\hat{\pi}_{1}(t;v_{1}^{\star})}$ and ${\hat{\pi}_{2}(t;v_{2}^{\star})}$ approach $0$. By Theorem \ref{thm:Solution:Merging:OriginalProblem} it follows that also $\hat{\pi}^{\star}(t; v_{0})$ must converge to $0$. Therefore, as ${v_{0} > F(0)}$ is assumed, it follows that ${V^{\star}(t ; v_{0}) > F(t)}$ a.s., which can additionally be seen in the respective formula in Theorem \ref{thm:Solution:Merging:OriginalProblem}, and the optimal decision rules provide portfolio insurance over the whole life-cycle. $F(t)$ is called the minimum asset wealth level, it holds $F(T) = F$.

The optimal exposure to the risky assets equals the sum of the optimal risky exposures of the two sub-problems
\begin{align*}
\hat{\pi}^{\star}(t; v_{0}) V^{\star}(t ; v_{0}) = \hat{\pi}_{1}(t;v_{1}^{\star}) V_{1}(t;v_{1}^{\star}) + \hat{\pi}_{2}(t;v_{2}^{\star}) V_{2}(t;v_{2}^{\star})
\end{align*}
and by the findings in Sections \ref{sec:ConsumptionProblem:y} and \ref{sec:TerminalWealthProblem} it holds
\begin{align*}
\left(\hat{\pi}^{\star}(t ; v_{0}) V^{\star}(t ; v_{0})\right)_{i} > 0\ \Leftrightarrow\ \left(\Sigma^{-1} (\mu - r \mathbf{1})\right)_{i} > 0.
\end{align*}
For the ease of exposition we so far assumed that the income process is deterministic. The following remark shows the solution for a stochastic income process.

\begin{remark}[Stochastic income process] \label{remark:y(t)stochastic}
Let $(y(t))_{t \in [0,T]}$ be a non-negative, stochastic income-rate process with $\int_{0}^{T} y(t) dt < \infty$, $\IP$-a.s.. The stated results are still valid after replacing integrals of the form ${\int_{t}^{T} e^{-r (s-t)} y(s) ds}$ by the more general conditional expectation \linebreak ${\IE\left[\int_{t}^{T} \frac{\tilde{Z}(s)}{\tilde{Z}(t)} y(s) ds \Big| \mathcal{F}_{t}\right] = \int_{t}^{T} \IE\left[\frac{\tilde{Z}(s)}{\tilde{Z}(t)} y(s) \Big| \mathcal{F}_{t}\right] ds = \int_{t}^{T} e^{-r (s-t)} \IE_{\IQ}\left[y(s) \Big| \mathcal{F}_{t}\right] ds}$ by the Bayes formula for arbitrary ${t \in [0,T]}$, in particular in the definition of $F_{1}(t)$ and $F(t)$. If $(y(t))_{t \in [0,T]}$ is supposed to be independent to $\CMcal{F}$, i.e. independent to the market stochastics, then the conditional expectation ${\IE\left[\int_{t}^{T} \frac{\tilde{Z}(s)}{\tilde{Z}(t)} y(s) ds \Big| \mathcal{F}_{t}\right]}$ can be reduced to ${\int_{t}^{T} e^{-r (s-t)} \IE_{\IQ}\left[y(s)\right] ds}$. For the lower bounds of $v_{0}$ and $v_{1}$, \eqref{eq:Condition:v0:minimalrequirement} and \eqref{eq:Condition:v1_&_eq:def:F1(t)} need to be replaced by
\begin{align*} 
v_{0} > {} & \int_{0}^{T} e^{- r s} \left(\bar{c}(s) - \bar{y}(s)\right) ds + e^{- r T} F, \\
v_{1} > {} & \int_{0}^{T} e^{- r s} \left(\bar{c}(s) - \bar{y}(s)\right) ds,
\end{align*}
where ${\bar{y}(s) = \sup \left\{x \ge 0: \IP(y(s) \ge x) = 1\right\}}$ denotes the minimal level of income; ${\bar{y}(s) > 0}$ is meaningful due to unemployment benefits paid by the government.
\end{remark}

\section{Analysis of optimal controls and wealth process: A case study}
\label{sec:NCS}
This section targets to calibrate the life-cycle model to realistic time-dependent structures for consumption and investment observed in practice and outline the difference between our presented solution with age-depending $a(t)$ and $b(t)$ functions and the models with either only $a(t)$ or $b(t)$ time-varying or none. Hence, we not only estimate $\hat{b}$, $a(t)$ and $b(t)$ for our model, but additionally provide the respective estimates when $a(t)$ or $b(t)$, or both, are assumed to be constants. A comparison of the fit of the different models allows for making a statement on the accuracy of the models in describing the agent's behavior. For notational convenience we call the three benchmark models as follows:
\begin{itemize}
\item $M_{a,b(t)}$: $a(t) \equiv a$ constant, $b(t)$ time-varying
\item $M_{a(t),b}$: $a(t)$ time-varying, $b(t) \equiv b$ constant
\item $M_{a,b}$: $a(t) \equiv a$ and $b(t) \equiv b$ constant
\end{itemize}
The subscript thus indicates whether $a(t)$ or $b(t)$ are age-varying. Therefore, our model is denoted by $M_{a(t),b(t)}$. As already indicated in Section \ref{sec:Introduction}, $M_{a,b(t)}$ is (partially) covered by \cite{Steffensen2011}, \cite{Hentschel2016} and \cite{Aase2017}, $M_{a(t),b}$ and $M_{a,b}$ are covered by \cite{Ye2008}.

In the later Subsection \ref{sec:ComparisonCRRA}, we additionally analyze the impact of the floors $\bar{c}(t)$ and $F$, where our model $M_{a(t),b(t)}$ is compared to the same model but with CRRA utility functions, i.e. $\bar{c}(t) \equiv 0$ and $F \equiv 0$. The CRRA model is denoted by $M_{a(t),b(t)}^{CRRA}$ and is (partially) considered by \cite{Steffensen2011}, \cite{Hentschel2016} and \cite{Aase2017}.

\subsection{Assumptions}
We assume an exemplary agent with average income, liabilities etc. A similar case study can be carried out for a pension cohort, but for simplicity and data availability we consider an individual client. In detail, we make the following (simplifying) assumptions:

Let the market consist of one risk-free and one risky asset ($N = 1$) with parameters ${r = 0.5 \%}$, ${\mu = 5 \%}$, and ${\sigma = 20 \%}$; these values correspond approximately to the EURONIA Overnight Rate and the performance of the DAX 30 Performance Index as an equity index over the $11$ year period from 17 October 2007 to 17 October 2018. The risky asset can coincide with, but is not restricted to a pure equity portfolio. In general it can be any arbitrary given portfolio which consists of risky assets. The price process of the risky asset is assumed to be ${P(t) = p_{1} e^{(\mu - \frac{1}{2} \sigma^2) t + \sigma W(t)} = p_{1} e^{\frac{1}{2} (\mu + r) (1 - \frac{\sigma}{\gamma})} \tilde{Z}(t)^{- \frac{\sigma}{\gamma}}}$ with initial price ${P(0) = p_{1} = 100}$. Furthermore, let ${T = 40}$ years be the time to retirement, $t = 25$ years the current age of the investor and $65$ years the age of retirement. For the net salary function it is assumed ${y(t) = \frac{\tilde{r}}{e^{\tilde{r}} - 1} y_{0} e^{\tilde{r} t}}$ with $y_{0} = 26,200$ EUR and $\tilde{r} = 2.07 \%$. This corresponds to a net annual starting salary approximately equal to the average for a graduate in Germany in 2017 (cf. online portals \cite{Absolventa2018} or \cite{Stepstone2017}), with an annual increase equal to the average for a household's net salary in Germany over years 2011 to 2016 according to \cite{StatistischesBundesamtEinkommen2018}. Net income accumulated over the first year is ${\int_{0}^{1} y(t)dt = y_{0}}$ and income accumulated within the year from time $s$ to $s+1$ is ${\int_{s}^{s+1} y(t)dt = \frac{\tilde{r}}{e^{\tilde{r}} - 1} y_{0} \frac{e^{\tilde{r} (s+1)} - e^{\tilde{r} s}}{\tilde{r}} = y_{0} e^{\tilde{r} s}}$.

For the agent's utility functions, let ${\beta = 3 \%}$ (cf. \cite{Ye2008}) and ${\hat{a} = 1}$. Let the terminal wealth floor be ${F = 435,125}$ EUR which is motivated by the following argument: According to \cite{StatistischesBundesamtJahrbuch2017}, \cite{aktuare2017} or \cite{WKO2016} a lifetime around $81$ years can be expected for a currently $25$ year old person in Germany. Thus survival of $81 - 65 = 16$ years are expected after retirement at the age of $65$. We assume that the agent secures the income inflow during retirement to be $75 \%$ of the last wage paid from year $64$ to $65$ (replacement ratio of $75 \%$), which is ${\int_{39}^{40} y(t)dt = y_{0} e^{39 \tilde{r}} = 58,736}$ EUR. Assume that every year, half of this amount is covered by a separate pension account or plan, e.g. provided by the government. In addition, the agent wants to secure against longevity risk, hence considers $16 \times (100 + 30) \% = 20.8$ years instead of $16$ years for the remaining lifetime after the age of retirement. Thus, $F$ as value at time $T$ is chosen to be ${F = \int_{0}^{20.8} \frac{0.75 \times 58,736 \text{ EUR}}{2} e^{-r t} dt = \frac{0.75 \times 58,736 \text{ EUR}}{2} \left(\frac{1 - e^{- 20.8 \times r}}{r}\right) = 435,125 \text{ EUR}}$. Finally, the function for the net consumption floor is supposed to take the form ${\bar{c}(t) = \frac{\bar{r}}{e^{\bar{r}} - 1} \bar{c}_{0} e^{\bar{r} t}}$ with $\bar{c}_{0} = 14,880$ EUR and $\bar{r} = 1.93 \%$. This corresponds to a starting value equal to approximately $50 \%$ of the average household consumption in Germany in 2016 as starting point, with an annual increase equal to the increase in average household consumption in Germany over years 2011 to 2016 (published by \cite{StatistischesBundesamtEinkommen2018}). Minimum consumption expenses incurred within the first year is ${\int_{0}^{1} \bar{c}(t)dt = \bar{c}_{0}}$, within year $s$ to $s+1$ is ${\int_{s}^{s+1} \bar{c}(t)dt = \bar{c}_{0} e^{\bar{r} s}}$. The assumed income and consumption floor rates are visualized in Figure \ref{fig:Analysis:cbar:y}.

\begin{figure}[h] \center
\subfigure{\includegraphics[width=0.5\textwidth]{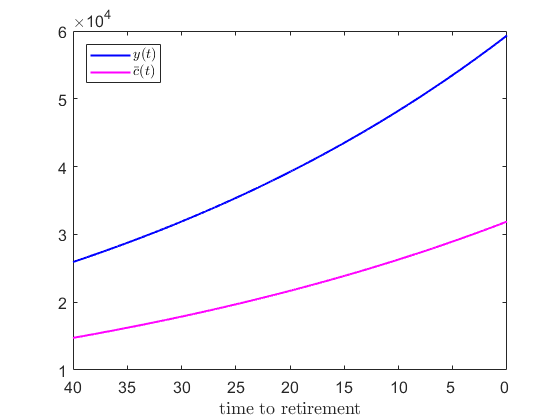}}
\caption{Income rate $y(t)$ and consumption floor rate $\bar{c}(t)$ (in EUR).}
\label{fig:Analysis:cbar:y}
\end{figure}

\subsection{Fitting / Calibration under exponential preferences and discussion}
In what follows we calibrate the remaining utility parameters $\hat{b}$, $a(t)$ and $b(t)$ to suitable curves for consumption and relative allocation. The targeted curves for parameter fitting are summarized by Table \ref{tab:Analysis:SamplePoints}. The consumption rate $c^{\star}(t;v_{0})$ is calibrated with respect to the hump-shaped type observed by \cite{Carroll1997}, \cite{GourinchasParker2002}, \cite{JensenSteffensen2015} and \cite{TangPurcalZhang2018}. The relative risky investment $\hat{\pi}^{\star}(t ; v_{0})$ is calibrated towards the $(100 - \text{age}) \%$ rule of thumb; a similar structure is frequently applied by financial advisors and asset management companies for life-cycle funds (see \cite{Malkiel1990}, \cite{BodieCrane1997}, \cite{Shiller2005}, \cite{Minderhoud2011}, \cite{Milliman2010}, \cite{Shafir2013}). Following this popular rule, the client at age $25$ years starts with a $75 \%$ equity investment, linearly decreases it by her age such that she ends with a $35 \%$ investment in equities at the age of retirement with $65$ years. We would like to mention that in particular relative risky investment curves or products provided by asset management companies are to be understood deterministic, i.e. wealth- / state-independent. Therefore, we calibrate the remaining unknown parameters with respect to the expected values for consumption and risky relative investment. In more detail, we fit the expected value for consumption, which is ${\IE\left[c^{\star}(t;v_{0})\right]}$, to the given consumption curve. For ${\IE\left[\hat{\pi}^{\star}(t ; v_{0})\right]}$ we apply the following estimate: we estimate the risky exposure ${\IE\left[\hat{\pi}^{\star}(t ; v_{0}) V^{\star}(t ; v_{0})\right]}$ without any bias and then replace $V^{\star}(t ; v_{0})$ by its unbiased expectation ${\IE\left[V^{\star}(t ; v_{0})\right]}$ to obtain the estimate ${\frac{\IE\left[\hat{\pi}^{\star}(t ; v_{0}) V^{\star}(t ; v_{0})\right]}{\IE\left[V^{\star}(t ; v_{0})\right]}}$ for ${\IE\left[\hat{\pi}^{\star}(t ; v_{0})\right]}$. By doing this we replace ${\IE\left[\hat{\pi}^{\star}(t ; v_{0})\right]}$ by ${\frac{\IE\left[\hat{\pi}^{\star}(t ; v_{0}) V^{\star}(t ; v_{0})\right]}{\IE\left[V^{\star}(t ; v_{0})\right]}}$ and fit the latter expression to the given linear relative investment curve. For further readings on deterministic investment strategies we refer to \cite{ChristiansenSteffensen2013} and \cite{ChristiansenSteffensen2018}. In summary, we have unbiased estimates for the expected values of optimal consumption, risky exposure and wealth process, and a modified estimate for the expectation of the optimal relative risky investment.

\begin{table}[t]
\raggedright
{
    \renewcommand{\arraystretch}{1.75}
    \begin{tabularx}{\textwidth}{| p{5.5cm} | X |}
        \hline
        $\hat{\pi}^{\star}(t ; v_{0})$ & $c^{\star}(t;v_{0})$  \\ \hline
        $y(t) = \frac{100 - (t+25)}{100}$ in EUR, ${t \in [0,T]}$ & $c(t) = -25 (t - 26)^{2} + 37,732$ in EUR, ${t \in [0,T]}$ \\
        $(100 - \text{age}) \%$ rule (total stock ratio) & thus ${c(0) = 20,832}$ EUR (${= 70 \%}$ of average household consumption in Germany in 2016, cf. \cite{StatistischesBundesamtEinkommen2018}, as starting consumption rate), turning point at $t = 26$ (age $51$) with a maximum targeted consumption of $37,732$ EUR. \\ \hline
    \end{tabularx}
}
\caption{Target curves for calibration.}
\label{tab:Analysis:SamplePoints}
\end{table}

Let $a(t)$ and $b(t)$ take the form of an exponential function, i.e. ${a(t) = a_{0} e^{\lambda_{a} t}}$ and ${b(t) = b_{0} e^{\lambda_{b} t}}$. Moreover, let $v_{0} = 250,000$ EUR. The estimation is carried out via the Matlab function \textit{lsqcurvefit} which solves nonlinear curve-fitting (data-fitting) problems in a least-squares sense and minimizes the sum of the squared relative distances. The underlying time points for target consumption and allocation are set weekly on an equidistant grid which yields $2,080$ points in the time interval $[0,T]$ with ${T = 40}$.

Table \ref{tab:Analysis:ParameterEstimation:Error:Benchmark:RichInvestor} gives an overview of the estimated utility parameters and provides the sum of squared relative errors as a quality criterion. The errors show that considering age-depending functions $a(t)$ and $b(t)$ simultaneously in model $M_{a(t),b(t)}$ leads to a comparatively huge improvement in accuracy of the fit compared to any of the three benchmark models: model $M_{a(t),b(t)}$ sum of squared relative distances is only $19.38 \%$ of the respective sum for model $M_{a(t),b}$ which provides the second best fit in terms of sum of squared relative residuals.

\begin{table}[h]
\raggedright
{
    \renewcommand{\arraystretch}{1.75}
    \begin{tabularx}{\textwidth}{| p{1.5cm} || p{5cm}  | X | X | X |}
        \hline
        & Sum of squared relative distances & $\hat{b}$ & $a(t)$ & $b(t)$  \\ \hline\hline
        $M_{a(t),b(t)}$ & $6.0425$ & $-0.9849$ & ${a_{0} = 5.2864 \times 10^{7}}$, ${\lambda_{a} = -0.6673}$ & ${b_{0} = -4.9731}$, ${\lambda_{b} = -0.0340}$ \\ \hline
       $M_{a,b(t)}$ & $31.3157$ & $-0.8325$ & ${a_{0} = 0.7997 \times 10^{7}}$, ${\lambda_{a} := 0}$ & ${b_{0} = -4.0243}$, ${\lambda_{b} = 0.0012}$ \\ \hline
       $M_{a(t),b}$ & $31.1801$ & $-0.8344$ & ${a_{0} = 1.8187 \times 10^{7}}$, ${\lambda_{a} = -0.0363}$ & ${b_{0} = -4.1441}$, ${\lambda_{b} := 0}$ \\ \hline
       $M_{a,b}$ & $33.5350$ & $-0.8247$ & ${a_{0} = 0.3425 \times 10^{7}}$, ${\lambda_{a} := 0}$ & ${b_{0} = -3.9697}$, ${\lambda_{b} := 0}$ \\ \hline
    \end{tabularx}
}
\caption{Estimated parameters and sum of squared relative residuals.}
\label{tab:Analysis:ParameterEstimation:Error:Benchmark:RichInvestor}
\end{table}

Figure \ref{fig:Analysis:InputFunctions:Benchmark:RichInvestor} visualizes the fitted parameters and preference functions $\hat{b}$, $a(t)$, $b(t)$. The table and figure show that the estimated coefficient of risk aversion $\hat{b}$ for our model $M_{a(t),b(t)}$ is more negative, which means a higher risk aversion, compared to the three benchmark models $M_{a,b(t)}$, $M_{a(t),b}$, $M_{a,b}$. Furthermore, $a(t)$ is decreasing both within model $M_{a(t),b(t)}$ and $M_{a(t),b}$. In contrast, $b(t)$ increases in model $M_{a(t),b(t)}$ over time whereas it decreases in the comparison model $M_{a,b(t)}$. $b(t)$ in models $M_{a,b(t)}$, $M_{a(t),b}$, $M_{a,b}$ stay very close over the whole life-cycle whereas $b(t)$ in $M_{a(t),b(t)}$ starts more negative and ends less negative. In summary, this means that in model $M_{a(t),b(t)}$ the risk aversion decreases through increasing $b(t)$, but preference of the investor between consumption and terminal wealth is shifted more and more to terminal wealth through decreasing $a(t)$.

\begin{figure}[h] \center
\subfigure[$a(t)$.]{\includegraphics[width=0.5\textwidth]{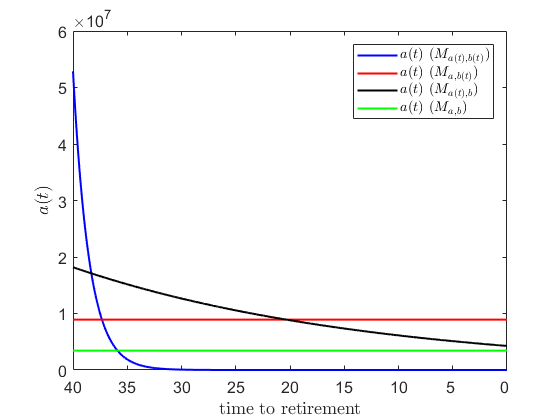}}\hfill
\subfigure[$b(t)$.]{\includegraphics[width=0.5\textwidth]{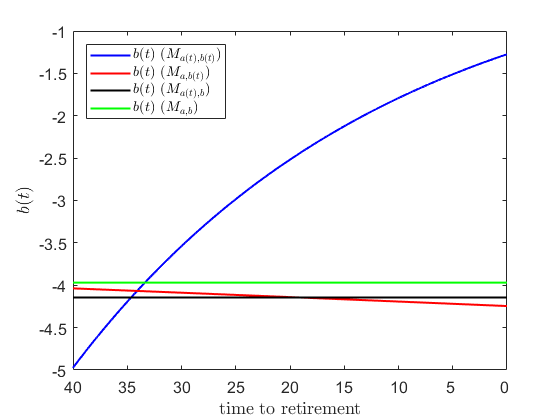}}
\caption{Estimated preference functions $a(t)$ and $b(t)$.}
\label{fig:Analysis:InputFunctions:Benchmark:RichInvestor}
\end{figure}

Figure \ref{fig:Analysis:FittedConsumptionInvestment:Benchmark:RichInvestor} illustrates the expected optimal consumption rate and relative risky investment for the fitted parameters in comparison with the given target policies or average profile. In addition to Table \ref{tab:Analysis:ParameterEstimation:Error:Benchmark:RichInvestor} the figure illustrates that, under exponential preferences $a(t)$ and $b(t)$, only the most flexible model $M_{a(t),b(t)}$ provides an accurate and precise fit for both consumption rate and risky relative allocation. We realize that the benchmark models $M_{a,b(t)}$, $M_{a(t),b(t)}$, $M_{a,b}$ apparently do not provide enough flexibility to simultaneously describe the predetermined consumption and relative allocation curves. Whereas the fits for the relative investment $\hat{\pi}^{\star}(t ; v_{0})$ look acceptable, all three benchmark models fail in explaining the targeted consumption rate $c^{\star}(t;v_{0})$. We further notice that $c^{\star}(t;v_{0})$ and $\hat{\pi}^{\star}(t ; v_{0})$ for the models $M_{a,b(t)}$ and $M_{a(t),b}$ are very similar (red and black lines in the respective figures).

In summary, Table \ref{tab:Analysis:ParameterEstimation:Error:Benchmark:RichInvestor} and Figure \ref{fig:Analysis:FittedConsumptionInvestment:Benchmark:RichInvestor} demonstrate that model $M_{a(t),b(t)}$ is the only one among our considered models which provides enough flexibility to model a hump-shaped consumption decision curve besides a linear risky allocation curve. All three benchmark models, which disregard time-dependency of $a(t)$ or $b(t)$ or both, do not lead to a satisfactory fit. In addition, fitting optimal consumption of the four models to the given consumption curve, while ignoring relative investments, shows the same picture. The result is that the sum of the squared distances associated with model $M_{a(t),b(t)}$ is only $21.26 \%$ of the respective sum associated with the second best model $M_{a,b(t)}$. This supports our findings and conclusion that time-varying preference parameters are indeed needed to model the given time-dependent hump-shaped consumption and linear risky allocation in an accurate way.

In addition to the parameter estimation for the expected path, we provide the figures for optimal consumption, risky relative portfolio and wealth process of all four models under two representative scenarios: a mostly upward (see Figure \ref{fig:Analysis:FittedConsumptionInvestment:IncreasingPath:RichInvestor}) and a mostly downward (see Figure \ref{fig:Analysis:FittedConsumptionInvestment:DecreasingPath:RichInvestor}) moving path for the underlying stock. The corresponding expected paths for the consumption rate, the relative risky investment and the wealth process can be found in Figure \ref{fig:Analysis:FittedConsumptionInvestment:Benchmark:RichInvestor}.

In the increasing stock price case optimal consumption and risky relative allocation for model $M_{a(t),b(t)}$ stay very close to the targeted curve since the corresponding wealth stays close to its expected path and shows some reverting behavior. For a stronger increasing underlying price process, consumption exceeds the given consumption curve for the expected path. When the stock price decreases, then optimal consumption and risky allocation for model $M_{a(t),b(t)}$ fall below the target curves after approximately $15$ to $20$ years. In particular higher consumption can no longer be afforded due to a poorly performing equity market. This goes hand in hand with a reduction on the relative risky allocation.

At first glance, it seems that there is a big difference in optimal consumption between our model $M_{a(t),b(t)}$ and the three benchmark models $M_{a,b(t)}$, $M_{a(t),b}$ and $M_{a,b}$ while optimal risky investments and wealth paths for all four models remain in a quite narrow area, although deviation of risky investments from its target curve can be high. This is due to different scales for wealth and consumption. Figure \ref{fig:Analysis:Difference:c:pi:V:exp} visualizes the differences, denoted by $\Delta$, in the fitted consumption and relative risky investment and the corresponding wealth process for the three benchmark models to our model within the expected path situation. It can be observed that relative risky allocation $\hat{\pi}^{\star}(t;v_{0})$ of model $M_{a(t),b(t)}$ exceeds the ones associated with the three benchmark models in the first half of the considered period of $40$ years by up to eight percentage points, and falls below in the second half. Moreover, the difference looks monotone decreasing in age. Furthermore, the wealth process which corresponds to model $M_{a(t),b(t)}$ outperforms the three benchmark models in the first half, but provides a lower wealth in the second half due to a higher consumption rate from approx. year $8$ to $30$, with a certain recovery in the wealth close to retirement.

The two exemplary scenarios and the expected development situation which was used for fitting show that the benchmark models $M_{a,b(t)}$, $M_{a(t),b}$ and $M_{a,b}$ overestimate the given consumption curve in early and older years (close to $t = 0$ and $t = 40$) and underestimate it in between. For our model $M_{a(t),b(t)}$, the optimal consumption rate stays very close to its target curve until consumption cannot be afforded anymore because of a low wealth as result of a strong market decline. We conclude that especially within phases of poor stock performance, both $c^{\star}(t;v_{0})$ and $\hat{\pi}^{\star}(t ; v_{0})$ can deviate a lot from their given curves.

\begin{figure}[h] \center
\subfigure[Fitted consumption rate.]{\includegraphics[width=0.5\textwidth]{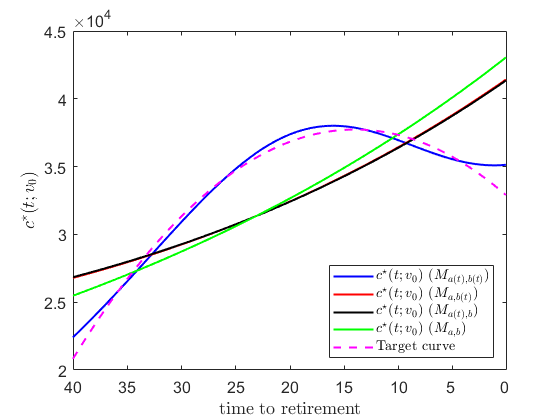}}\hfill
\subfigure[Fitted relative risky investment.]{\includegraphics[width=0.5\textwidth]{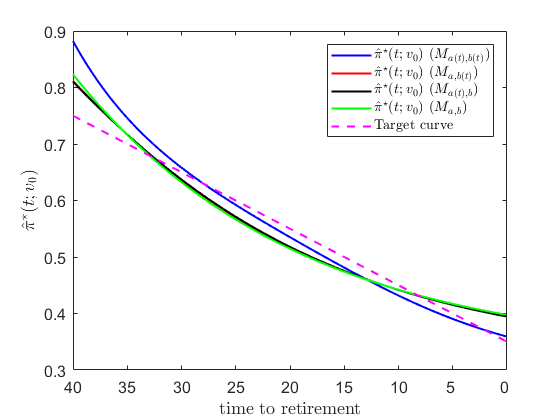}} \\
\subfigure[${\IE[V^{\star}(t;v_{0})]}$.]{\includegraphics[width=0.5\textwidth]{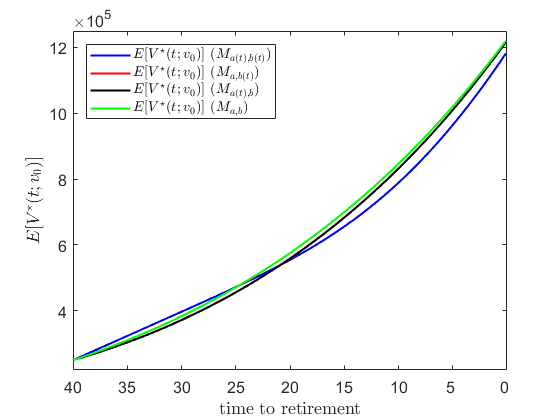}}\hfill
\subfigure[${\IE[P(t)] = p_{1} e^{\mu t}}$.]{\includegraphics[width=0.5\textwidth]{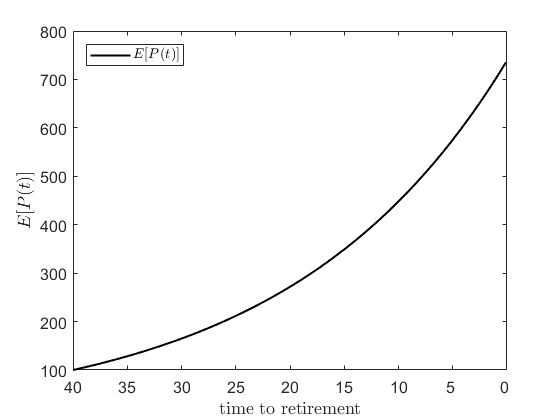}}
\caption{Fitted expected consumption rate $c^{\star}(t;v_{0})$ and relative risky investment $\hat{\pi}^{\star}(t ; v_{0})$, expected wealth process ${\IE[V^{\star}(t;v_{0})]}$ and stock price process ${\IE[P(t)]}$.}
\label{fig:Analysis:FittedConsumptionInvestment:Benchmark:RichInvestor}
\end{figure}

\begin{figure}[t] \center
\subfigure[Optimal consumption rate $c^{\star}(t;v_{0})$.]{\includegraphics[width=0.49\textwidth]{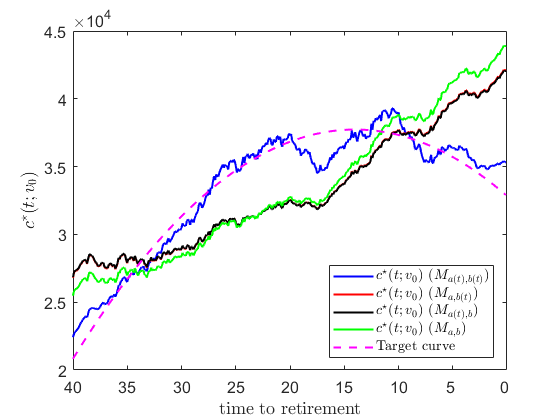}}\hfill
\subfigure[Optimal risky relative investment $\hat{\pi}^{\star}(t;v_{0})$.]{\includegraphics[width=0.49\textwidth]{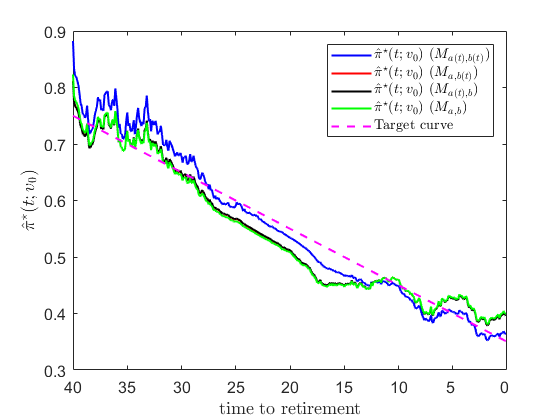}} \\
\subfigure[Optimal wealth $V^{\star}(t;v_{0})$.]{\includegraphics[width=0.49\textwidth]{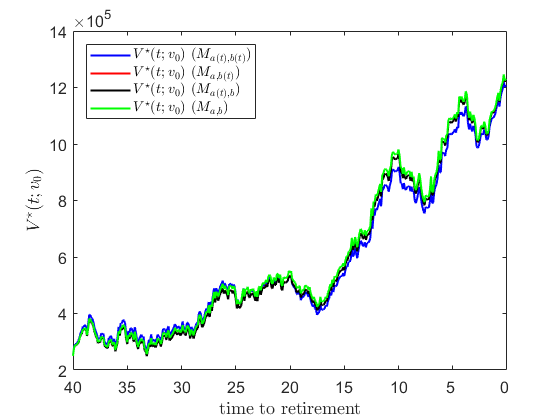}}\hfill
\subfigure[$P(t)$.]{\includegraphics[width=0.49\textwidth]{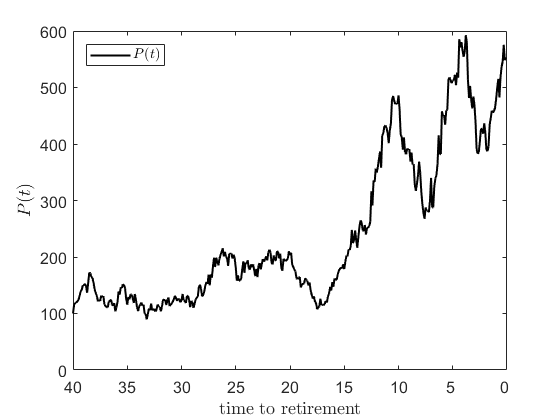}}
\caption{Optimal consumption, risky relative investment strategy and wealth under an increasing risky asset price process.}
\label{fig:Analysis:FittedConsumptionInvestment:IncreasingPath:RichInvestor}
\end{figure}

\begin{figure}[t] \center
\subfigure[Optimal consumption rate $c^{\star}(t;v_{0})$.]{\includegraphics[width=0.49\textwidth]{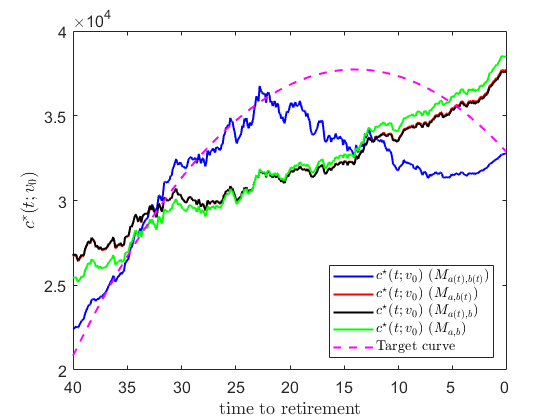}}\hfill
\subfigure[Optimal risky relative investment $\hat{\pi}^{\star}(t;v_{0})$.]{\includegraphics[width=0.49\textwidth]{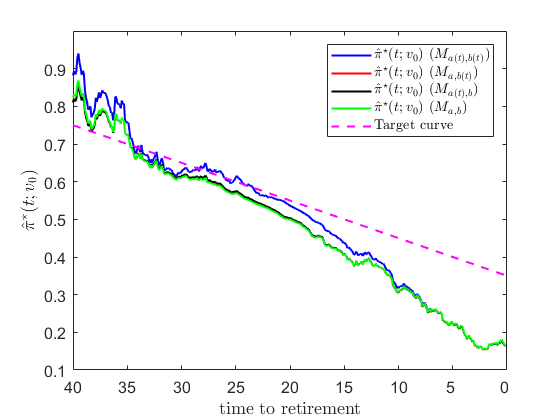}} \\
\subfigure[Optimal wealth $V^{\star}(t;v_{0})$.]{\includegraphics[width=0.49\textwidth]{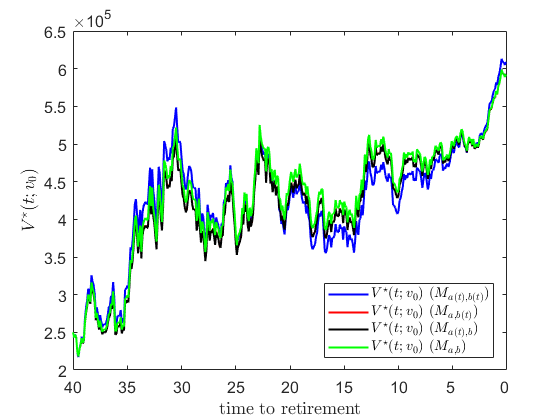}}\hfill
\subfigure[$P(t)$.]{\includegraphics[width=0.49\textwidth]{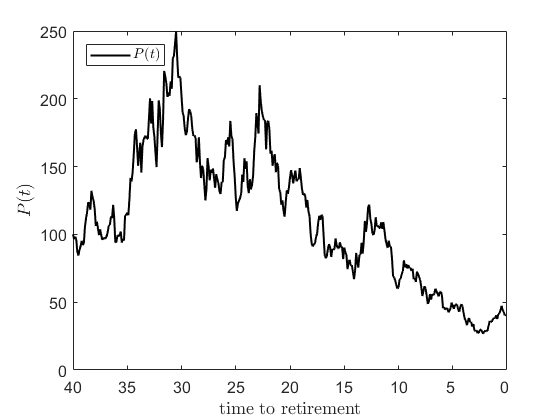}}
\caption{Optimal consumption, risky relative investment strategy and wealth under a decreasing risky asset price process.}
\label{fig:Analysis:FittedConsumptionInvestment:DecreasingPath:RichInvestor}
\end{figure}

\begin{figure}[h] \center
\subfigure[Difference in fitted consumption rate.]{\includegraphics[width=0.5\textwidth]{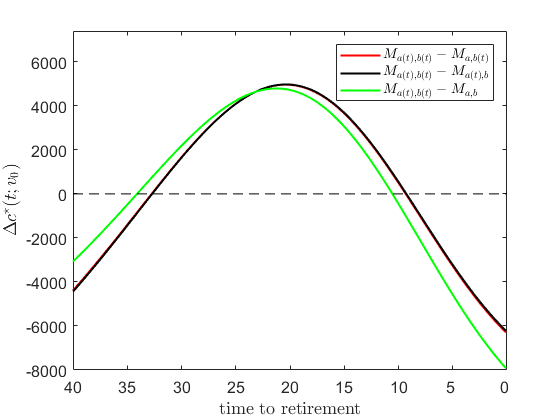}}\hfill
\subfigure[Difference in fitted relative risky investment.]{\includegraphics[width=0.5\textwidth]{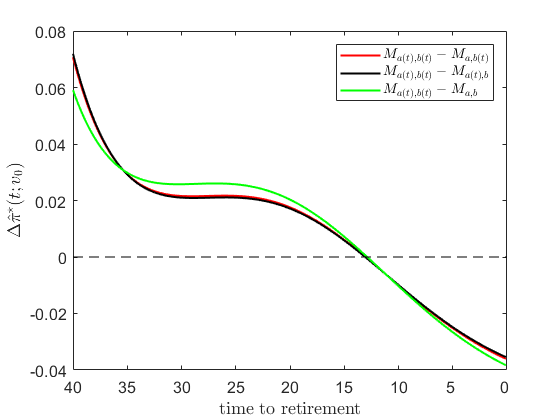}} \\
\subfigure[Difference in expected wealth ${\IE[V^{\star}(t;v_{0})]}$.]{\includegraphics[width=0.5\textwidth]{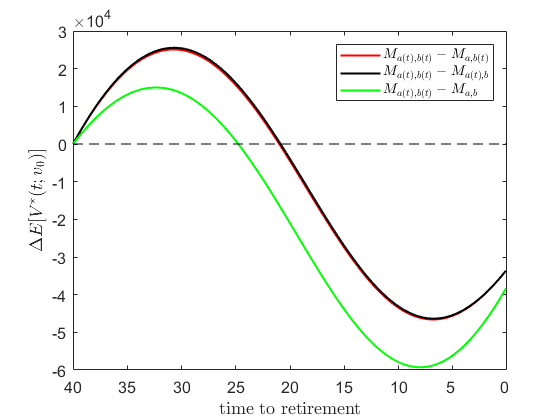}}
\caption{Difference in fitted expected consumption rate, relative risky investment and expected wealth.}
\label{fig:Analysis:Difference:c:pi:V:exp}
\end{figure}

\clearpage

\subsection{Comparison with CRRA}
\label{sec:ComparisonCRRA}
We conclude the case study section by exploring the impact of minimum consumption and wealth floors on calibration and optimal controls. For this sake, we fit the model $M_{a(t),b(t)}$ to the very same parameters and target curves as before, but now enforce $\bar{c}(t) \equiv 0$ and $F \equiv 0$. This CRRA model is referred to as $M_{a(t),b(t)}^{CRRA}$. Table \ref{tab:Analysis:ParameterEstimation:Error:Benchmark:RichInvestor:CRRA} provides the estimated parameters and the sum of the squared relative residuals. In terms of this sum, it is clear that model $M_{a(t),b(t)}$ provides a more adequate fit than model $M_{a(t),b(t)}^{CRRA}$, its sum is only $4.82 \%$ of the sum which corresponds to $M_{a(t),b(t)}^{CRRA}$. Going even further, all three benchmark models $M_{a,b(t)}$, $M_{a(t),b}$ and $M_{a,b}$ from the previous subsection, which all consider minimum levels for consumption and wealth, provide a more precise fit than $M_{a(t),b(t)}^{CRRA}$ in view of the sum of squared relative residuals. This shows that the introduction of floors for consumption and wealth in the model is essential.

\begin{table}[h]
\raggedright
{
    \renewcommand{\arraystretch}{1.75}
    \begin{tabularx}{\textwidth}{| p{1.5cm} || p{5cm}  | X | X | X |}
        \hline
        & Sum of squared relative distances & $\hat{b}$ & $a(t)$ & $b(t)$  \\ \hline\hline
        $M_{a(t),b(t)}$ & $6.0425$ & $-0.9849$ & ${a_{0} = 5.2864 \times 10^{7}}$, ${\lambda_{a} = -0.6673}$ & ${b_{0} = -4.9731}$, ${\lambda_{b} = -0.0340}$ \\ \hline
        $M_{a(t),b(t)}^{CRRA}$ & $125.3497$ & $-4.4867$ & ${a_{0} = 0.6238 \times 10^{7}}$, ${\lambda_{a} = -0.8689}$ & ${b_{0} = -9.7397}$, ${\lambda_{b} = -0.0192}$ \\ \hline
    \end{tabularx}
}
\caption{Estimated parameters and sum of squared relative residuals for CRRA.}
\label{tab:Analysis:ParameterEstimation:Error:Benchmark:RichInvestor:CRRA}
\end{table}

Figure \ref{fig:Analysis:InputFunctions:Benchmark:RichInvestor:CRRA} visualizes the estimated input functions, Figure \ref{fig:Analysis:FittedConsumptionInvestment:Benchmark:RichInvestor:CRRA} provides the graphics about the fitted consumption and relative risky portfolio process with the expected wealth and stock price path. Besides a larger sum of the squared relative distances for model $M_{a(t),b(t)}^{CRRA}$, especially the fitted risky investments $\hat{\pi}^{\star}(t ; v_{0})$ in Figure \ref{fig:Analysis:FittedConsumptionInvestment:Benchmark:RichInvestor:CRRA} show that zero floors for consumption and wealth ($\bar{c}(t) \equiv 0$ and $F \equiv 0$) leads to an imprecise calibration and a large deviation from its given target curve due to a drop in model flexibility. Table \ref{tab:Analysis:ParameterEstimation:Error:Benchmark:RichInvestor:CRRA} suggests that this drop in flexibility is attempted to be compensated by a higher risk aversion in terms of more negative estimated values for $\hat{b}$ and $b(t)$, see also Figure \ref{fig:Analysis:InputFunctions:Benchmark:RichInvestor:CRRA}.

\begin{figure}[h] \center
\subfigure[$a(t)$.]{\includegraphics[width=0.5\textwidth]{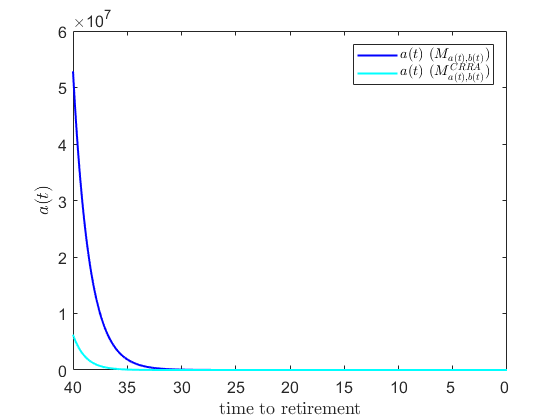}}\hfill
\subfigure[$b(t)$.]{\includegraphics[width=0.5\textwidth]{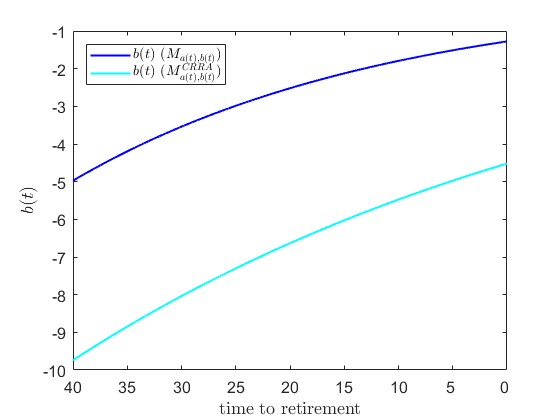}}
\caption{Estimated preference functions $a(t)$ and $b(t)$ for CRRA.}
\label{fig:Analysis:InputFunctions:Benchmark:RichInvestor:CRRA}
\end{figure}

\begin{figure}[h] \center
\subfigure[Fitted consumption rate.]{\includegraphics[width=0.5\textwidth]{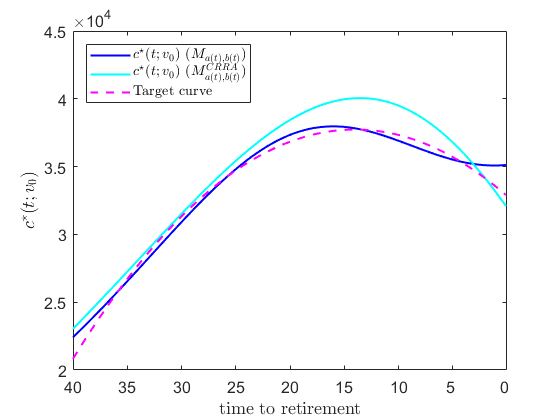}}\hfill
\subfigure[Fitted relative risky investment.]{\includegraphics[width=0.5\textwidth]{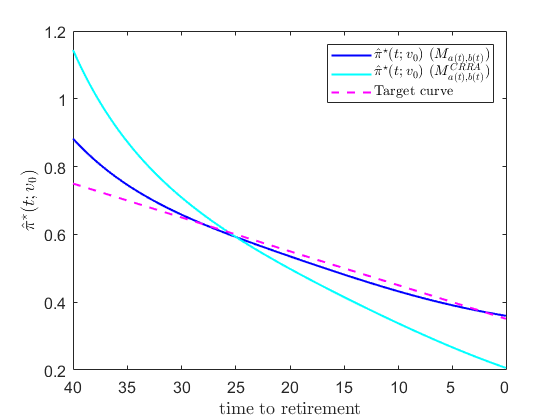}} \\
\subfigure[${\IE[V^{\star}(t;v_{0})]}$.]{\includegraphics[width=0.5\textwidth]{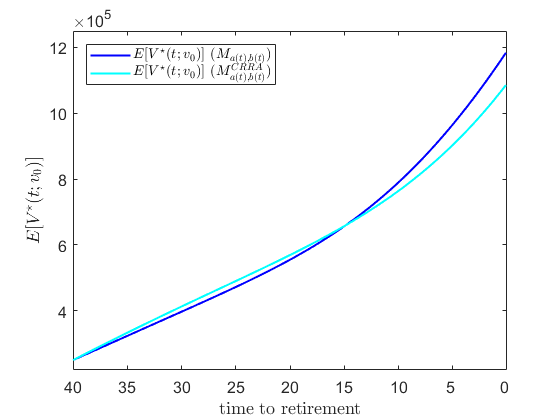}}\hfill
\subfigure[${\IE[P(t)] = p_{1} e^{\mu t}}$.]{\includegraphics[width=0.5\textwidth]{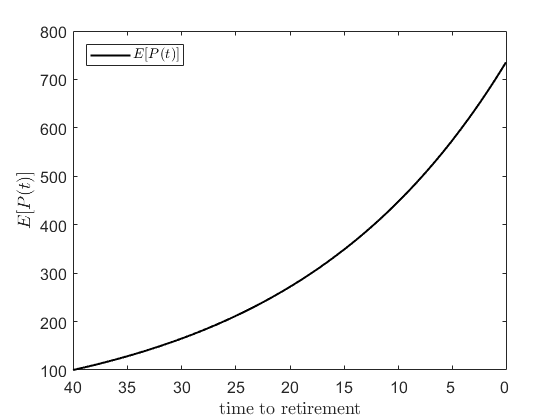}}
\caption{Fitted expected consumption rate $c^{\star}(t;v_{0})$ and relative risky investment $\hat{\pi}^{\star}(t ; v_{0})$, expected wealth process ${\IE[V^{\star}(t;v_{0})]}$ and stock price process ${\IE[P(t)]}$ for CRRA.}
\label{fig:Analysis:FittedConsumptionInvestment:Benchmark:RichInvestor:CRRA}
\end{figure}

\clearpage

\section{Conclusion}
\label{sec:Conclusion}
This paper studies the optimal quantitative and dynamic consumption and investment strategies under age-dependent risk preferences (coefficient of risk aversion $b(t)$ and preference between consumption and terminal wealth $a(t)$). The findings demonstrate that strategies applied for life-cycle pension funds or pension insurance could significantly be improved by taking age-dependent risk preferences into account. For this reason, the paper combines the elements terminal wealth with a minimum level and consumption under time-varying risk preferences and minimum level into a dynamic life-cycle consumption-investment model. A sound economic understanding of the model parts is provided. In Section \ref{sec:SeparationTechnique} the corresponding portfolio optimization problem is solved analytically with a separation approach which allows to solve the consumption and the terminal wealth part of the original consumption-investment problem separately. The formulas show that age-depending risk preferences in combination with terminal wealth considerations and minimum levels for consumption and wealth have a significant impact on the optimal controls.

Section \ref{sec:NCS} investigates the optimal controls and provides a comparison with already existing and solved benchmark models. The analysis is divided into two parts. In the first part the risk preferences are calibrated towards given realistic curves for consumption and investment. The result emphasizes that only our proposed flexible model, in comparison with the other considered benchmark models, provides an adequate fit of the agent's behavior. We draw the conclusion that time-varying preferences (risk aversion $b(t)$ and preference between consumption and terminal wealth $a(t)$) are necessary to provide a sufficient degree of flexibility to accurately fit the two control variables consumption and investment: Our proposed model turns out to be able to explain the given investor consumption and investment decisions, but the benchmark models fail. The very same result is obtained when time-dependent preference functions are considered, but the consumption and wealth floors are omitted. The second part focuses on the behavior analysis of the optimal consumption, investment and wealth under a positive and negative market environment.

Future research on this topic could deal with generalizations of the dynamic life-cycle model. For instance, investment constraints could be included to make the whole setup more applicable as budgets in practice are commonly exposed to constraints on allocation or risk. Furthermore, since unemployment risk and uncertain future income are essential for individuals, those risks and impacts on the optimal controls and wealth process could be further explored. Finally, including mortality and a life insurance product into the model could help people in determining their optimal individual life insurance investment embedded in a more realistic, flexible framework.

\section*{Acknowledgements}
Pavel V. Shevchenko acknowledges the support of Australian Research Council's Discovery Projects funding scheme (project number DP160103489).

\footnotesize

\bibliography{Bibliography}

\normalsize

\clearpage

\appendix

\section{Proofs}
\label{app:Proofs}

\subsection{The consumption problem}
\label{app:ProofsConsumptionProblem}

\begin{proof}[Proof of Theorem \ref{thm:Solution:ConsumptionOnly}]
The Lagrangian of the Problem \eqref{eq:OptimizationProblem:ConsumptionOnly} subject to \eqref{eq:BudgetConstraint:ConsumptionOnly:y} is
\begin{align*}
\mathcal{L}(c,\lambda_{1}) = {} & \IE\left[\int_{0}^{T} U_{1}(t,c(t)) dt\right] - \lambda_{1} \left(\IE\left[\int_{0}^{T} \tilde{Z}(t) \left(c(t) - y(t)\right) dt\right] - v_{1}\right) \\
= {} & \IE\left[\int_{0}^{T} U_{1}(t,c(t)) - \lambda_{1} \left(\tilde{Z}(t) \left(c(t) - y(t)\right) - \frac{1}{T} v_{1}\right) dt\right].
\end{align*}
By the structure of the utility function, the optimal $c_{1}$ fulfills $c_{1}(t;v_{1}) > \bar{c}(t)$ and thus the first order conditions involve existence of a Lagrange multiplier $\lambda_{1} = \lambda_{1}(v_{1}) > 0$ such that the optimal $c_{1}$ maximizes $\mathcal{L}(c,\lambda_{1})$ and such that complementary slackness holds true. Hence it can be shown that the Karush-Kuhn-Tucker conditions besides the first derivative condition are satisfied.

Following \cite{Aase2017}, let $\nabla_{h} \mathcal{L}(c,\lambda_{1};h)$ denote the directional derivative of $\mathcal{L}(c,\lambda_{1})$ in the feasible direction $h$. The directional derivative of a function $f$ in the direction $h$ is generally defined by
\begin{align*}
\nabla_{h} f(x) = \lim_{y \to 0} \frac{f(x + h y) - f(x)}{y}.
\end{align*}
If $f$ is differentiable at $x$ this results in
\begin{align*}
\nabla_{h} f(x) = f^{\prime}(x) h.
\end{align*}
In our case, for the inner function it holds
\begin{align*}
\nabla_{h} \left(U_{1}(t,c(t)) - \lambda_{1} \left(\tilde{Z}(t) \left(c(t) - y(t)\right) - \frac{1}{T} v_{1}\right)\right) = {} & \frac{\partial}{\partial c} \left(U_{1}(t,c(t)) - \lambda_{1} \left(\tilde{Z}(t) \left(c(t) - y(t)\right) - \frac{1}{T} v_{1}\right)\right) h(t) \\
= {} & \left(\frac{\partial}{\partial c} U_{1}(t,c(t)) - \lambda_{1} \tilde{Z}(t)\right) h(t).
\end{align*}
By the dominated convergence theorem, which allows interchanging expectation and differentiation, the first order condition gives
\begin{align*}
0 = {} & \IE\left[\int_{0}^{T} \left(\frac{\partial}{\partial c} U_{1}(t,c(t)) - \lambda_{1} \tilde{Z}(t)\right) h(t) dt\right] \\
= {} & \IE\left[\int_{0}^{T} \left(e^{- \beta  t}  a(t)  \left(\frac{1}{1-b(t)}  \left(c(t) - \bar{c}(t)\right)\right)^{b(t)-1} - \lambda_{1} \tilde{Z}(t)\right) h(t) dt\right]
\end{align*}
for all feasible $h$. In order to fulfill this condition for any $h$, the optimal consumption rate process must be
\begin{align} \label{eq:Solution:c1:ConsumptionOnly}
c_{1}(t;v_{1}) = (1-b(t)) \left(\lambda_{1} \frac{e^{\beta  t}}{a(t)} \tilde{Z}(t)\right)^{\frac{1}{b(t)-1}} + \bar{c}(t),\ t \in [0,T].
\end{align}
Since $U_{1}(t,c)$ strictly increases in $c$, the budget constraint \eqref{eq:BudgetConstraint:ConsumptionOnly:y} for the optimal solution in \eqref{eq:OptimizationProblem:ConsumptionOnly} turns to equality, i.e.
\begin{align*}
\IE\left[\int_{0}^{T} \tilde{Z}(t) \left(c_{1}(t;v_{1}) - y(t)\right) dt\right] = v_{1}.
\end{align*}
When plugging in \eqref{eq:Solution:c1:ConsumptionOnly} and by Fubini, the budget condition turns into
\begin{align*}
v_{1} = {} & \IE\left[\int_{0}^{T} \tilde{Z}(t) \left((1-b(t)) \left(\lambda_{1} \frac{e^{\beta  t}}{a(t)} \tilde{Z}(t)\right)^{\frac{1}{b(t)-1}} + \bar{c}(t) - y(t)\right) dt\right] \\
= {} & \int_{0}^{T} (1-b(t)) \left(\lambda_{1} \frac{e^{\beta  t}}{a(t)}\right)^{\frac{1}{b(t)-1}} \IE\left[\tilde{Z}(t)^{\frac{b(t)}{b(t)-1}}\right] dt + \int_{0}^{T} \IE\left[\tilde{Z}(t)\right] \left(\bar{c}(t) - y(t)\right) dt \\
= {} & \int_{0}^{T} (1-b(t)) \left(\lambda_{1} \frac{e^{\beta  t}}{a(t)}\right)^{\frac{1}{b(t)-1}} e^{- \frac{b(t)}{b(t)-1} \left(r + \frac{1}{2}\|\gamma\|^{2}\right) t + \frac{1}{2} \left(\frac{b(t)}{b(t)-1}\right)^{2} \|\gamma\|^{2} t} dt \\
& + \int_{0}^{T} e^{- \left(r + \frac{1}{2}\|\gamma\|^{2}\right) t + \frac{1}{2} \|\gamma\|^{2} t} \left(\bar{c}(t) - y(t)\right) dt \\
= {} & \int_{0}^{T} (1-b(t)) \left(\frac{e^{\left[\beta - b(t) \left(r - \frac{1}{2} \frac{1}{b(t)-1} \|\gamma\|^{2}\right)\right] t}}{a(t)}\right)^{\frac{1}{b(t)-1}} \lambda_{1}^{\frac{1}{b(t)-1}} dt + \int_{0}^{T} e^{- r t} \left(\bar{c}(t) - y(t)\right) dt \\
= {} & \int_{0}^{T} (1-b(t)) \left(\frac{e^{\left[\beta - b(t) \left(r - \frac{1}{2} \frac{1}{b(t)-1} \|\gamma\|^{2}\right)\right] t}}{a(t)}\right)^{\frac{1}{b(t)-1}} \lambda_{1}^{\frac{1}{b(t)-1}} dt + F_{1}(0).
\end{align*}
Here we used that $\tilde{Z}(t)$ is a log-normal random variable and so is $\tilde{Z}(t)^{\frac{b(t)}{b(t)-1}}$. For any $v_{1} > F_{1}(0) = \int_{0}^{T} e^{- r t} \left(\bar{c}(t) - y(t)\right) dt$, the above equality determines $\lambda_{1} > 0$ uniquely, since the integral in which $\lambda_{1}$ appears strictly decreases in $\lambda_{1}$ and has the limits $0$ and $\infty$ as $\lambda_{1}$ approaches $\infty$ and $0$. It follows immediately that the condition $v_{1} > \int_{0}^{T} e^{- r t} \left(\bar{c}(t) - y(t)\right) dt$ in \eqref{eq:Condition:v1_&_eq:def:F1(t)} is inevitable. The optimal wealth process $V_{1}(t ; v_{1})$ which arises by applying $c_{1}(t ; v_{1})$ is
\begin{align*}
V_{1}(t ; v_{1}) = {} & \IE\left[\int_{t}^{T} \frac{\tilde{Z}(s)}{\tilde{Z}(t)} \left(c_{1}(s;v_{1}) - y(s)\right) ds \Big| \mathcal{F}_{t}\right] \\
= {} & \frac{1}{\tilde{Z}(t)} \IE\left[\int_{t}^{T} \tilde{Z}(s) \left\{(1-b(s)) \left(\lambda_{1} \frac{e^{\beta  s}}{a(s)} \tilde{Z}(s)\right)^{\frac{1}{b(s)-1}} + \bar{c}(s) - y(s)\right\} ds \Big| \mathcal{F}_{t}\right] \\
= {} & \frac{1}{\tilde{Z}(t)} \left\{\IE\left[\int_{t}^{T} (1-b(s)) \left(\lambda_{1} \frac{e^{\beta  s}}{a(s)}\right)^{\frac{1}{b(s)-1}} \tilde{Z}(s)^{\frac{b(s)}{b(s)-1}} ds \Big| \mathcal{F}_{t}\right] + \IE\left[\int_{t}^{T} \tilde{Z}(s) \left(\bar{c}(s) - y(s)\right) ds \Big| \mathcal{F}_{t}\right]\right\} \displaybreak \\
= {} & \frac{1}{\tilde{Z}(t)} \left\{\int_{t}^{T} (1-b(s)) \left(\lambda_{1} \frac{e^{\beta  s}}{a(s)}\right)^{\frac{1}{b(s)-1}} \IE\left[\tilde{Z}(s)^{\frac{b(s)}{b(s)-1}} \Big| \mathcal{F}_{t}\right] ds + \int_{t}^{T} \left(\bar{c}(s) - y(s)\right) \IE\left[\tilde{Z}(s) \Big| \mathcal{F}_{t}\right] ds\right\}.
\end{align*}
$\tilde{Z}(s)$ can be written as $\frac{\tilde{Z}(s)}{\tilde{Z}(t)} \tilde{Z}(t)$ where $\frac{\tilde{Z}(s)}{\tilde{Z}(t)}$ is independent of $\mathcal{F}_{t}$ and $\tilde{Z}(t)$ is $\mathcal{F}_{t}$-measurable. Therefore it follows
\begin{align*}
\IE\left[\tilde{Z}(s) \Big| \mathcal{F}_{t}\right] = {} & \tilde{Z}(t) \IE\left[\frac{\tilde{Z}(s)}{\tilde{Z}(t)}\right] = \tilde{Z}(t) e^{- \left(r + \frac{1}{2}\|\gamma\|^{2}\right) (s-t) + \frac{1}{2} \|\gamma\|^{2} (s-t)} = \tilde{Z}(t) e^{- r (s-t)}, \\
\IE\left[\tilde{Z}(s)^{\eta} \Big| \mathcal{F}_{t}\right] = {} & \tilde{Z}(t)^{\eta} \IE\left[\left(\frac{\tilde{Z}(s)}{\tilde{Z}(t)}\right)^{\eta}\right] = \tilde{Z}(t)^{\eta} e^{- \eta \left(r + \frac{1}{2}\|\gamma\|^{2}\right) (s-t) + \frac{1}{2} \eta^{2} \|\gamma\|^{2} (s-t)} = \tilde{Z}(t)^{\eta} e^{- \eta \left(r - \frac{1}{2} (\eta - 1) \|\gamma\|^{2}\right) (s-t)}
\end{align*}
for any $\eta \in \IR$, where we used that $\frac{\tilde{Z}(s)}{\tilde{Z}(t)}$ and thus $\left(\frac{\tilde{Z}(s)}{\tilde{Z}(t)}\right)^{\eta}$ are log-normally distributed. Define the function $g$ by
\begin{align*}
g(s,t; v_{1}) = (1-b(s)) \left(\frac{e^{\beta  s - b(s) \left(r - \frac{1}{2} \frac{1}{b(s)-1} \|\gamma\|^{2}\right) (s-t)}}{a(s)}\right)^{\frac{1}{b(s)-1}} \lambda_{1}^{\frac{1}{b(s)-1}},
\end{align*}
then the optimal wealth process is given by
\begin{align} \label{eq:V1:ConsumptionProblem}
V_{1}(t ; v_{1}) = \int_{t}^{T} g(s,t; v_{1}) \tilde{Z}(t)^{\frac{1}{b(s)-1}} ds + F_{1}(t)
\end{align}
with $F_{1}(t)$ defined in \eqref{eq:Condition:v1_&_eq:def:F1(t)}. The dynamics can be calculated as
\begin{align*}
d V_{1}(t ; v_{1}) = {} & \left(- g(t,t; v_{1}) \tilde{Z}(t)^{\frac{1}{b(t)-1}} dt + \int_{t}^{T} d_{t} \left(g(s,t; v_{1}) \tilde{Z}(t)^{\frac{1}{b(s)-1}}\right) ds\right) \\
& + \left(- \left(\bar{c}(t) - y(t)\right) dt + \int_{t}^{T} d_{t} \left(e^{- r (s-t)} \left(\bar{c}(s) - y(s)\right)\right) ds\right) \\
= {} & - g(t,t; v_{1}) \tilde{Z}(t)^{\frac{1}{b(t)-1}} dt + \int_{t}^{T} d_{t} \left(g(s,t; v_{1}) \tilde{Z}(t)^{\frac{1}{b(s)-1}}\right) ds - \left(\bar{c}(t) - y(t)\right) dt \\
& + \left(\int_{t}^{T} r e^{- r (s-t)} \left(\bar{c}(s) - y(s)\right) ds\right) dt \\
= {} & \left(- g(t,t; v_{1}) \tilde{Z}(t)^{\frac{1}{b(t)-1}} - \left(\bar{c}(t) - y(t)\right) + \int_{t}^{T} r e^{- r (s-t)} \left(\bar{c}(s) - y(s)\right) ds\right) dt \\
& + \int_{t}^{T} d_{t} \left(g(s,t; v_{1}) \tilde{Z}(t)^{\frac{1}{b(s)-1}}\right) ds.
\end{align*}
Notice that by It\^{o}'s formula,
\begin{align*}
d \left(\tilde{Z}(t)^{\frac{1}{b(s)-1}}\right) = {} & \frac{1}{b(s)-1} \tilde{Z}(t)^{\frac{1}{b(s)-1} - 1} d \tilde{Z}(t) + \frac{1}{2} \frac{1}{b(s)-1} \left(\frac{1}{b(s)-1} - 1\right) \tilde{Z}(t)^{\frac{1}{b(s)-1} - 2} \tilde{Z}(t)^{2} \|\gamma\|^{2} dt \\
= {} & \tilde{Z}(t)^{\frac{1}{b(s)-1}} \left\{\left[- \frac{1}{b(s)-1} r + \frac{1}{2} \frac{1}{b(s)-1} \left(\frac{1}{b(s)-1} - 1\right) \|\gamma\|^{2}\right] dt - \frac{1}{b(s)-1} \gamma' dW(t)\right\}.
\end{align*}
Moreover, it holds
\begin{align*}
d_{t} g(s,t; v_{1}) = {} & (1-b(s)) \left(\frac{e^{\beta  s - b(s) \left(r - \frac{1}{2} \frac{1}{b(s)-1} \|\gamma\|^{2}\right) s}}{a(s)}\right)^{\frac{1}{b(s)-1}} \lambda_{1}^{\frac{1}{b(s)-1}} d_{t} \left(e^{\frac{b(s)}{b(s)-1} \left(r - \frac{1}{2} \frac{1}{b(s)-1} \|\gamma\|^{2}\right) t}\right) \\
= {} & (1-b(s)) \left(\frac{e^{\beta  s - b(s) \left(r - \frac{1}{2} \frac{1}{b(s)-1} \|\gamma\|^{2}\right) s}}{a(s)}\right)^{\frac{1}{b(s)-1}} \lambda_{1}^{\frac{1}{b(s)-1}} \\
& \times \frac{b(s)}{b(s)-1} \left(r - \frac{1}{2} \frac{1}{b(s)-1} \|\gamma\|^{2}\right) e^{\frac{b(s)}{b(s)-1} \left(r - \frac{1}{2} \frac{1}{b(s)-1} \|\gamma\|^{2}\right) t} dt \\
= {} & \frac{b(s)}{b(s)-1} \left(r - \frac{1}{2} \frac{1}{b(s)-1} \|\gamma\|^{2}\right) (1-b(s)) \left(\frac{e^{\beta  s - b(s) \left(r - \frac{1}{2} \frac{1}{b(s)-1} \|\gamma\|^{2}\right) (s-t)}}{a(s)}\right)^{\frac{1}{b(s)-1}} \lambda_{1}^{\frac{1}{b(s)-1}} dt \\
= {} & \frac{b(s)}{b(s)-1} \left(r - \frac{1}{2} \frac{1}{b(s)-1} \|\gamma\|^{2}\right) g(s,t; v_{1}) dt.
\end{align*}
With this we obtain
\begin{align*}
d_{t} \left(g(s,t; v_{1}) \tilde{Z}(t)^{\frac{1}{b(s)-1}}\right) = {} & g(s,t; v_{1}) d \left(\tilde{Z}(t)^{\frac{1}{b(s)-1}}\right) + \tilde{Z}(t)^{\frac{1}{b(s)-1}} d_{t} g(s,t; v_{1}) + 0 \\
= {} & g(s,t; v_{1}) \tilde{Z}(t)^{\frac{1}{b(s)-1}} \\
& \times \left\{\left[- \frac{1}{b(s)-1} r + \frac{1}{2} \frac{1}{b(s)-1} \left(\frac{1}{b(s)-1} - 1\right) \|\gamma\|^{2}\right] dt - \frac{1}{b(s)-1} \gamma' dW(t)\right\} \\
& + \tilde{Z}(t)^{\frac{1}{b(s)-1}} \frac{b(s)}{b(s)-1} \left(r - \frac{1}{2} \frac{1}{b(s)-1} \|\gamma\|^{2}\right) g(s,t; v_{1}) dt \\
= {} & g(s,t; v_{1}) \tilde{Z}(t)^{\frac{1}{b(s)-1}} \left\{\left(r - \frac{1}{b(s)-1} \|\gamma\|^{2}\right) dt - \frac{1}{b(s)-1} \gamma' dW(t)\right\}.
\end{align*}
Define
\begin{align*}
Y(t) = \int_{t}^{T} \frac{1}{b(s)-1} g(s,t; v_{1}) \tilde{Z}(t)^{\frac{1}{b(s)-1}} ds.
\end{align*}
In summary, the dynamics of the optimal wealth process is then given by
\begin{align}
d V_{1}(t ; v_{1}) = {} & \left(- g(t,t; v_{1}) \tilde{Z}(t)^{\frac{1}{b(t)-1}} - \left(\bar{c}(t) - y(t)\right) + \int_{t}^{T} r e^{- r (s-t)} \left(\bar{c}(s) - y(s)\right) ds\right) dt \nonumber \\
& + \int_{t}^{T} g(s,t; v_{1}) \tilde{Z}(t)^{\frac{1}{b(s)-1}} \left\{\left(r - \frac{1}{b(s)-1} \|\gamma\|^{2}\right) dt - \frac{1}{b(s)-1} \gamma' dW(t)\right\} ds \nonumber \\
= {} & \Bigg(- g(t,t; v_{1}) \tilde{Z}(t)^{\frac{1}{b(t)-1}} - \left(\bar{c}(t) - y(t)\right) + \int_{t}^{T} r e^{- r (s-t)} \left(\bar{c}(s) - y(s)\right) ds \nonumber \\
& + \int_{t}^{T} \left(r - \frac{1}{b(s)-1} \|\gamma\|^{2}\right) g(s,t; v_{1}) \tilde{Z}(t)^{\frac{1}{b(s)-1}} ds\Bigg) dt \nonumber \\
& - \underbrace{\left(\int_{t}^{T} \frac{1}{b(s)-1} g(s,t; v_{1}) \tilde{Z}(t)^{\frac{1}{b(s)-1}} ds\right)}_{= Y(t)} \gamma' dW(t) \nonumber \\
= {} & \Bigg\{r \underbrace{\left(\int_{t}^{T} e^{- r (s-t)} \left(\bar{c}(s) - y(s)\right) ds + \int_{t}^{T} g(s,t; v_{1}) \tilde{Z}(t)^{\frac{1}{b(s)-1}} ds\right)}_{= V_{1}(t ; v_{1})} \nonumber \\
& - g(t,t; v_{1}) \tilde{Z}(t)^{\frac{1}{b(t)-1}} - \left(\bar{c}(t) - y(t)\right) - \|\gamma\|^{2} \underbrace{\int_{t}^{T} \frac{1}{b(s)-1} g(s,t; v_{1}) \tilde{Z}(t)^{\frac{1}{b(s)-1}} ds}_{= Y(t)}\Bigg\} dt \nonumber \\
& - Y(t) \gamma' dW(t) \nonumber \\
= {} & \left(r V_{1}(t ; v_{1}) - g(t,t; v_{1}) \tilde{Z}(t)^{\frac{1}{b(t)-1}} - \left(\bar{c}(t) - y(t)\right) - \|\gamma\|^{2} Y(t)\right) dt - Y(t) \gamma' dW(t) \nonumber \\
= {} & \mu_{V_{1}}(t) dt - Y(t) \gamma' dW(t) \label{eq:SDE:OptimalV:ConsumptionOnly}
\end{align}
with drift
\begin{align*}
\mu_{V_{1}}(t) = {} & r V_{1}(t ; v_{1}) - g(t,t; v_{1}) \tilde{Z}(t)^{\frac{1}{b(t)-1}} - \bar{c}(t) + y(t) - \|\gamma\|^{2} Y(t).
\end{align*}
By \eqref{eq:Solution:c1:ConsumptionOnly} it follows
\begin{align*}
c_{1}(t;v_{1}) = (1-b(t)) \left(\lambda_{1} \frac{e^{\beta  t}}{a(t)} \tilde{Z}(t)\right)^{\frac{1}{b(t)-1}} + \bar{c}(t) = g(t,t; v_{1}) \tilde{Z}(t)^{\frac{1}{b(t)-1}} + \bar{c}(t).
\end{align*}
Hence
\begin{align*}
\mu_{V_{1}}(t) = {} & r V_{1}(t ; v_{1}) - c_{1}(t;v_{1}) + y(t) - \|\gamma\|^{2} Y(t).
\end{align*}
In order to determine the optimal investment strategy $\pi_{1}(t ; v_{1})$ to Problem \eqref{eq:OptimizationProblem:ConsumptionOnly} we compare the optimal wealth dynamics in \eqref{eq:SDE:V:y} and \eqref{eq:SDE:OptimalV:ConsumptionOnly}:
\begin{align*}
d V_{1}(t ; v_{1}) = {} & V_{1}(t ; v_{1})  \left[\left(r + \hat{\pi}_{1}(t ; v_{1})' \left(\mu - r \mathbf{1}\right)\right) dt + \hat{\pi}_{1}(t ; v_{1})'\sigma dW(t)\right] - c_{1}(t;v_{1}) dt + y(t) dt, \\
d V_{1}(t ; v_{1}) = {} & \left(r V_{1}(t ; v_{1}) - c_{1}(t;v_{1}) + y(t) - \|\gamma\|^{2} Y(t)\right) dt - Y(t) \gamma' dW(t).
\end{align*}
Matching the diffusion terms yields the equality
\begin{align*}
\hat{\pi}_{1}(t ; v_{1}) = - \frac{Y(t)}{V_{1}(t ; v_{1})} \Sigma^{-1} (\mu - r \mathbf{1})
\end{align*}
which simultaneously matches the drift terms. By the first mean value theorem for integrals\footnote{For two integrable functions $f(x)$ and $g(x)$ on the interval $(a,b)$, where $f(x)$ is continuous and $g(x)$ does not change sign on $(a,b)$, there exists $d \in (a,b)$ such that
\begin{align*}
\int_{a}^{b} f(x) g(x) dx = f(d) \int_{a}^{b} g(x) dx.
\end{align*}} it furthermore follows that there exists $\tilde{t}_{1} \in (t,T)$ such that
\begin{align*}
Y(t) = {} & \int_{t}^{T} \frac{1}{b(s)-1} g(s,t; v_{1}) \tilde{Z}(t)^{\frac{1}{b(s)-1}} ds = \frac{1}{b(\tilde{t}_{1})-1} \int_{t}^{T} g(s,t; v_{1}) \tilde{Z}(t)^{\frac{1}{b(s)-1}} ds \\
\stackrel{\eqref{eq:V1:ConsumptionProblem}}{=} {} & \frac{1}{b(\tilde{t}_{1})-1} \left(V_{1}(t ; v_{1}) - F_{1}(t)\right).
\end{align*}
This determines the optimal investment strategy to be
\begin{align}
\hat{\pi}_{1}(t ; v_{1}) = \frac{1}{1 - b(\tilde{t}_{1})} \Sigma^{-1} (\mu - r \mathbf{1}) \frac{V_{1}(t ; v_{1}) - F_{1}(t)}{V_{1}(t ; v_{1})}.
\end{align}
\end{proof}

\begin{proof}[Proof of Theorem \ref{thm:Solution:ConsumptionOnly:ValueFunction:lambda}]
Firstly, the value function of this problem is
\begin{align*}
\mathcal{V}_{1}(v_{1}) = {} & \IE\left[\int_{0}^{T} U_{1}(t,c_{1}(t;v_{1})) dt\right] = \IE\left[\int_{0}^{T} e^{- \beta  t}  \frac{1-b(t)}{b(t)}  a(t)  \left(\frac{1}{1-b(t)}  \left(c_{1}(t;v_{1}) - \bar{c}(t)\right)\right)^{b(t)} dt\right] \\
= {} & \IE\left[\int_{0}^{T} e^{- \beta  t}  \frac{1-b(t)}{b(t)}  a(t)  \left(\lambda_{1} \frac{e^{\beta  t}}{a(t)} \tilde{Z}(t)\right)^{\frac{b(t)}{b(t)-1}} dt\right] \\
= {} & \int_{0}^{T} e^{- \beta  t}  \frac{1-b(t)}{b(t)}  a(t)  \left(\lambda_{1} \frac{e^{\beta  t}}{a(t)}\right)^{\frac{b(t)}{b(t)-1}} \IE\left[\tilde{Z}(t)^{\frac{b(t)}{b(t)-1}}\right] dt \\
= {} & \int_{0}^{T} \frac{1-b(t)}{b(t)} \left(\frac{e^{\beta  t}}{a(t)}\right)^{\frac{1}{b(t)-1}} \lambda_{1}^{\frac{b(t)}{b(t)-1}} \IE\left[\tilde{Z}(t)^{\frac{b(t)}{b(t)-1}}\right] dt \\
= {} & \int_{0}^{T} \frac{1-b(t)}{b(t)} \left(\frac{e^{\beta  t}}{a(t)}\right)^{\frac{1}{b(t)-1}} \lambda_{1}^{\frac{b(t)}{b(t)-1}} e^{- \frac{b(t)}{b(t)-1} \left(r + \frac{1}{2}\|\gamma\|^{2}\right) t + \frac{1}{2} \left(\frac{b(t)}{b(t)-1}\right)^{2} \|\gamma\|^{2} t} dt \displaybreak \\
= {} & \int_{0}^{T} \frac{1-b(t)}{b(t)} \left(\frac{e^{\left[\beta - b(t) \left(r - \frac{1}{2} \frac{1}{b(t)-1} \|\gamma\|^{2}\right)\right] t}}{a(t)}\right)^{\frac{1}{b(t)-1}} \lambda_{1}^{\frac{b(t)}{b(t)-1}} dt,
\end{align*}
where $\lambda_{1}$ is subject to \eqref{eq:ConsumptionOnly:lambda}. From differentiating both sides of Equation \eqref{eq:ConsumptionOnly:lambda} with respect to $v_{1}$ we derive
\begin{align}
1 = {} & \frac{\partial}{\partial v_{1}} \int_{0}^{T} (1-b(t)) \left(\frac{e^{\left[\beta - b(t) \left(r - \frac{1}{2} \frac{1}{b(t)-1} \|\gamma\|^{2}\right)\right] t}}{a(t)}\right)^{\frac{1}{b(t)-1}} \lambda_{1}^{\frac{1}{b(t)-1}} dt \nonumber \\
= {} & \int_{0}^{T} (1-b(t)) \left(\frac{e^{\left[\beta - b(t) \left(r - \frac{1}{2} \frac{1}{b(t)-1} \|\gamma\|^{2}\right)\right] t}}{a(t)}\right)^{\frac{1}{b(t)-1}} \frac{\partial}{\partial v_{1}} \left(\lambda_{1}^{\frac{1}{b(t)-1}}\right) dt. \label{eq:Lagrange:ConsumptionOnly:helpderivative}
\end{align}
This helps to identify $\mathcal{V}_{1}^{\prime}(v_{1})$ to be
\begin{align*}
\mathcal{V}_{1}^{\prime}(v_{1}) = {} & \frac{\partial}{\partial v_{1}} \int_{0}^{T} \frac{1-b(t)}{b(t)} \left(\frac{e^{\left[\beta - b(t) \left(r - \frac{1}{2} \frac{1}{b(t)-1} \|\gamma\|^{2}\right)\right] t}}{a(t)}\right)^{\frac{1}{b(t)-1}} \lambda_{1}^{\frac{b(t)}{b(t)-1}} dt \\
= {} & \int_{0}^{T} \frac{1-b(t)}{b(t)} \left(\frac{e^{\left[\beta - b(t) \left(r - \frac{1}{2} \frac{1}{b(t)-1} \|\gamma\|^{2}\right)\right] t}}{a(t)}\right)^{\frac{1}{b(t)-1}} \frac{\partial}{\partial v_{1}} \left(\lambda_{1}^{\frac{b(t)}{b(t)-1}}\right) dt \\
= {} & \int_{0}^{T} \frac{1-b(t)}{b(t)} \left(\frac{e^{\left[\beta - b(t) \left(r - \frac{1}{2} \frac{1}{b(t)-1} \|\gamma\|^{2}\right)\right] t}}{a(t)}\right)^{\frac{1}{b(t)-1}} \frac{\partial}{\partial v_{1}} \left(\left(\lambda_{1}^{\frac{1}{b(t)-1}}\right)^{b(t)}\right) dt \\
= {} & \int_{0}^{T} \frac{1-b(t)}{b(t)} \left(\frac{e^{\left[\beta - b(t) \left(r - \frac{1}{2} \frac{1}{b(t)-1} \|\gamma\|^{2}\right)\right] t}}{a(t)}\right)^{\frac{1}{b(t)-1}} b(t) \left(\lambda_{1}^{\frac{1}{b(t)-1}}\right)^{b(t)-1} \frac{\partial}{\partial v_{1}} \left(\lambda_{1}^{\frac{1}{b(t)-1}}\right) dt \\
= {} & \lambda_{1} \int_{0}^{T} (1-b(t)) \left(\frac{e^{\left[\beta - b(t) \left(r - \frac{1}{2} \frac{1}{b(t)-1} \|\gamma\|^{2}\right)\right] t}}{a(t)}\right)^{\frac{1}{b(t)-1}} \frac{\partial}{\partial v_{1}} \left(\lambda_{1}^{\frac{1}{b(t)-1}}\right) dt \stackrel{\eqref{eq:Lagrange:ConsumptionOnly:helpderivative}}{=} \lambda_{1}.
\end{align*}
\eqref{eq:Lagrange:ConsumptionOnly:helpderivative} further implies concavity of $\mathcal{V}_{1}(v_{1})$ as
\begin{align*}
1 = {} & \int_{0}^{T} (1-b(t)) \left(\frac{e^{\left[\beta - b(t) \left(r - \frac{1}{2} \frac{1}{b(t)-1} \|\gamma\|^{2}\right)\right] t}}{a(t)}\right)^{\frac{1}{b(t)-1}} \frac{\partial}{\partial v_{1}} \left(\lambda_{1}^{\frac{1}{b(t)-1}}\right) dt \\
= {} & \int_{0}^{T} (1-b(t)) \left(\frac{e^{\left[\beta - b(t) \left(r - \frac{1}{2} \frac{1}{b(t)-1} \|\gamma\|^{2}\right)\right] t}}{a(t)}\right)^{\frac{1}{b(t)-1}} \frac{1}{b(t)-1} \lambda_{1}^{\frac{1}{b(t)-1} - 1} \lambda_{1}^{\prime} dt \\
= {} & - \lambda_{1}^{\prime} \int_{0}^{T} \left(\frac{e^{\left[\beta - b(t) \left(r - \frac{1}{2} \frac{1}{b(t)-1} \|\gamma\|^{2}\right)\right] t}}{a(t)}\right)^{\frac{1}{b(t)-1}} \lambda_{1}^{- \frac{b(t)-2}{b(t)-1}}  dt
\end{align*}
and thus
\begin{align*}
\mathcal{V}_{1}^{\prime\prime}(v_{1}) = \lambda_{1}^{\prime} = - \left(\int_{0}^{T} \left(\frac{e^{\left[\beta - b(t) \left(r - \frac{1}{2} \frac{1}{b(t)-1} \|\gamma\|^{2}\right)\right] t}}{a(t)}\right)^{\frac{1}{b(t)-1}} \lambda_{1}^{- \frac{b(t)-2}{b(t)-1}}  dt\right)^{-1} < 0.
\end{align*}
\end{proof}

\subsection{The terminal wealth problem}
\label{app:ProofsTerminalWealthProblem}

\begin{proof}[Proof of Theorem \ref{thm:Solution:TerminalWealthOnly:WithoutProbabilityConstraint}]
The Lagrangian of the Problem \eqref{eq:OptimizationProblem:TerminalWealthOnly} subject to \eqref{eq:BudgetConstraint:TerminalWealthOnly} is
\begin{align*}
\mathcal{L}(V,\lambda_{2}) = {} & \IE\left[U_{2}(V)\right] - \lambda_{2} \left(\IE\left[\tilde{Z}(T) V\right] - v_{2}\right) = \IE\left[U_{2}(V) - \lambda_{2} \left(\tilde{Z}(T) V - v_{2}\right)\right].
\end{align*}
First of all, it is clear that ${c_{2}(t ; v_{2}) \equiv 0}$. By the structure of the utility function, the optimal $V_{2}$ fulfills $V_{2}(T;v_{2}) > F$ and thus the first order conditions involve existence of a Lagrange multiplier $\lambda_{2} = \lambda_{2}(v_{2}) > 0$ such that the optimal $V_{2}$ maximizes $\mathcal{L}(V,\lambda_{2})$ and such that complementary slackness holds true. Hence it can be shown that the Karush-Kuhn-Tucker conditions besides the first derivative condition are satisfied. By the dominated convergence theorem, the first order condition with respect to the directional derivative gives
\begin{align*}
0 = {} & \IE\left[\left(\frac{\partial}{\partial V} U_{2}(V) - \lambda_{2} \tilde{Z}(T)\right) h\right] = \IE\left[\left(e^{- \beta  T}  \hat{a} \left(\frac{1}{1-\hat{b}}  (V-F)\right)^{\hat{b}-1} - \lambda_{2} \tilde{Z}(T)\right) h\right],
\end{align*}
which has to be satisfied for all suitable $h$; hence the optimal terminal wealth has to fulfill
\begin{align} \label{eq:Solution:V2T:TerminalWealthOnly}
V_{2}(T ; v_{2}) = (1-\hat{b}) \left(\lambda_{2} \frac{e^{\beta  T}}{\hat{a}} \tilde{Z}(T)\right)^{\frac{1}{\hat{b}-1}} + F.
\end{align}
Since $U_{2}(V)$ strictly increases in $V$, complementary slackness implies equality for the budget constraint
\begin{align*}
\IE\left[\tilde{Z}(T) V_{2}(T ; v_{2})\right] = v_{2}.
\end{align*}
Using \eqref{eq:Solution:V2T:TerminalWealthOnly} and Fubini this gives
\begin{align*}
v_{2} = {} & \IE\left[\tilde{Z}(T) \left((1-\hat{b}) \left(\lambda_{2} \frac{e^{\beta  T}}{\hat{a}} \tilde{Z}(T)\right)^{\frac{1}{\hat{b}-1}} + F\right)\right] = (1-\hat{b}) \left(\lambda_{2} \frac{e^{\beta  T}}{\hat{a}}\right)^{\frac{1}{\hat{b}-1}} \IE\left[\tilde{Z}(T)^{\frac{\hat{b}}{\hat{b}-1}}\right] + F \IE\left[\tilde{Z}(T)\right] \\
= {} & (1-\hat{b}) \left(\lambda_{2} \frac{e^{\beta  T}}{\hat{a}}\right)^{\frac{1}{\hat{b}-1}} e^{- \frac{\hat{b}}{\hat{b}-1} \left(r + \frac{1}{2}\|\gamma\|^{2}\right) T + \frac{1}{2} \left(\frac{\hat{b}}{\hat{b}-1}\right)^{2} \|\gamma\|^{2} T} + F e^{- \left(r + \frac{1}{2}\|\gamma\|^{2}\right) T + \frac{1}{2} \|\gamma\|^{2} T} \\
= {} & (1-\hat{b}) \left(\frac{e^{\left[\beta - \hat{b} \left(r - \frac{1}{2} \frac{1}{\hat{b}-1} \|\gamma\|^{2}\right)\right] T}}{\hat{a}}\right)^{\frac{1}{\hat{b}-1}}  \lambda_{2}^{\frac{1}{\hat{b}-1}} + e^{- r T} F \displaybreak \\
= {} & (1-\hat{b}) \left(\frac{e^{\left[\beta - \hat{b} \left(r - \frac{1}{2} \frac{1}{\hat{b}-1} \|\gamma\|^{2}\right)\right] T}}{\hat{a}}\right)^{\frac{1}{\hat{b}-1}}  \lambda_{2}^{\frac{1}{\hat{b}-1}} + F_{2}(0).
\end{align*}
Solving for $\lambda_{2}$ yields
\begin{align} \label{eq:Lagrange:TerminalWealthOnly}
\lambda_{2} = \left(\frac{v_{2} - F_{2}(0)}{(1-\hat{b}) \left(\frac{e^{\left[\beta - \hat{b} \left(r - \frac{1}{2} \frac{1}{\hat{b}-1} \|\gamma\|^{2}\right)\right] T}}{\hat{a}}\right)^{\frac{1}{\hat{b}-1}}}\right)^{\hat{b}-1} = e^{-\left[\beta - \hat{b} \left(r - \frac{1}{2} \frac{1}{\hat{b}-1} \|\gamma\|^{2}\right)\right] T}  \left(1-\hat{b}\right)^{1-\hat{b}}  \hat{a} \left(v_{2} - F_{2}(0)\right)^{\hat{b}-1}
\end{align}
where ${v_{2} > F_{2}(0) = e^{- r T} F}$ in \eqref{eq:Condition:v2:WithoutProbabilityConstraint_&_eq:def:F2(t)} is required. Plugging this back into \eqref{eq:Solution:V2T:TerminalWealthOnly}, the optimal terminal wealth is
\begin{align}
V_{2}(T ; v_{2}) = {} & (1-\hat{b}) \left(\frac{e^{\beta  T}}{\hat{a}} \tilde{Z}(T)\right)^{\frac{1}{\hat{b}-1}} \lambda_{2}^{\frac{1}{\hat{b}-1}} + F \nonumber \\
= {} & (1-\hat{b}) \left(\frac{e^{\beta  T}}{\hat{a}} \tilde{Z}(T)\right)^{\frac{1}{\hat{b}-1}} \left(\frac{v_{2} - F_{2}(0)}{(1-\hat{b}) \left(\frac{e^{\left[\beta - \hat{b} \left(r - \frac{1}{2} \frac{1}{\hat{b}-1} \|\gamma\|^{2}\right)\right] T}}{\hat{a}}\right)^{\frac{1}{\hat{b}-1}}}\right) + F \nonumber \\
= {} & \left(v_{2} - F_{2}(0)\right) \left(e^{\hat{b} \left(r - \frac{1}{2} \frac{1}{\hat{b}-1} \|\gamma\|^{2}\right) T} \tilde{Z}(T)\right)^{\frac{1}{\hat{b}-1}} + F \nonumber \\
= {} & \left(v_{2} - F_{2}(0)\right) e^{\frac{\hat{b}}{\hat{b}-1} \left(r - \frac{1}{2} \frac{1}{\hat{b}-1} \|\gamma\|^{2}\right) T} \tilde{Z}(T)^{\frac{1}{\hat{b}-1}} + F. \label{eq:Solution:V2T:TerminalWealthOnly:lambdaInserted}
\end{align}
The optimal wealth process replicates $V_{2}(T ; v_{2})$ and is uniquely given by
\begin{align*}
V_{2}(t ; v_{2}) = {} & \IE\left[\frac{\tilde{Z}(T)}{\tilde{Z}(t)} V_{2}(T ; v_{2}) \Big| \mathcal{F}_{t}\right] = \frac{1}{\tilde{Z}(t)} \IE\left[\tilde{Z}(T) \left\{\left(v_{2} - F_{2}(0)\right) e^{\frac{\hat{b}}{\hat{b}-1} \left(r - \frac{1}{2} \frac{1}{\hat{b}-1} \|\gamma\|^{2}\right) T} \tilde{Z}(T)^{\frac{1}{\hat{b}-1}} + F\right\} \Big| \mathcal{F}_{t}\right] \\
= {} & \frac{1}{\tilde{Z}(t)} \left\{\left(v_{2} - F_{2}(0)\right) e^{\frac{\hat{b}}{\hat{b}-1} \left(r - \frac{1}{2} \frac{1}{\hat{b}-1} \|\gamma\|^{2}\right) T} \IE\left[\tilde{Z}(T)^{\frac{\hat{b}}{\hat{b}-1}} \Big| \mathcal{F}_{t}\right] + F \IE\left[\tilde{Z}(T) \Big| \mathcal{F}_{t}\right]\right\} \\
= {} & \frac{1}{\tilde{Z}(t)} \left\{\left(v_{2} - F_{2}(0)\right) e^{\frac{\hat{b}}{\hat{b}-1} \left(r - \frac{1}{2} \frac{1}{\hat{b}-1} \|\gamma\|^{2}\right) T} \tilde{Z}(t)^{\frac{\hat{b}}{\hat{b}-1}} e^{- \frac{\hat{b}}{\hat{b}-1} \left(r - \frac{1}{2} \frac{1}{\hat{b}-1} \|\gamma\|^{2}\right) (T-t)} + F \tilde{Z}(t) e^{- r (T-t)}\right\}.
\end{align*}
This finally gives
\begin{align} \label{eq:V2:TerminalWealthProblem}
V_{2}(t ; v_{2}) = \left(v_{2} - F_{2}(0)\right) e^{\frac{\hat{b}}{\hat{b}-1} \left(r - \frac{1}{2} \frac{1}{\hat{b}-1} \|\gamma\|^{2}\right) t} \tilde{Z}(t)^{\frac{1}{\hat{b}-1}} + F_{2}(t)
\end{align}
with $F_{2}(t)$ defined in \eqref{eq:Condition:v2:WithoutProbabilityConstraint_&_eq:def:F2(t)}. Recall that 
\begin{align*}
d \left(\tilde{Z}(t)^{\frac{1}{\hat{b}-1}}\right) = \tilde{Z}(t)^{\frac{1}{\hat{b}-1}} \left\{\left[- \frac{1}{\hat{b}-1} r + \frac{1}{2} \frac{1}{\hat{b}-1} \left(\frac{1}{\hat{b}-1} - 1\right) \|\gamma\|^{2}\right] dt - \frac{1}{\hat{b}-1} \gamma' dW(t)\right\}.
\end{align*}
It follows by It\^{o}
\begin{align*}
d & \left(e^{\frac{\hat{b}}{\hat{b}-1} \left(r - \frac{1}{2} \frac{1}{\hat{b}-1} \|\gamma\|^{2}\right) t} \tilde{Z}(t)^{\frac{1}{\hat{b}-1}}\right) \\
& = e^{\frac{\hat{b}}{\hat{b}-1} \left(r - \frac{1}{2} \frac{1}{\hat{b}-1} \|\gamma\|^{2}\right) t} d \left(\tilde{Z}(t)^{\frac{1}{\hat{b}-1}}\right) + \tilde{Z}(t)^{\frac{1}{\hat{b}-1}} d \left(e^{\frac{\hat{b}}{\hat{b}-1} \left(r - \frac{1}{2} \frac{1}{\hat{b}-1} \|\gamma\|^{2}\right) t}\right) + 0 \\
& = e^{\frac{\hat{b}}{\hat{b}-1} \left(r - \frac{1}{2} \frac{1}{\hat{b}-1} \|\gamma\|^{2}\right) t} \tilde{Z}(t)^{\frac{1}{\hat{b}-1}} \left\{\left[- \frac{1}{\hat{b}-1} r + \frac{1}{2} \frac{1}{\hat{b}-1} \left(\frac{1}{\hat{b}-1} - 1\right) \|\gamma\|^{2}\right] dt - \frac{1}{\hat{b}-1} \gamma' dW(t)\right\} \\
& \quad + \tilde{Z}(t)^{\frac{1}{\hat{b}-1}} \frac{\hat{b}}{\hat{b}-1} \left(r - \frac{1}{2} \frac{1}{\hat{b}-1} \|\gamma\|^{2}\right) e^{\frac{\hat{b}}{\hat{b}-1} \left(r - \frac{1}{2} \frac{1}{\hat{b}-1} \|\gamma\|^{2}\right) t} dt \\
& = e^{\frac{\hat{b}}{\hat{b}-1} \left(r - \frac{1}{2} \frac{1}{\hat{b}-1} \|\gamma\|^{2}\right) t} \tilde{Z}(t)^{\frac{1}{\hat{b}-1}} \\
& \quad \times \left\{\left[- \frac{1}{\hat{b}-1} r + \frac{1}{2} \frac{1}{\hat{b}-1} \left(\frac{1}{\hat{b}-1} - 1\right) \|\gamma\|^{2}\right] dt - \frac{1}{\hat{b}-1} \gamma' dW(t) + \frac{\hat{b}}{\hat{b}-1} \left(r - \frac{1}{2} \frac{1}{\hat{b}-1} \|\gamma\|^{2}\right) dt\right\} \\
& = e^{\frac{\hat{b}}{\hat{b}-1} \left(r - \frac{1}{2} \frac{1}{\hat{b}-1} \|\gamma\|^{2}\right) t} \tilde{Z}(t)^{\frac{1}{\hat{b}-1}}  \left\{\frac{1}{\hat{b}-1}\left[(\hat{b}-1) r + \frac{1}{2} \left(\frac{1}{\hat{b}-1} - 1 - \frac{\hat{b}}{\hat{b}-1}\right) \|\gamma\|^{2}\right] dt - \frac{1}{\hat{b}-1} \gamma' dW(t)\right\} \\
& = e^{\frac{\hat{b}}{\hat{b}-1} \left(r - \frac{1}{2} \frac{1}{\hat{b}-1} \|\gamma\|^{2}\right) t} \tilde{Z}(t)^{\frac{1}{\hat{b}-1}} \left\{\left(r - \frac{1}{\hat{b}-1} \|\gamma\|^{2}\right) dt - \frac{1}{\hat{b}-1} \gamma' dW(t)\right\}.
\end{align*}

Then the optimal wealth dynamics can be calculated as
\begin{align*}
d V_{2}(t ; v_{2}) = {} & \left(v_{2} - F_{2}(0)\right) d \left(e^{\frac{\hat{b}}{\hat{b}-1} \left(r - \frac{1}{2} \frac{1}{\hat{b}-1} \|\gamma\|^{2}\right) t} \tilde{Z}(t)^{\frac{1}{\hat{b}-1}}\right) + r F_{2}(t) dt \\
= {} & \underbrace{\left(v_{2} - F_{2}(0)\right) e^{\frac{\hat{b}}{\hat{b}-1} \left(r - \frac{1}{2} \frac{1}{\hat{b}-1} \|\gamma\|^{2}\right) t} \tilde{Z}(t)^{\frac{1}{\hat{b}-1}}}_{\stackrel{\eqref{eq:V2:TerminalWealthProblem}}{=} V_{2}(t ; v_{2}) - F_{2}(t)} \left\{\left(r - \frac{1}{\hat{b}-1} \|\gamma\|^{2}\right) dt - \frac{1}{\hat{b}-1} \gamma' dW(t)\right\} + r F_{2}(t) dt \\
= {} & r V_{2}(t ; v_{2}) dt + \left(V_{2}(t ; v_{2}) - F_{2}(t)\right) \left\{- \frac{1}{\hat{b}-1} \|\gamma\|^{2} dt - \frac{1}{\hat{b}-1} \gamma' dW(t)\right\}.
\end{align*}
Comparing the diffusion term with the one from \eqref{eq:SDE:V:y} for $y(t) \equiv 0$ implies
\begin{align}
\hat{\pi}_{2}(t ; v_{2}) = \frac{1}{1-\hat{b}} \Sigma^{-1}  (\mu - r \mathbf{1}) \frac{V_{2}(t ; v_{2}) - F_{2}(t)}{V_{2}(t ; v_{2})}
\end{align}
which automatically matches the drifts iff ${c_{2}(t ; v_{2}) \equiv 0}$.
\end{proof}

\begin{proof}[Proof of Theorem \ref{thm:Solution:TerminalWealthOnly:ValueFunction:lambda}]
The value function of this problem is given by
\begin{align*}
\mathcal{V}_{2}(v_{2}) = {} & \IE\left[U_{2}(V_{2}(T ; v_{2}))\right] = \IE\left[U_{2}\left(\left(v_{2} - F_{2}(0)\right) e^{\frac{\hat{b}}{\hat{b}-1} \left(r - \frac{1}{2} \frac{1}{\hat{b}-1} \|\gamma\|^{2}\right) T} \tilde{Z}(T)^{\frac{1}{\hat{b}-1}} + F\right)\right] \\
= {} & e^{- \beta  T}  \frac{1-\hat{b}}{\hat{b}}  \hat{a} \left(\frac{1}{1-\hat{b}}\right)^{\hat{b}} \left(v_{2} - F_{2}(0)\right)^{\hat{b}} e^{\frac{\hat{b}^{2}}{\hat{b}-1} \left(r - \frac{1}{2} \frac{1}{\hat{b}-1} \|\gamma\|^{2}\right) T} \IE\left[\tilde{Z}(T)^{\frac{\hat{b}}{\hat{b}-1}}\right] \\
= {} & e^{- \beta  T}  \frac{\left(1-\hat{b}\right)^{1-\hat{b}}}{\hat{b}}  \hat{a} \left(v_{2} - F_{2}(0)\right)^{\hat{b}} e^{\frac{\hat{b}^{2}}{\hat{b}-1} \left(r - \frac{1}{2} \frac{1}{\hat{b}-1} \|\gamma\|^{2}\right) T} e^{- \frac{\hat{b}}{\hat{b}-1} \left(r - \frac{1}{2} \frac{1}{\hat{b}-1} \|\gamma\|^{2}\right) T} \displaybreak \\
= {} & e^{\left[- \beta + \hat{b} \left(r - \frac{1}{2} \frac{1}{\hat{b}-1} \|\gamma\|^{2}\right)\right] T}  \frac{\left(1-\hat{b}\right)^{1-\hat{b}}}{\hat{b}}  \hat{a} \left(v_{2} - F_{2}(0)\right)^{\hat{b}}.
\end{align*}
This implies
\begin{align*}
\mathcal{V}_{2}^{\prime}(v_{2}) = {} & e^{\left[- \beta + \hat{b} \left(r - \frac{1}{2} \frac{1}{\hat{b}-1} \|\gamma\|^{2}\right)\right] T}  \frac{\left(1-\hat{b}\right)^{1-\hat{b}}}{\hat{b}}  \hat{a} \hat{b} \left(v_{2} - F_{2}(0)\right)^{\hat{b}-1} \\
= {} & e^{\left[- \beta + \hat{b} \left(r - \frac{1}{2} \frac{1}{\hat{b}-1} \|\gamma\|^{2}\right)\right] T}  \left(1-\hat{b}\right)^{1-\hat{b}}  \hat{a} \left(v_{2} - F_{2}(0)\right)^{\hat{b}-1} \\
\stackrel{\eqref{eq:Lagrange:TerminalWealthOnly}}{=} {} \lambda_{2} > 0.
\end{align*}
Due to the assumption ${v_{2} - F_{2}(0) > 0}$ in \eqref{eq:Condition:v2:WithoutProbabilityConstraint_&_eq:def:F2(t)}, it is straightforward that ${\mathcal{V}_{2}^{\prime\prime}(v_{2}) = \lambda_{2}^{\prime} < 0}$:
\begin{align*}
\mathcal{V}_{2}^{\prime\prime}(v_{2}) = - e^{\left[- \beta + \hat{b} \left(r - \frac{1}{2} \frac{1}{\hat{b}-1} \|\gamma\|^{2}\right)\right] T}  \left(1-\hat{b}\right)^{2-\hat{b}}  \hat{a} \left(v_{2} - F_{2}(0)\right)^{\hat{b}-2} < 0.
\end{align*}
\end{proof}

\subsection{Optimal merging of the individual solutions}
\label{app:ProofsOptimalMerging}

\begin{proof}[Proof of Theorem \ref{thm:ConnectionValueFunctions}]
~
\begin{enumerate}
\item $\mathcal{V}(v_{0}) \ge \sup_{v_{1} \ge F_{1}(0),\ v_{2} \ge F_{2}(0),\ v_{1} + v_{2} = v_{0}} \left\{\mathcal{V}_{1}(v_{1}) + \mathcal{V}_{2}(v_{2})\right\}$:

Let $(\pi_{1}(t;v_{1}), c_{1}(t;v_{1}))$ and $(\pi_{2}(t;v_{2}), c_{2}(t;v_{2}))$ denote the optimal controls to Problems \eqref{eq:OptimizationProblem:ConsumptionOnly} and \eqref{eq:OptimizationProblem:TerminalWealthOnly} with optimal wealth processes $V_{1}(t;v_{1})$ and $V_{2}(t;v_{2})$ to the initial wealths ${v_{1} \ge F_{1}(0)}$ and ${v_{2} \ge F_{2}(0)}$. Then, as the budget constraints for the optimal solutions to all three problems hold with equality,
\begin{align*}
\mathcal{V}_{1}(v_{1}) + \mathcal{V}_{2}(v_{2}) = {} & \IE\left[\int_{0}^{T} U_{1}(t,c_{1}(t;v_{1})) dt + U_{2}(V_{2}(T;v_{2}))\right] \\
\le {} & \sup_{(\pi,c) \in \Lambda} J(\pi,c;v_{0}) = \mathcal{V}(v_{0})
\end{align*}
for all $v_{1},v_{2}$ with ${v_{1} + v_{2} = v_{0}}$. Thus
\begin{align*}
\mathcal{V}(v_{0}) \ge \sup_{v_{1} \ge F_{1}(0),\ v_{2} \ge F_{2}(0),\ v_{1} + v_{2} = v_{0}} \left\{\mathcal{V}_{1}(v_{1}) + \mathcal{V}_{2}(v_{2})\right\}.
\end{align*}
\item $\mathcal{V}(v_{0}) \le \sup_{v_{1} \ge F_{1}(0),\ v_{2} \ge F_{2}(0),\ v_{1} + v_{2} = v_{0}} \left\{\mathcal{V}_{1}(v_{1}) + \mathcal{V}_{2}(v_{2})\right\}$:

Let $(\pi^{\star}, c^{\star})$ denote the optimal controls which maximize $\mathcal{V}(v_{0})$ with optimal wealth process $V^{\star}$ to the initial wealth $v_{0} > 0$. Define
\begin{align*}
v_{1} = \IE\left[\int_{0}^{T} \tilde{Z}(t) \left(c^{\star}(t) - y(t)\right) dt\right],\ v_{2} = \IE\left[\tilde{Z}(T) V^{\star}(T)\right].
\end{align*}
Then, ${v_{1} + v_{2} = v_{0}}$ and
\begin{align*}
\mathcal{V}(v_{0}) = {} & \IE\left[\int_{0}^{T} U_{1}(t,c^{\star}(t)) dt\right] + \IE\left[U_{2}(V^{\star}(T))\right] \le \mathcal{V}_{1}(v_{1}) + \mathcal{V}_{2}(v_{2}).
\end{align*}
Hence
\begin{align*}
\mathcal{V}(v_{0}) \le \sup_{v_{1} \ge F_{1}(0),\ v_{2} \ge F_{2}(0),\ v_{1} + v_{2} = v_{0}} \left\{\mathcal{V}_{1}(v_{1}) + \mathcal{V}_{2}(v_{2})\right\}.
\end{align*}
\end{enumerate}
\end{proof}

\begin{proof}[Proof of Lemma \ref{lemma:EqualityConditionValueFunctions}]
In accordance with Theorem \ref{thm:ConnectionValueFunctions} and by expressing $v_{2} = v_{0} - v_{1}$, the candidate for the optimal ${v_{1}^{\star}}$ is the one that satisfies the first order derivative condition on the budget
\begin{align*}
0 = \frac{\partial}{\partial v_{1}} \left(\mathcal{V}_{1}(v_{1}) + \mathcal{V}_{2}(v_{0} - v_{1})\right) = \mathcal{V}_{1}^{\prime}(v_{1}) - \mathcal{V}_{2}^{\prime}(v_{0} - v_{1})
\end{align*}
such that ${v_{1}^{\star} \ge F_{1}(0)}$, ${v_{2}^{\star} = v_{0} - v_{1}^{\star}}$ with ${v_{2}^{\star} \ge F_{2}(0)}$; thus ${F_{1}(0) \le v_{1}^{\star} \le v_{0} - F_{2}(0)}$. Theorems \ref{thm:Solution:ConsumptionOnly:ValueFunction:lambda} and \ref{thm:Solution:TerminalWealthOnly:ValueFunction:lambda} tell that $\mathcal{V}_{1}(v_{1})$ and $\mathcal{V}_{2}(v_{2})$ are strictly concave functions in $v_{1}$ respectively $v_{2}$. Therefore, it follows
\begin{align*}
0 = \frac{\partial^{2}}{\partial v_{1}^{2}} \left(\mathcal{V}_{1}(v_{1}) + \mathcal{V}_{2}(v_{0} - v_{1})\right) = \mathcal{V}_{1}^{\prime\prime}(v_{1}) + \mathcal{V}_{2}^{\prime\prime}(v_{0} - v_{1}) < 0.
\end{align*}
This implies that the candidates $v_{1}^{\star}$ and ${v_{2}^{\star} = v_{0} - v_{1}^{\star}}$ are the solution when the constraint $F_{1}(0) \le v_{1}^{\star} \le v_{0} - F_{2}(0)$ applies.
\end{proof}

\begin{proof}[Proof of Lemma \ref{lemma:EqualityConditionValueFunctions:Solution}]
In accordance with Theorems \ref{thm:Solution:ConsumptionOnly:ValueFunction:lambda} and \ref{thm:Solution:TerminalWealthOnly:ValueFunction:lambda} we have
\begin{align*}
\mathcal{V}_{1}^{\prime}(v_{1}) = {} & \lambda_{1}, \\
\mathcal{V}_{2}^{\prime}(v_{2}) = {} & \lambda_{2} = e^{-\left[\beta - \hat{b} \left(r - \frac{1}{2} \frac{1}{\hat{b}-1} \|\gamma\|^{2}\right)\right] T}  \left(1-\hat{b}\right)^{1-\hat{b}}  \hat{a} \left(v_{2} - F_{2}(0)\right)^{\hat{b}-1}.
\end{align*}
By equating $\mathcal{V}_{1}^{\prime}(v_{1})$ and $\mathcal{V}_{2}^{\prime}(v_{0} - v_{1})$ we obtain
\begin{align*}
\eqref{eq:OptimalMerging:ValueFunction} \text{ in Lemma \ref{lemma:EqualityConditionValueFunctions}} \ \Leftrightarrow\ & \mathcal{V}_{1}^{\prime}(v_{1}) = \mathcal{V}_{2}^{\prime}(v_{0} - v_{1}) \\
\Leftrightarrow\ & \lambda_{1} = \lambda_{2} = e^{-\left[\beta - \hat{b} \left(r - \frac{1}{2} \frac{1}{\hat{b}-1} \|\gamma\|^{2}\right)\right] T}  \left(1-\hat{b}\right)^{1-\hat{b}}  \hat{a} \left(v_{0} - v_{1} - F_{2}(0)\right)^{\hat{b}-1}.
\end{align*}
Inserting $\lambda_{1}$ in Equation \eqref{eq:ConsumptionOnly:lambda}, the optimal $v_{1}^{\star}$ is the solution to
\begin{align*}
v_{1} - \int_{0}^{T} \chi(t) \left(v_{0} - v_{1} - F_{2}(0)\right)^{\frac{\hat{b}-1}{b(t)-1}} dt = F_{1}(0),
\end{align*}
where the continuous function $\chi(t)$ is defined by
\begin{align*}
\chi(t) = (1-b(t)) \left(1-\hat{b}\right)^{\frac{1-\hat{b}}{b(t)-1}} \left(\frac{\hat{a}}{a(t)}\right)^{\frac{1}{b(t)-1}} \left(\frac{e^{\left[\beta - b(t) \left(r - \frac{1}{2} \frac{1}{b(t)-1} \|\gamma\|^{2}\right)\right] t}}{e^{\left[\beta - \hat{b} \left(r - \frac{1}{2} \frac{1}{\hat{b}-1} \|\gamma\|^{2}\right)\right] T}}\right)^{\frac{1}{b(t)-1}} > 0.
\end{align*}
It remains to verify ${F_{1}(0) \le v_{1}^{\star} \le v_{0} - F_{2}(0)}$ and uniqueness of $v_{1}^{\star}$. For this sake, define the function $f$ by
\begin{align*}
f: (-\infty,v_{0} - F_{2}(0)],\ f(x) = x - \int_{0}^{T} \chi(t) \left(v_{0} - x - F_{2}(0)\right)^{\frac{\hat{b}-1}{b(t)-1}} dt - F_{1}(0).
\end{align*}
$v_{1}^{\star}$ is the root of the function $f$, i.e. ${f(v_{1}^{\star}) = 0}$, if it holds ${v_{1}^{\star} \ge F_{1}(0)}$. $f$ is continuous in $x$, the exponent ${\frac{\hat{b}-1}{b(t)-1}}$ within the first integral is positive. Furthermore, due to ${v_{0} > F(0)}$ claimed in \eqref{eq:Condition:v0:minimalrequirement} and ${F(t) = F_{1}(t) + F_{2}(t)}$, we have for the limits
\begin{align*}
\lim_{x \searrow F_{1}(0)} f(x) = {} & - \int_{0}^{T} \chi(t) \left(v_{0} - F_{1}(0) - F_{2}(0)\right)^{\frac{\hat{b}-1}{b(t)-1}} dt = - \int_{0}^{T} \chi(t) \left(v_{0} - F(0)\right)^{\frac{\hat{b}-1}{b(t)-1}} dt < 0, \\
\lim_{x \nearrow v_{0} - F_{2}(0)} f(x) = {} & v_{0} - F_{2}(0) - \int_{0}^{T} \chi(t) \left(v_{0} - \left(v_{0} - F_{2}(0)\right) - F_{2}(0)\right)^{\frac{\hat{b}-1}{b(t)-1}} dt - F_{1}(0) \\
= {} & v_{0} - F_{2}(0) - F_{1}(0) = v_{0} - F(0) > 0.
\end{align*}
Note, ${F_{1}(0) \le v_{1} = v_{0} - v_{2} \le v_{0} - F_{2}(0)}$ for general $v_{1}$ and $v_{2}$. Additionally, $f$ is strictly monotone increasing in $x$ since
\begin{align*}
f^{\prime}(x) = 1 + \int_{0}^{T} \chi(t) \frac{\hat{b}-1}{b(t)-1} \left(v_{0} - x - F_{2}(0)\right)^{\frac{\hat{b}-b(t)}{b(t)-1}} dt > 0,\ \forall x \le v_{0} - F_{2}(0).
\end{align*}
We conclude that there exists a unique root ${x \in [F_{1}(0), v_{0} - F_{2}(0)]}$ such that ${f(x) = 0}$. Therefore, we conclude that the optimal $v_{1}^{*}$ and $v_{2}^{\star} = v_{0} - v_{1}^{*}$ exist and are unique. $v_{1}^{*}$ is the solution to Equation \eqref{eq:Separation:Optimalv1}. The optimal Lagrange multiplier $\lambda_{1}^{\star} = \lambda_{1}(v_{1}^{\star})$ is then given by
\begin{align*}
\lambda_{1}^{\star} = \left(1-\hat{b}\right)^{1-\hat{b}} \hat{a} e^{-\left[\beta - \hat{b} \left(r - \frac{1}{2} \frac{1}{\hat{b}-1} \|\gamma\|^{2}\right)\right] T} \left(v_{0} - v_{1}^{\star} - F_{2}(0)\right)^{\hat{b}-1}.
\end{align*}
\end{proof}

\begin{proof}[Proof of Theorem \ref{thm:Solution:Merging:OriginalProblem}]
Starting with ${V^{\star}(t; v_{0}) = V_{1}(t;v_{1}^{\star}) + V_{2}(t;v_{2}^{\star})}$ we compare the dynamics of both sides of the equation:
\begin{align} \label{eq:Merging:dV:dV1+dV2}
d V^{\star}(t; v_{0}) = d V_{1}(t;v_{1}^{\star}) + d V_{2}(t;v_{2}^{\star}).
\end{align}
Equation \eqref{eq:SDE:V:y} for $V^{\star}(t; v_{0})$, $V_{1}(t;v_{1}^{\star})$ and $V_{2}(t;v_{2}^{\star})$, with $y(t) \equiv 0$ for $V_{2}(t;v_{2}^{\star})$, provides
\begin{align*}
d V^{\star}(t; v_{0}) = {} & V^{\star}(t; v_{0})  \left[\left(r + \hat{\pi}^{\star}(t; v_{0})' \left(\mu - r \mathbf{1}\right)\right) dt + \hat{\pi}^{\star}(t; v_{0})'\sigma dW(t)\right] - c^{\star}(t; v_{0}) dt + y(t) dt, \\
d V_{1}(t;v_{1}^{\star}) = {} & V_{1}(t;v_{1}^{\star}) \left[\left(r + \hat{\pi}_{1}(t;v_{1}^{\star})' \left(\mu - r \mathbf{1}\right)\right) dt + \hat{\pi}_{1}(t;v_{1}^{\star})'\sigma dW(t)\right] - c_{1}(t;v_{1}^{\star}) dt + y(t) dt, \\
d V_{2}(t;v_{2}^{\star}) = {} & V_{2}(t;v_{2}^{\star})  \left[\left(r + \hat{\pi}_{2}(t;v_{2}^{\star})' \left(\mu - r \mathbf{1}\right)\right) dt + \hat{\pi}_{2}(t;v_{2}^{\star})'\sigma dW(t)\right].
\end{align*}
Comparing the diffusion terms in \eqref{eq:Merging:dV:dV1+dV2} gives
\begin{align*}
\hat{\pi}^{\star}(t; v_{0}) = \frac{\hat{\pi}_{1}(t;v_{1}^{\star}) V_{1}(t;v_{1}^{\star}) + \hat{\pi}_{2}(t;v_{2}^{\star}) V_{2}(t;v_{2}^{\star})}{V^{\star}(t; v_{0})}.
\end{align*}
Inserting this back and comparing the drift terms finally leads to
\begin{align*}
c^{\star}(t; v_{0}) = c_{1}(t;v_{1}^{\star}).
\end{align*}
Notice that the pair ${(\hat{\pi}^{\star},c^{\star})}$ is admissible, i.e. ${(\hat{\pi}^{\star},c^{\star}) \in \Lambda}$ because ${(\hat{\pi}_{1},c_{1}) \in \Lambda_{1}}$ and ${(\hat{\pi}_{2},0) \in \Lambda_{2}}$ which implies
\begin{align*}
V^{\star}(t; v_{0}) = \underbrace{V_{1}(t;v_{1}^{\star})}_{\ge - \int_{t}^{T} e^{- r (s-t)} y(s) ds} + \underbrace{V_{2}(t;v_{2}^{\star})}_{\ge 0} \ge - \int_{t}^{T} e^{- r (s-t)} y(s) ds,\ \IP-a.s.,\ \forall t \in [0,T].
\end{align*}
Using the solutions in Theorems \ref{thm:Solution:ConsumptionOnly} and \ref{thm:Solution:TerminalWealthOnly:WithoutProbabilityConstraint} we derive the following for the utility setup in \eqref{eq:utilitymodel:new}:
\begin{align*}
\hat{\pi}^{\star}(t ; v_{0}) = {} & \Sigma^{-1} (\mu - r \mathbf{1}) \frac{\frac{1}{1 - b(\tilde{t}_{1}^{\star})} \left(V_{1}(t ; v_{1}^{\star}) - F_{1}(t)\right) + \frac{1}{1-\hat{b}} \left(V_{2}(t ; v_{2}^{\star}) - F_{2}(t)\right)}{V^{\star}(t ; v_{0})}, \\
c^{\star}(t;v_{0}) = {} & c_{1}(t;v_{1}^{\star}) = g(t,t; v_{1}^{\star}) \tilde{Z}(t)^{\frac{1}{b(t)-1}} + \bar{c}(t) = (1-b(t)) \left(\lambda_{1}^{\star} \frac{e^{\beta  t}}{a(t)} \tilde{Z}(t)\right)^{\frac{1}{b(t)-1}} + \bar{c}(t), \\
V^{\star}(t ; v_{0}) = {} & V_{1}(t;v_{1}^{\star}) + V_{2}(t;v_{2}^{\star}) \\
= {} &  \int_{t}^{T} g(s,t; v_{1}^{\star}) \tilde{Z}(t)^{\frac{1}{b(s)-1}} ds + F_{1}(t) + \left(v_{2}^{\star} - F_{2}(0)\right) e^{\frac{\hat{b}}{\hat{b}-1} \left(r - \frac{1}{2} \frac{1}{\hat{b}-1} \|\gamma\|^{2}\right) t} \tilde{Z}(t)^{\frac{1}{\hat{b}-1}} + F_{2}(t) \\
= {} &  \int_{t}^{T} g(s,t; v_{1}^{\star}) \tilde{Z}(t)^{\frac{1}{b(s)-1}} ds + \left(v_{2}^{\star} - F_{2}(0)\right) e^{\frac{\hat{b}}{\hat{b}-1} \left(r - \frac{1}{2} \frac{1}{\hat{b}-1} \|\gamma\|^{2}\right) t} \tilde{Z}(t)^{\frac{1}{\hat{b}-1}} + F(t), \\
V^{\star}(T ; v_{0}) = {} & \left(v_{2}^{\star} - F_{2}(0)\right) e^{\frac{\hat{b}}{\hat{b}-1} \left(r - \frac{1}{2} \frac{1}{\hat{b}-1} \|\gamma\|^{2}\right) T} \tilde{Z}(T)^{\frac{1}{\hat{b}-1}} + F, \\
V_{1}(t ; v_{1}^{\star}) = {} & \int_{t}^{T} g(s,t; v_{1}^{\star}) \tilde{Z}(t)^{\frac{1}{b(s)-1}} ds + F_{1}(t), \\
V_{2}(t ; v_{2}^{\star}) = {} & \left(v_{2}^{\star} - F_{2}(0)\right) e^{\frac{\hat{b}}{\hat{b}-1} \left(r - \frac{1}{2} \frac{1}{\hat{b}-1} \|\gamma\|^{2}\right) t} \tilde{Z}(t)^{\frac{1}{\hat{b}-1}} + F_{2}(t),
\end{align*}
for all $t \in [0,T]$, with
\begin{align*}
g(s,t; v_{1}^{\star}) = {} & (1-b(s)) \left(\frac{e^{\beta  s - b(s) \left(r - \frac{1}{2} \frac{1}{b(s)-1} \|\gamma\|^{2}\right) (s-t)}}{a(s)}\right)^{\frac{1}{b(s)-1}} \left(\lambda_{1}^{\star}\right)^{\frac{1}{b(s)-1}} \\
= {} & (1-b(s)) \left(1-\hat{b}\right)^{\frac{1-\hat{b}}{b(s)-1}} \left(\frac{\hat{a}}{a(s)}\right)^{\frac{1}{b(s)-1}} \left(\frac{e^{\beta  s - b(s) \left(r - \frac{1}{2} \frac{1}{b(s)-1} \|\gamma\|^{2}\right) (s-t)}}{e^{\left[\beta - \hat{b} \left(r - \frac{1}{2} \frac{1}{\hat{b}-1} \|\gamma\|^{2}\right)\right] T}}\right)^{\frac{1}{b(s)-1}} \left(v_{0} - v_{1}^{\star} - F_{2}(0)\right)^{\frac{\hat{b}-1}{b(s)-1}} \\
\stackrel{\eqref{eq:definition:chi}}{=} {} & \chi(s) e^{\frac{b(s)}{b(s)-1} \left(r - \frac{1}{2} \frac{1}{b(s)-1} \|\gamma\|^{2}\right) t} \left(v_{0} - v_{1}^{\star} - F_{2}(0)\right)^{\frac{\hat{b}-1}{b(s)-1}}.
\end{align*}
Furthermore, $\tilde{t}_{1}^{\star} = \tilde{t}_{1}(v_{1}^{\star}) \in (t,T)$ solves \eqref{eq:ConsumptionOnly:tautilde}:
\begin{align*}
b(\tilde{t}_{1}^{\star}) = {} & 1 + \frac{\int_{t}^{T} g(s,t; v_{1}^{\star}) \tilde{Z}(t)^{\frac{1}{b(s)-1}} ds}{\int_{t}^{T} \frac{1}{b(s)-1} g(s,t; v_{1}^{\star}) \tilde{Z}(t)^{\frac{1}{b(s)-1}} ds}.
\end{align*}
\end{proof}

\begin{proof}[Proof of Remark \ref{remark:Ye:case}]
The formula for the optimal investment strategy is straightforward from Theorem \ref{thm:Solution:Merging:OriginalProblem} as ${b(\tilde{t}_{1}^{\star}) \equiv \hat{b}}$ and ${V^{\star}(t ; v_{0}) = V_{1}(t ; v_{1}^{\star}) + V_{2}(t ; v_{2}^{\star})}$ for any $t \in [0,T]$. The optimal $v_{1}^{\star}$ can be determined by Lemma \ref{lemma:EqualityConditionValueFunctions:Solution} as the solution to Equation \eqref{eq:Separation:Optimalv1}:
\begin{align*}
v_{1}^{\star} - \left(v_{0} - v_{1}^{\star} - F_{2}(0)\right) \int_{0}^{T} \chi(t) dt = F_{1}(0),
\end{align*}
where
\begin{align*}
\chi(t) = \left(\frac{\hat{a}}{a(t)}\right)^{\frac{1}{\hat{b}-1}} e^{- \frac{1}{\hat{b}-1} \left[\beta - \hat{b} \left(r - \frac{1}{2} \frac{1}{\hat{b}-1} \|\gamma\|^{2}\right)\right] (T-t)}.
\end{align*}
Therefore,
\begin{align*}
v_{1}^{\star} = \frac{\left(v_{0} - F_{2}(0)\right) \int_{0}^{T} \chi(t) dt + F_{1}(0)}{\int_{0}^{T} \chi(t) dt + 1}.
\end{align*}
is the optimal budget to the consumption problem, ${v_{2}^{\star} = v_{0} - v_{1}^{\star}}$ is the optimal budget to the terminal wealth problem. Furthermore, by Lemma \ref{lemma:EqualityConditionValueFunctions:Solution} one knows
\begin{align*}
\left(\lambda_{1}^{\star}\right)^{\frac{1}{\hat{b}-1}} = \frac{1}{1-\hat{b}} \left(\frac{\hat{a}}{e^{\left[\beta - \hat{b} \left(r - \frac{1}{2} \frac{1}{\hat{b}-1} \|\gamma\|^{2}\right)\right] T}}\right)^{\frac{1}{\hat{b}-1}} \left(v_{0} - v_{1}^{\star} - F_{2}(0)\right).
\end{align*}
This enables us to calculate $g(s,t; v_{1}^{\star})$ to be
\begin{align*}
g(s,t; v_{1}^{\star}) = {} & (1-\hat{b}) \left(\frac{e^{\beta  s - \hat{b} \left(r - \frac{1}{2} \frac{1}{\hat{b}-1} \|\gamma\|^{2}\right) (s-t)}}{a(s)}\right)^{\frac{1}{\hat{b}-1}} \left(\lambda_{1}^{\star}\right)^{\frac{1}{\hat{b}-1}} \\
= {} & \left(\frac{\hat{a}}{a(s)}\right)^{\frac{1}{\hat{b}-1}} e^{- \frac{1}{\hat{b}-1} \left[\beta  (T-s) + \hat{b} \left(r - \frac{1}{2} \frac{1}{\hat{b}-1} \|\gamma\|^{2}\right) (s-t-T)\right]} \left(v_{0} - v_{1}^{\star} - F_{2}(0)\right) \\
= {} & \left(\frac{\hat{a}}{a(s)}\right)^{\frac{1}{\hat{b}-1}} e^{- \frac{1}{\hat{b}-1} \left[\beta  (T-s) + \hat{b} \left(r - \frac{1}{2} \frac{1}{\hat{b}-1} \|\gamma\|^{2}\right) (s-t-T)\right]}  \left(v_{0} - \frac{\left(v_{0} - F_{2}(0)\right) \int_{0}^{T} \chi(t) dt + F_{1}(0)}{\int_{0}^{T} \chi(t) dt + 1} - F_{2}(0)\right) \\
= {} & \left(\frac{\hat{a}}{a(s)}\right)^{\frac{1}{\hat{b}-1}} e^{- \frac{1}{\hat{b}-1} \left[\beta  (T-s) + \hat{b} \left(r - \frac{1}{2} \frac{1}{\hat{b}-1} \|\gamma\|^{2}\right) (s-t-T)\right]} \left(\frac{v_{0} - F_{2}(0) - F_{1}(0)}{\int_{0}^{T} \chi(t) dt + 1}\right) \\
= {} & \left(\frac{\hat{a}}{a(s)}\right)^{\frac{1}{\hat{b}-1}} e^{- \frac{1}{\hat{b}-1} \left[\beta  (T-s) + \hat{b} \left(r - \frac{1}{2} \frac{1}{\hat{b}-1} \|\gamma\|^{2}\right) (s-t-T)\right]} \left(\frac{v_{0} - F(0)}{\int_{0}^{T} \chi(t) dt + 1}\right)
\end{align*}
with $F(0) = \int_{0}^{T} e^{- r s} \left(\bar{c}(s) - y(s)\right) ds + e^{- r T} F$ defined in \eqref{eq:def:F(t)}, and thus using Theorem \ref{thm:Solution:Merging:OriginalProblem}:
\begin{align*}
V_{1}(t ; v_{1}^{\star}) = {} & \int_{t}^{T} g(s,t; v_{1}^{\star}) \tilde{Z}(t)^{\frac{1}{\hat{b}-1}} ds + F_{1}(t) \\
= {} & \tilde{Z}(t)^{\frac{1}{\hat{b}-1}} \left(\frac{v_{0} - F(0)}{\int_{0}^{T} \chi(t) dt + 1}\right) \int_{t}^{T} \left(\frac{\hat{a}}{a(s)}\right)^{\frac{1}{\hat{b}-1}} e^{- \frac{1}{\hat{b}-1} \left[\beta  (T-s) + \hat{b} \left(r - \frac{1}{2} \frac{1}{\hat{b}-1} \|\gamma\|^{2}\right) (s-t-T)\right]} ds + F_{1}(t) \\
= {} & \tilde{Z}(t)^{\frac{1}{\hat{b}-1}} \left(v_{0} - F(0)\right) e^{\frac{\hat{b}}{\hat{b}-1} \left(r - \frac{1}{2} \frac{1}{\hat{b}-1} \|\gamma\|^{2}\right) t} \frac{\int_{t}^{T} \chi(s) ds}{\int_{0}^{T} \chi(t) dt + 1} + F_{1}(t).
\end{align*}
With, again from Theorem \ref{thm:Solution:Merging:OriginalProblem},
\begin{align*}
V_{2}(t ; v_{2}^{\star}) = {} & \left(v_{2}^{\star} - F_{2}(0)\right) e^{\frac{\hat{b}}{\hat{b}-1} \left(r - \frac{1}{2} \frac{1}{\hat{b}-1} \|\gamma\|^{2}\right) t} \tilde{Z}(t)^{\frac{1}{\hat{b}-1}} + F_{2}(t) \\
= {} & \left(v_{0} - v_{1}^{\star} - F_{2}(0)\right) e^{\frac{\hat{b}}{\hat{b}-1} \left(r - \frac{1}{2} \frac{1}{\hat{b}-1} \|\gamma\|^{2}\right) t} \tilde{Z}(t)^{\frac{1}{\hat{b}-1}} + F_{2}(t) \\
= {} & \left(\frac{v_{0} - F_{2}(0) - F_{1}(0)}{\int_{0}^{T} \chi(t) dt + 1}\right) e^{\frac{\hat{b}}{\hat{b}-1} \left(r - \frac{1}{2} \frac{1}{\hat{b}-1} \|\gamma\|^{2}\right) t} \tilde{Z}(t)^{\frac{1}{\hat{b}-1}} + F_{2}(t) \\
= {} & \tilde{Z}(t)^{\frac{1}{\hat{b}-1}} \left(v_{0} - F(0)\right) \frac{e^{\frac{\hat{b}}{\hat{b}-1} \left(r - \frac{1}{2} \frac{1}{\hat{b}-1} \|\gamma\|^{2}\right) t}}{\int_{0}^{T} \chi(t) dt + 1} + F_{2}(t)
\end{align*}
because ${F(t) = F_{1}(t) + F_{2}(t)}$ $\forall t \in [0,T]$, it follows
\begin{align*}
V^{\star}(t ; v_{0}) = {} & V_{1}(t ; v_{1}^{\star}) + V_{2}(t ; v_{2}^{\star}) \\
= {} & \tilde{Z}(t)^{\frac{1}{\hat{b}-1}} \left(v_{0} - F(0)\right) \frac{1}{\int_{0}^{T} \chi(t) dt + 1} \left\{e^{\frac{\hat{b}}{\hat{b}-1} \left(r - \frac{1}{2} \frac{1}{\hat{b}-1} \|\gamma\|^{2}\right) t} \int_{t}^{T} \chi(s) ds + e^{\frac{\hat{b}}{\hat{b}-1} \left(r - \frac{1}{2} \frac{1}{\hat{b}-1} \|\gamma\|^{2}\right) t}\right\} \\
& + F_{1}(t) + F_{2}(t) \displaybreak \\
= {} & \tilde{Z}(t)^{\frac{1}{\hat{b}-1}} \left(v_{0} - F(0)\right) e^{\frac{\hat{b}}{\hat{b}-1} \left(r - \frac{1}{2} \frac{1}{\hat{b}-1} \|\gamma\|^{2}\right) t} \frac{\int_{t}^{T} \chi(s) ds + 1}{\int_{0}^{T} \chi(t) dt + 1} + F(t).
\end{align*}
Finally, the optimal consumption rate process can then be determined from Theorem \ref{thm:Solution:Merging:OriginalProblem} as
\begin{align*}
c^{\star}(t;v_{0}) = {} & (1-\hat{b}) \left(\lambda_{1}^{\star} \frac{e^{\beta  t}}{a(t)} \tilde{Z}(t)\right)^{\frac{1}{\hat{b}-1}} + \bar{c}(t) \\
= {} & \tilde{Z}(t)^{\frac{1}{\hat{b}-1}} \left(\frac{e^{\beta  t}}{a(t)}\right)^{\frac{1}{\hat{b}-1}} \left(\frac{\hat{a}}{e^{\left[\beta - \hat{b} \left(r - \frac{1}{2} \frac{1}{\hat{b}-1} \|\gamma\|^{2}\right)\right] T}}\right)^{\frac{1}{\hat{b}-1}} \left(\frac{v_{0} - F(0)}{\int_{0}^{T} \chi(t) dt + 1}\right) + \bar{c}(t) \\
= {} & \tilde{Z}(t)^{\frac{1}{\hat{b}-1}} \left(v_{0} - F(0)\right) \left(\frac{\hat{a}}{a(t)}\right)^{\frac{1}{\hat{b}-1}} e^{- \frac{1}{\hat{b}-1} \beta  (T-t) + \frac{\hat{b}}{\hat{b}-1} \left(r - \frac{1}{2} \frac{1}{\hat{b}-1} \|\gamma\|^{2}\right) T} \left(\frac{1}{\int_{0}^{T} \chi(t) dt + 1}\right) + \bar{c}(t) \\
= {} & \tilde{Z}(t)^{\frac{1}{\hat{b}-1}} \left(v_{0} - F(0)\right) e^{\frac{\hat{b}}{\hat{b}-1} \left(r - \frac{1}{2} \frac{1}{\hat{b}-1} \|\gamma\|^{2}\right) 
t} \frac{\chi(t)}{\int_{0}^{T} \chi(t) dt + 1} + \bar{c}(t) \\
= {} & \tilde{Z}(t)^{\frac{1}{\hat{b}-1}} \left(v_{0} - F(0)\right) e^{\frac{\hat{b}}{\hat{b}-1} \left(r - \frac{1}{2} \frac{1}{\hat{b}-1} \|\gamma\|^{2}\right) 
t} \frac{\int_{t}^{T} \chi(s) ds + 1}{\int_{0}^{T} \chi(t) dt + 1} \frac{\chi(t)}{\int_{t}^{T} \chi(s) ds + 1} + \bar{c}(t) \\
= {} & \frac{\chi(t)}{\int_{t}^{T} \chi(s) ds + 1} \left(V^{\star}(t ; v_{0}) - F(t)\right) + \bar{c}(t).
\end{align*}
By defining
\begin{align*}
\zeta(t) = \frac{\chi(t)}{\int_{t}^{T} \chi(s) ds + 1} > 0
\end{align*}
we obtain
\begin{align*}
c^{\star}(t;v_{0}) = \zeta(t) \left(V^{\star}(t ; v_{0}) - F(t)\right) + \bar{c}(t).
\end{align*}
With the definition of $\zeta(t)$, the optimal wealth process finally can be written as
\begin{align*}
V^{\star}(t ; v_{0}) = {} & \tilde{Z}(t)^{\frac{1}{\hat{b}-1}} \left(v_{0} - F(0)\right) e^{\frac{\hat{b}}{\hat{b}-1} \left(r - \frac{1}{2} \frac{1}{\hat{b}-1} \|\gamma\|^{2}\right) t} \frac{\int_{t}^{T} \chi(s) ds + 1}{\int_{0}^{T} \chi(t) dt + 1} + F(t) \\
= {} & \frac{1}{\zeta(t)} \tilde{Z}(t)^{\frac{1}{\hat{b}-1}} \left(v_{0} - F(0)\right) e^{\frac{\hat{b}}{\hat{b}-1} \left(r - \frac{1}{2} \frac{1}{\hat{b}-1} \|\gamma\|^{2}\right) t} \frac{\chi(t)}{\int_{0}^{T} \chi(t) dt + 1} + F(t).
\end{align*}
\end{proof}

\end{document}